\newtheorem{theorem}{Theorem}[section]
\crefname{theorem}{Theorem}{Theorems}
\newaliascnt{lemma}{theorem}
\newtheorem{lemma}[lemma]{Lemma}
\crefname{lemma}{Lemma}{Lemmas}
\newaliascnt{proposition}{theorem}
\crefname{proposition}{Proposition}{Propositions}
\newaliascnt{corollary}{theorem}
\newtheorem{corollary}[corollary]{Corollary}
\crefname{corollary}{Corollary}{Corollaries}
\newaliascnt{fact}{theorem}
\crefname{fact}{Fact}{Facts}
\newaliascnt{definition}{theorem}
\newtheorem{definition}[definition]{Definition}
\crefname{definition}{Definition}{Definitions}
\newaliascnt{remark}{theorem}
\crefname{remark}{Remark}{Remarks}
\newaliascnt{conjecture}{theorem}
\crefname{conjecture}{Conjecture}{Conjectures}
\newaliascnt{claim}{theorem}
\newtheorem{claim}[claim]{Claim}
\crefname{claim}{Claim}{Claims}
\newaliascnt{question}{theorem}
\crefname{question}{Question}{Questions}
\newaliascnt{exercise}{theorem}
\crefname{exercise}{Exercise}{Exercises}
\newaliascnt{example}{theorem}
\crefname{example}{Example}{Examples}
\newaliascnt{notation}{theorem}
\crefname{notation}{Notation}{Notations}
\newaliascnt{problem}{theorem}
\crefname{problem}{Problem}{Problems}
\newcommand{\norm}[1]{\lVert#1\rVert}
\def\E{\mathbb E}
\newcommand{\Var}{{\bf Var}}
\newcommand{\R}{\mathbb R}
\newcommand{\dist}{\mathrm{d}}
\newcommand\fsnorm[1]{\ensuremath{\norm{#1}_F^2}} 
\newcommand{\1}{\mathbf1}
\newcommand\poly[1]{\ensuremath{\mathrm{poly}\left(#1\right)}}
\renewcommand{\P}{P_{\1}}
\title{Sample-Optimal Low-Rank Approximation of Distance Matrices}
\author{
  Piotr Indyk\thanks{\texttt{indyk@mit.edu}} \\
  MIT \\
  \and
  Ali Vakilian\thanks{\texttt{vakilian@mit.edu}} \\
  MIT \\
  \and
  Tal Wagner\thanks{\texttt{talw@mit.edu}} \\
  MIT \\
  \and
  David P. Woodruff\thanks{\texttt{dwoodruf@cs.cmu.edu}} \\
  Carnegie Mellon University \\
}
\begin{document}
\maketitle

\begin{abstract}
A distance matrix $A \in \R^{n \times m}$ represents all pairwise distances, $A_{ij}=\dist(x_i,y_j)$, between two point sets $x_1,...,x_n$ and $y_1,...,y_m$ in an arbitrary metric space $(\mathcal Z, \dist)$. Such matrices arise in various computational contexts such as learning image manifolds, handwriting recognition, and multi-dimensional unfolding. 

In this work we study algorithms for low-rank approximation of distance matrices. Recent work by Bakshi and Woodruff (NeurIPS 2018) showed it is possible to compute a rank-$k$ approximation of a distance matrix in time $O((n+m)^{1+\gamma}) \cdot\poly{k,1/\epsilon}$, where $\epsilon>0$ is an error parameter and $\gamma>0$ is an arbitrarily small constant. Notably, their bound is sublinear in the matrix size, which is unachievable for general matrices.

We present an algorithm that is both simpler and more efficient. It reads only $O((n+m) k/\epsilon)$ entries of the input matrix, and has a running time of $O(n+m) \cdot \poly{k,1/\epsilon}$. We complement the sample complexity of our algorithm with a matching lower bound on the number of entries that must be read by any algorithm.
We provide experimental results to validate the approximation quality and running time of our algorithm.
\end{abstract}

\section{Introduction}

Computing low-rank approximations of matrices is a classic computational problem, with a remarkable number of applications in science and engineering. Given an $n\times m$ matrix $A$,   and a parameter $k$,  the goal is to compute a rank-$k$ matrix $A'$ that minimizes the approximation loss $\|A-A'\|_F$. Such an approximation can be found by computing the singular value decomposition of $A$. However, since the input matrix $A$ is often very large, faster approximate algorithms for computing low-rank approximations have been studied extensively (see the surveys \cite{mahoney2011randomized,woodruff2014sketching} and references therein). In particular, it is known for that for several interesting special classes of matrices, one can find an approximate low-rank solution using only a {\em sublinear} amount of time or samples from the input matrix. This  includes algorithms for {\em incoherent matrices}~\cite{candes2009exact}, {\em positive semidefinite matrices}~\cite{musco2017sublinear} and {\em distance matrices}~\cite{bakshi2018sublinear}. Note that sub-linear time or sampling bounds are not achievable for general matrices, as a single large entry in a matrix can significantly influence the output, and finding such an entry could take $\Omega(nm)$ time. 

In this paper we focus on computing low-rank approximations of distance matrices, i.e., matrices whose entries are induced by distances between points in some metric space. Formally: 

\begin{definition}[distance matrix]\label{def:distm}
A matrix $A\in\R^{n\times m}$ is called a~\emph{distance matrix} if there is an associated metric space $(\mathcal Z,\dist)$ with $\mathcal X=\{x_1,\ldots,x_n\}\subset\mathcal Z$ and $\mathcal Y=\{y_1,\ldots,y_m\}\subset\mathcal Z$, such that $A_{ij}=\dist(x_i,y_j)$ for every $i,j$.
\end{definition}

Distance matrices occur in many applications, such as learning image manifolds~\cite{weinberger2006unsupervised}, image understanding~\cite{tenenbaum2000global}, protein structure analysis~\cite{holm1993protein}, and more.
The recent survey~\cite{dokmanic2015euclidean} provides a comprehensive list.
Common software packages such as Julia, MATLAB or R include operations specifically design to produce or process such matrices. 

Motivated by these applications, a recent paper by~Bakshi and Woodruff~\cite{bakshi2018sublinear} introduced the first sub-linear time approximation algorithm for distance matrices. Their algorithm computes a rank-$k$ approximation of a distance matrix in time $O((n+m)^{1+\gamma}) \cdot\poly{k,1/\epsilon}$, where $\epsilon>0$ is an error parameter and $\gamma>0$ is an arbitrarily small constant. Specifically, it outputs matrices $V\in\R^{n\times k}$ and $U\in\R^{k\times m}$,
that satisfy an additive approximation guarantee of the form:

\[ \fsnorm{A-VU} \leq \fsnorm{A-A_k}+\epsilon\fsnorm{A} , \]

where $A_k$ is the optimal low-rank approximation of $A$.
The result of~\cite{bakshi2018sublinear} raises the question whether even faster algorithms for low-rank approximations of distance matrices are possible. This is the problem we address in this paper.

\subsection{Our results}

In this paper we present an algorithm for low-rank approximation of distance matrices that is both simpler and more efficient than the prior work. Specifically, we show:

\begin{theorem}[upper bound]\label{thm:ub}
There is a randomized algorithm that given a distance matrix $A\in\R^{n\times m}$, reads $O((n+m)k/\epsilon)$ entries of $A$, runs in time $\tilde O(n+m)\cdot\poly{k,1/\epsilon}$, and computes matrices $V\in\R^{n\times k},U\in\R^{k\times m}$ that with probability $0.99$ satisfy
\begin{equation}
\label{e:upper}
 \fsnorm{A-VU} \leq \fsnorm{A-A_k}+\epsilon\fsnorm{A} .
 \end{equation}
\end{theorem}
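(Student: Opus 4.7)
My plan is to adapt the Frieze--Kannan--Vempala (FKV) framework for additive-error low-rank approximation, using the triangle inequality of a distance matrix to eliminate the global preprocessing that normally forces FKV to read all of $A$. FKV samples rows proportional to $\|A_i\|_2^2$ and columns (within a row) proportional to $A_{ij}^2$, which in general requires $\Omega(nm)$ queries to set up. The goal is to replace these distributions by cheap, constant-factor surrogates that can be computed from $O((n+m)k/\epsilon)$ entries, after which the FKV analysis (tolerant to constant-factor error in the sampling probabilities) delivers the desired bound.

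The key structural ingredient I would prove is a two-sided estimate that falls out of the triangle inequality: for all $i,i',j$,
\[
|A_{ij}-A_{i'j}|\le d(x_i,x_{i'})\le A_{ij}+A_{i'j}.
\]
Squaring and averaging over $j$ controls the variance of squared-norm estimators formed from a small number of entries. Concretely, reading $O(k/\epsilon)$ uniformly random columns of $A$ in full (a cost of $O(nk/\epsilon)$ entries) gives constant-factor estimates of every row squared-norm $\|A_i\|_2^2$, and symmetrically for columns after a uniform row sample. Thus the distributions FKV needs can be approximated without a separate full pass over $A$.

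The algorithm then has three steps. (1) Sample $O(k/\epsilon)$ uniformly random columns and read them fully; use them both as an initial sketch and to estimate row norms. Augment with $O(k/\epsilon)$ additional columns drawn proportional to these estimates, forming an $n\times O(k/\epsilon)$ submatrix $C$. By FKV applied with constant-factor approximate probabilities, the column span of $C$ contains a rank-$k$ matrix with error $\|A-A_k\|_F^2+(\epsilon/2)\|A\|_F^2$; extract $V\in\R^{n\times k}$ from the top-$k$ left singular vectors of $C$ in $\tilde O(n)\cdot\poly{k,1/\epsilon}$ time. (2) Symmetrically sample $O(k/\epsilon)$ rows with analogously estimated probabilities, forming $A_T$ using $O(mk/\epsilon)$ entries, and solve the sampled regression $\min_U\|A_T-V_TU\|_F^2$ to obtain $U\in\R^{k\times m}$; standard sampling-based regression analysis transfers the solution back to the full problem $\min_U\|A-VU\|_F^2$. (3) Combine the two guarantees by a triangle-type argument on the Frobenius norm of the residual to conclude $\fsnorm{A-VU}\le\fsnorm{A-A_k}+\epsilon\fsnorm{A}$. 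The total queries are $O((n+m)k/\epsilon)$ and all nontrivial linear algebra is on matrices of size $\poly{k,1/\epsilon}$, giving running time $\tilde O(n+m)\cdot\poly{k,1/\epsilon}$.

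The main obstacle is quantitative: I must show that the triangle-inequality variance bound is strong enough that $O(k/\epsilon)$ uniformly chosen columns already yield relative-error estimates of row squared norms, and that feeding such approximate probabilities into FKV only worsens the additive error by a constant factor, which can be absorbed into $\epsilon$. This requires auditing the FKV/regression sampling proofs and coupling them with a concentration argument for the uniform column sample, taking some care with rows of atypically small norm (where relative error is inherently poor but whose total mass is controllable via the triangle bound). Secondary but routine points are: handling the failure probabilities with a union bound to reach confidence $0.99$, and verifying that the regression step in (2) runs in the claimed time using standard importance-sampling regression solvers once the approximate column-norm distribution is in hand.
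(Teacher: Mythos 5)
Your overall skeleton (norm-based row sampling to invoke Frieze--Kannan--Vempala for $U$, then a sampled regression to recover $V$, with query budget $O(nk/\epsilon)+O(mk/\epsilon)$) matches the paper. But the load-bearing step --- ``reading $O(k/\epsilon)$ uniformly random columns of $A$ in full gives constant-factor estimates of every row squared-norm'' --- is false for distance matrices, and the structural inequality you cite, $|A_{ij}-A_{i'j}|\le \dist(x_i,x_{i'})\le A_{ij}+A_{i'j}$, does not repair it: it compares two rows within the \emph{same} column, whereas the estimator you need must infer a row's norm from a small subset of its \emph{columns}. Concretely, take a star metric with center $c$, put $y_1,\dots,y_{m-1}$ at distance $\eta$ from $c$ and $y_m$ at distance $1$ from $c$, and let $x_1=y_m$ while $x_2,\dots,x_n$ sit at distance $\eta$ from $c$. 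For $i\ge 2$ we have $\norm{A_{i,*}}_2^2\approx 1$, with essentially all of that mass in the single column $m$; a uniform sample of $O(k/\epsilon)$ columns misses column $m$ with high probability and estimates these row norms as $\approx 4\eta^2 m$, which can be made an arbitrarily large underestimate by taking $\eta$ small. Since rows $2,\dots,n$ can carry a constant fraction of $\fsnorm{A}$, the resulting sampling probabilities violate the hypothesis $p_i\ge\Omega(1)\cdot\norm{A_{i,*}}_2^2/\fsnorm{A}$ on a constant fraction of the Frobenius mass, and the FKV guarantee breaks. Your parenthetical about ``rows of atypically small norm'' does not cover this case: the problematic rows are not negligible in aggregate.

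The paper's fix is precisely to abandon two-sided estimation. It picks a single anchor row $i^*$ and anchor column $j^*$ and sets $p_i = A_{i,j^*}^2 + A_{i^*,j^*}^2 + \frac1m\sum_j A_{i^*,j}^2$, which costs only $O(n+m)$ reads. The triangle inequality (applied twice, through $y_{j^*}$ and then $x_{i^*}$) shows this \emph{never underestimates} $\norm{A_{i,*}}_2^2/(4m)$, for any realization of the anchors; the only probabilistic statement needed is that $\sum_i p_i = O(\fsnorm{A}/m)$ in expectation, so after normalization the overestimates cost only a constant factor. In the star example above, the term $\frac1m\sum_j A_{i^*,j}^2$ automatically accounts for the far column even when no sample hits it. FKV tolerates overestimates, so one-sidedness suffices. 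If you want to keep your architecture, replace your uniform-column estimator with such anchor-based one-sided surrogates; otherwise you would need to prove a concentration statement that, as shown, cannot hold. A secondary point: for the regression step the paper uses the solver of Chen and Price, which reads $O(k/\epsilon)$ columns chosen according to the structure of $U$ (not norm-sampling of $A$) to get a $(1+\epsilon)$-approximate $V$; norm-based row sampling of $A$ alone does not give the multiplicative regression guarantee you invoke, though an additive variant could be made to work.
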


We complement the sample complexity of our algorithm with a matching lower bound on the number of entries of the input matrix that must be read by any algorithm.

\begin{theorem}[lower bound]\label{thm:lb}
Let $k\leq m\leq n$ and $\epsilon>0$ be such that $k/\epsilon=O(\min(m,n^{1/3}))$.
Any randomized and possibly adaptive algorithm that given a distance matrix $A\in\R^{n\times m}$, computes $V\in\R^{n\times k},U\in\R^{k\times m}$ that satisfy
$\fsnorm{A-VU} \leq \fsnorm{A-A_k}+\epsilon\fsnorm{A}$,
must read at least $\Omega((n+m)k/\epsilon)$ entries of $A$ in expectation.
The lower bound holds even for symmetric distance matrices.
\end{theorem}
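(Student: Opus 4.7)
The plan is to apply Yao's minimax principle: exhibit a distribution $\mathcal D$ over distance matrices in $\R^{n\times m}$ for which any deterministic algorithm making $o((n+m)k/\epsilon)$ entry queries fails the additive-error guarantee of \cref{thm:ub} with constant probability over $A\sim\mathcal D$. The distribution plants $p=\Theta(k/\epsilon)$ independent ``hidden pieces'' on top of a simple base distance matrix $B$. The base $B$ is taken to be rank one---for instance, place all $x_i$ at a single point $u$ and all $y_j$ at a single point $v$ in some metric space, giving $B = \dist(u,v)\cdot \1\1^\top$---and each plant perturbs the underlying point set so that (i) the resulting matrix $A$ remains a valid distance matrix, (ii) the plant affects only a small random block of entries of $A$, and (iii) the Frobenius masses of the different plants are essentially orthogonal, each equal to $\Theta(\epsilon/k)\cdot \fsnorm{A}$.

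Given this construction, the argument has two parts. First, I show that any rank-$k$ matrix $VU$ that fails to ``capture'' a constant fraction of the plants has error $\fsnorm{A-VU} > \fsnorm{A-A_k}+\epsilon\fsnorm{A}$. This follows because magnitudes are tuned so that the top $k$ components of $A$ together carry all but an $\epsilon$-fraction of $\fsnorm{A}$ (so $\fsnorm{A-A_k}$ is tight to $\epsilon\fsnorm A$), while missing a single plant costs its full Frobenius mass $\Theta(\epsilon/k)\fsnorm{A}$; the orthogonality between plants makes these costs additive, so missing $\Omega(k/\epsilon)$ plants exceeds the budget $\epsilon\fsnorm{A}$. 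Second, by Yao's principle I may assume the algorithm is deterministic; under $\mathcal D$ its queries are independent of the plant locations until a plant is ``hit,'' and a uniformly random query hits any given plant with probability $\Theta(1/(n+m))$ by design. Hence detecting a constant fraction of the $p=\Theta(k/\epsilon)$ plants requires $\Omega((n+m)\cdot k/\epsilon)$ queries in expectation, even adaptively. The two summands in $n+m$ can alternatively be obtained separately by planting in a random set of rows (yielding $\Omega(nk/\epsilon)$) and in a random set of columns (yielding $\Omega(mk/\epsilon)$), then taking the maximum.

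The main obstacle is the triangle-inequality constraint on the planted matrix. Unlike general-matrix lower bounds, one cannot freely insert arbitrary sparse perturbations: any change to a single $x_i$ propagates nonlocally through row $i$ of the distance matrix, and must be designed so that its effect on $A$ is confined to a controlled block. I expect the cleanest resolution to work in a high-dimensional metric such as $\ell_1$ on $\{0,1\}^d$ or $\ell_2$ on $\R^d$ with $d=\poly{n+m}$, where a plant corresponds to displacing a small random subset of points along a fresh random direction chosen to affect only the intended block of entries. For the symmetric variant, one takes $\mathcal X=\mathcal Y$ and replaces the planted row/column blocks with symmetric ``near-diagonal'' blocks; this changes only constants in the accounting, and the asymptotic bound $\Omega((n+m)k/\epsilon)$ is preserved.
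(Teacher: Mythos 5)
Your construction breaks at the error-accounting step, and the problem is structural rather than a matter of missing details. You posit $p=\Theta(k/\epsilon)$ Frobenius-orthogonal plants, each of mass $m_0=\Theta(\epsilon/k)\fsnorm{A}$, so their total mass is $\Theta(1)\cdot\fsnorm{A}$; but a rank-$k$ matrix can capture at most about $k$ of $p$ orthogonal pieces, so $\fsnorm{A-A_k}\geq (p-k)m_0=\Theta(1)\cdot\fsnorm{A}$ --- contradicting your claim that the top $k$ components carry all but an $\epsilon$-fraction of the mass. Worse, because the benchmark $\fsnorm{A-A_k}$ moves with the plants, it already pays for all but $\approx k$ of them, and the slack $\epsilon\fsnorm{A}$ pays for $\Theta(k)$ more: outputting the rank-one base alone gives error $p\,m_0$ against a budget of $(p-k+1)m_0+\epsilon\fsnorm{A}\geq p\,m_0$, so the trivial output already meets the guarantee and the algorithm never has to find any plant. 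This is exactly why the paper does not plant sparse needles. Instead every entry is perturbed by an i.i.d.\ $\pm\tfrac12$, the only ``signal'' a rank-$k$ approximation can exploit beyond the oblivious all-$1.5$ matrix is the majority bit of each length-$r$ row ($r=\Theta(1/\epsilon)$), and binomial anti-concentration shows this advantage is only $\Theta(1)$ per row --- calibrated to exactly match $\epsilon\fsnorm{A}=\Theta(n)$. The hardness is then information-theoretic (Gap-Hamming/majority: $\Omega(r)$ reads per row to learn its majority bit), not a hidden-location search, and the bulk of the paper's work (the far extra row forcing $b\approx\1$, Hoffman--Wielandt, the Hadamard blocks for general $k$) goes into showing that \emph{any} rank-$k$ matrix meeting the additive bound must encode those majority bits.

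Separately, even granting your setup, the query-counting step does not yield the claimed bound. If each query hits a given plant with probability $\Theta(1/(n+m))$, then after $q$ queries the expected number of plant--query incidences is $qp/(n+m)$, so detecting a constant fraction of the $p$ plants forces only $q=\Omega(n+m)$, not $\Omega((n+m)p)$; to get the product you would need each plant to require $\Omega(n+m)$ queries inside a search region disjoint from every other plant's, which your distribution does not arrange. Finally, the metric feasibility of block-localized, mutually orthogonal perturbations is left entirely to a hoped-for high-dimensional embedding; note the paper sidesteps this by keeping all entries in a range like $[1,2]$ (so the triangle inequality is automatic) and reserves the genuinely delicate work for controlling the spectrum, which is where your sketch would also have to do its heavy lifting.
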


We include an empirical evaluation of our algorithm on synthetic and real data.
The results validate that our approach attains good approximation with faster running time than existing methods.

\subsection{Our techniques}

\paragraph{Upper bound.} On a high level, our algorithm follows the approach of~\cite{bakshi2018sublinear}. The main idea is to use the result of Frieze, Kannan and Vempala~\cite{frieze2004fast}, which shows how to compute a solution satisfying Equation~\ref{e:upper} in $\tilde O(n+m)\cdot\poly{k,1/\epsilon}$ time, assuming the ability to sample a row (or a column) of the matrix $A$ with the probability at least proportional to its squared-norm.\footnote{Formally, the probability of selecting a given row $A_i$ should be $\Omega(\|A_i\|^2/\|A\|^2_F)$.} Thus the main challenge is to estimate row and column norms in sub-linear time. Although this cannot be done for general matrices, distance matrices have additional structure (imposed by  the triangle inequality), which makes the problem easier. Specifically, estimating column norms in distance matrices corresponds to computing, for all $x \in \mathcal{X}$,  the sum of distances (squared) from all points in $\mathcal{Y}$ to  $x$.  \cite{bakshi2018sublinear} gives a sampling algorithm that computes an $n^{0.9}$-approximation of those sums in sub-linear time. Since the approximation is pretty rough, they need to sample many more columns than the original algorithm of~\cite{frieze2004fast}, and then they apply the algorithm recursively to the sampled columns. The recursive nature of the algorithm makes the procedure and its analysis quite complex. 

To avoid this issue,  one could design an algorithm that computes a {\em constant-factor} approximation to row and column norms. In the symmetric case $\mathcal{X}=\mathcal{Y}$, this problem has been already studied in \cite{indyk1999sublinear}. Specifically, the latter paper developed an approximate comparator, which enables determining whether one row norm is approximately greater than another. Using standard sorting algorithms, one can approximately sort the rows  by  their norms using roughly $\tilde{O}(n)$ comparisons, where each comparison involves sampling roughly $\tilde{O}(1)$ entries of $A$.
Together with fully computing the norms of $\tilde{O}(1)$ landmark rows, this approximate sorting yields sufficiently approximations of all row norms.
Unfortunately, this approach does not immediately  generalize to the asymmetric case,  and exceeds the optimal number of samples by a few logarithmic factors. 

Our solution is to estimate row and column norms up to a constant factor with a {\em one-sided} guarantee. Specifically, we construct estimations that (a) do not {\em underestimate}  the true values and (b) the total sum of the estimations is comparable to the sum of the true values. This is sufficient to support a reduction to~\cite{frieze2004fast}, while making the estimation procedure much simpler. In the symmetric case, our procedure samples a random point $x^*$, and then estimates the sum of the distances from $x$ to $Y$ as a sum of the distance from $x$ to $x^*$ and the average distance from $x^*$ to $Y$. A simple application of the the triangle inequality shows this estimation provides the desired guarantee. We note that this idea was inspired by the construction of core-sets from~\cite{chen2009coresets}, although the technical development is quite different (and much simpler). 

After executing the algorithm of~\cite{frieze2004fast}, we still need one more step to compute the solution (as their method reports $U$ but not $V$). 
This amounts to a regression problem that can be solved by standard techniques.
To obtain a tight sampling bound that avoids any logarithmic factor in $k$, we use a recent solver of Chen and Price~\cite{chen2017condition}.

\paragraph{Lower bound.}
First let us note that an $\Omega(nk)$ lower bound can be easily obtained.
It is not hard to see that any $n\times k$ matrix with entries in $\{1,2\}$ is an (asymmetric) distance matrix, since the triangle inequality is satisfied trivially.
If we choose a uniformly random matrix from $\{1,2\}^{n\times k}$, then any algorithm that computes a matrix satisfying~\Cref{e:upper} with $\epsilon=\Omega(1)$, must match a $1-\Omega(\epsilon)$ fraction of the entries exactly, yielding the lower bound.

Our $\Omega(nk/\epsilon)$ lower bound is considerably more involved, and uses tools from communication complexity and random matrix theory.
For simplicity, let us describe our techniques in the case $k=1$.
Consider the problem of reporting the majority bit of a random binary string of length $r=\Theta(1/\epsilon)$.
This requires reading $\Omega(r)$ of the input bits.
If we stack together $n$ instances into an $n\times r$ random binary matrix $A$, then reporting the majority bit for a large fraction of rows requires reading $\Omega(nr)$ input bits.
This is our target lower bound.

The reduction proceeds by first shifting the values in $A$ from $\{0,1\}$ to $\{1,2\}$, so that it becomes an (asymmetric) distance matrix.
A na\"ive rank-$1$ approximation would be to replace each entry with $1.5$, yielding a total squared Frobenius error of $\tfrac14nr$.
However, the optimal rank-$1$ approximation is (essentially) to replace each row by its true mean value instead of $1.5$.
By anti-concentration of the binomial distribution, in most rows the majority bit appears $\Omega(\sqrt{r})$ times more often than the minority bit.
A simple calculation shows this leads to a constant additive advantage per row, and $\Omega(n)$ advantage over the whole matrix, of the optimal rank-$1$ approximation over the na\"ive one.
Since $\epsilon\fsnorm{A}=O(n)$, any algorithm that satisfies~\Cref{e:upper} must attain a similar advantage.

By spectral properties of random matrices, the optimal rank-$1$ approximation of $A$ is essentially unique.
In particular, the largest singular value of $A$ is much larger than the second-largest one.
For technical reasons we need to sharpen this separation even further.
We accomplish this by augmenting the matrix with an extra row with very large values, which corresponds to augmenting the metric space with an extra very far point.
As a result, any algorithm that satisfies~\Cref{e:upper} must approximately recover the mean values for a large fraction of the rows.
This allows us to solve the majority problem, by reporting whether each row mean in the rank-$1$ approximation matrix is smaller or larger than $1.5$.
This yields the desired lower bound for asymmetric distance matrices.
The result for symmetric distance matrices, and for general values of $k$, builds on similar techniques.



\section{Preliminaries}

Consider a distance matrix $A\in\R^{n\times m}$ induced by two point sets, $x_1,...,x_n$ and $y_1,...,y_m$, as defined in~\Cref{def:distm}.
If $n=m$ and $x_i=y_i$ for every $i\in[n]$,\footnote{Throughout we use $[\ell]$ to denote $\{1,\ldots,\ell\}$ for an integer $\ell$.} then we call $A$ a~\emph{symmetric} distance matrix.
Otherwise we call it a~\emph{bipartite} distance matrix.
Throughout, $A_k$ denotes the optimal rank-$k$ approximation of $A$.

As mentioned earlier, our algorithm uses two sub-linear time algorithms as subroutines. They are formalized in the following two theorems.
The first reduces low-rank approximation to sampling proportionally to row (or column) norms.
We use $A_{i,*}$ to denote the $i$th row of $A$.
\begin{theorem}[\cite{frieze2004fast}]\label{thm:fkv}
Let $A\in\R^{n\times m}$ be any matrix. Let $S$ be a sample of $O(k/\epsilon)$ rows according to a probability distribution $(p_1,\ldots,p_n)$ that satisfies $p_i \geq \Omega(1)\cdot \norm{A_{i,*}}_2^2/\fsnorm{A}$ for every $i=1,\ldots,n$.
Then, in time $O(mk/\epsilon+\poly{k,1/\epsilon})$ we can compute from $S$ a matrix $U\in\R^{k\times m}$, that with probability $0.99$ satisfies
\begin{equation}\label{eq:fkv}
\fsnorm{A-AU^TU} \leq \fsnorm{A-A_k} + \epsilon\fsnorm{A} .
\end{equation}
\end{theorem}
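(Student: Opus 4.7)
The plan is to follow the length-squared sampling sketch-and-solve approach of Frieze, Kannan, and Vempala. First, I would form a rescaled sample matrix $R\in\R^{s\times m}$ with $s=\Theta(k/\epsilon)$: the $t$-th row of $R$ is $A_{i_t,*}/\sqrt{s\,p_{i_t}}$, where $i_t$ is the $t$-th index drawn from $(p_1,\ldots,p_n)$. This rescaling is chosen so that $\E[R^TR]=A^TA$, and a direct variance computation, using the hypothesis $p_i\ge\Omega(\norm{A_{i,*}}_2^2/\fsnorm{A})$, bounds the second moment of $\norm{R^TR-A^TA}_F$ by $O(\fsnorm{A}^2/s)$.

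Second, I would argue that the top-$k$ right singular subspace of $R$ is a near-optimal rank-$k$ subspace for $A$. Letting $V\in\R^{m\times k}$ have orthonormal columns spanning this subspace, and $V_k$ span the true top-$k$ right subspace of $A$, the identity $\fsnorm{A-AVV^T}=\fsnorm{A}-\mathrm{tr}(V^TA^TAV)$ together with the optimality of $V$ for $R^TR$ gives
\[
\fsnorm{A-AVV^T}-\fsnorm{A-A_k} \;\le\; \mathrm{tr}\!\left(V^T(R^TR-A^TA)V\right)-\mathrm{tr}\!\left(V_k^T(R^TR-A^TA)V_k\right).
\]
Each of the two traces is bounded by a Ky~Fan-type norm of $R^TR-A^TA$, which combined with the variance bound and Markov's inequality yields the target additive error $\epsilon\fsnorm{A}$ with constant probability.

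Third, a direct SVD of $R$ costs $O(ms^2)$, which exceeds the claimed runtime. I would apply a second length-squared sampling step, this time on the columns of $R$, to form an $s\times s$ matrix $W$ whose spectrum approximates $RR^T$. The top-$k$ left singular vectors of $W$ can be computed in $\poly{k,1/\epsilon}$ time and lifted through $R^T$ to produce the orthonormal rows of the output $U\in\R^{k\times m}$. This second sample introduces only a constant-factor perturbation, absorbed into $\epsilon$. The total cost is $O(mk/\epsilon)$ to read the sampled entries and form the two sketches, plus $\poly{k,1/\epsilon}$ for the small SVD.

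The main technical obstacle is the trace-comparison in step two: converting Frobenius closeness of $R^TR$ to $A^TA$ into closeness in the objective $\fsnorm{A-AVV^T}$ with only linear dependence on $1/\epsilon$. A naive bound via $\norm{M}_*\le\sqrt{k}\norm{M}_F$ gives only $s=O(k/\epsilon^2)$; achieving the tight $s=O(k/\epsilon)$ requires a sharper analysis that exploits the independent rank-one structure of $R^TR$ and controls the two trace terms jointly rather than individually.
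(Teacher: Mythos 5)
First, note that the paper does not prove this statement at all; it is imported as a black box from Frieze, Kannan and Vempala, so there is no internal proof to compare against. Judged on its own terms, your proposal is incomplete at exactly the point that makes the theorem nontrivial. Steps one and two are correct as far as they go: the rescaling gives $\E[R^TR]=A^TA$ with $\E\fsnorm{R^TR-A^TA}=O(\fsnorm{A}^2/s)$, and the trace identity you write is valid. But bounding each trace term by $\sqrt{k}\,\norm{R^TR-A^TA}_F$ forces $s=\Omega(k/\epsilon^2)$, and you explicitly defer the improvement to $s=O(k/\epsilon)$ to an unspecified ``sharper analysis.'' That improvement is the entire content of the theorem as stated, so the proof as written establishes only a weaker sample complexity.

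Moreover, the known route to $O(k/\epsilon)$ is not a tightening of the trace comparison but a structurally different argument. Let $u^{(1)},\ldots,u^{(k)}$, $\sigma_1,\ldots,\sigma_k$, $v^{(1)},\ldots,v^{(k)}$ be the top singular triplets of $A$, and from the sampled indices $i_1,\ldots,i_s$ form the row vectors $w^{(t)}=\sum_{j=1}^s\frac{u^{(t)}_{i_j}}{s\,p_{i_j}}A_{i_j,*}$, each an unbiased estimator of $\sigma_t (v^{(t)})^T=(u^{(t)})^TA$ lying in the row span of the sample. The hypothesis $p_i\geq c\norm{A_{i,*}}_2^2/\fsnorm{A}$ gives $\E\norm{w^{(t)}-\sigma_t (v^{(t)})^T}_2^2\leq\fsnorm{A}/(cs)$, and since the $u^{(t)}$ are orthonormal, expanding $\E\fsnorm{A-\sum_t u^{(t)}w^{(t)}}$ shows it equals $\fsnorm{A-A_k}+\sum_t\E\norm{w^{(t)}-\sigma_t(v^{(t)})^T}_2^2\leq\fsnorm{A-A_k}+k\fsnorm{A}/(cs)$. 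The additive error is linear in $1/s$ precisely because only per-direction variances enter, never the square root of a second moment of a matrix norm; this is what your joint-control heuristic would need to reproduce but does not. This establishes that the row span of $O(k/\epsilon)$ sampled rows contains a rank-$k$ approximation with additive error $\epsilon\fsnorm{A}$, after which an orthonormal $U$ is extracted from the sample alone (this is also where the second, column-sampling stage and its error must be analyzed; your claim that it is ``absorbed into $\epsilon$'' needs the same care, though it affects only the $\poly{k,1/\epsilon}$ term of the runtime). Without this estimator-based step, your argument proves the theorem only with $O(k/\epsilon^2)$ samples.
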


The second result approximately solves a regression problem while reading only a small number of columns of the input matrix.
\begin{theorem}[\cite{chen2017condition}]\label{thm:cwcp}
There is a randomized algorithm that given matrices $A\in\R^{n\times m}$ and $U\in\R^{k\times m}$, reads only $O(k/\epsilon)$ columns of $A$, runs in time $O(mk)+\poly{k,1/\epsilon}$, and returns $V\in\R^{n\times k}$ that with probability $0.99$ satisfies
\begin{equation}\label{eq:cwcp}
\fsnorm{A-VU} \leq (1+\epsilon)\min_{X\in\R^{n\times k}}\fsnorm{A-XU} .
\end{equation}
\end{theorem}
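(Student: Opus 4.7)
The regression problem $\min_V \fsnorm{A - VU}$ decouples row by row: each row of $V$ solves $\min_{v \in \R^k} \norm{A_{i,*} - vU}_2^2$ independently, and all $n$ sub-problems share the same design matrix $U$. My plan is to construct a single column-sampling-and-reweighting matrix $S \in \R^{m \times s}$ supported on $s = O(k/\epsilon)$ coordinates such that, with high probability, $S$ is a subspace embedding for the row-span of $U$ and satisfies an approximate matrix product guarantee against the optimal regression residual. Since forming $A_{i,*} S$ only reads $s$ entries of each row of $A$, the total number of columns of $A$ queried is $s = O(k/\epsilon)$, after which we solve the sketched regressions $\min_V \fsnorm{(A-VU)S}$ row-by-row.

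The first step is to compute an orthonormal basis $Q \in \R^{k \times m}$ of the row-span of $U$ in time $O(mk^2)$ and read off its (exact or approximate) column leverage scores, i.e.\ the squared column norms of $Q$. Sampling $O((k \log k)/\epsilon)$ columns proportional to these scores, with the standard reweighting, already gives an $S$ with both of the properties above by the matrix Chernoff bound and a standard approximate-matrix-product analysis, yielding a $(1+\epsilon)$ solution to every row's regression. This would prove a version of the statement with $O((k\log k)/\epsilon)$ read columns and already matches the desired running time.

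The main obstacle is shaving the extra $\log k$ factor to land at exactly $O(k/\epsilon)$. My plan here would be a two-stage construction: first use leverage-score sampling with $O(k \log k)$ columns to obtain a constant-factor subspace embedding for $U$; then apply a Batson--Spielman--Srivastava-style deterministic sparsification to compress this down to $O(k)$ reweighted columns that still form a constant-factor embedding; and finally, using the leverage scores of the compressed problem, draw $O(k/\epsilon)$ fresh columns and combine them with the sparsifier to upgrade the sketch to the $\epsilon$-accurate matrix-product guarantee needed for $(1+\epsilon)$-optimal regression. The total support of $S$ is $O(k/\epsilon)$, which controls how many columns of $A$ are read.

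Correctness then follows from the standard sketched-regression argument: under a subspace embedding plus approximate matrix product for the optimal residual, any $(1+\epsilon)$-minimizer of $\fsnorm{(A-VU)S}$ is a $(1+\epsilon)$-minimizer of $\fsnorm{A-VU}$. The running time is dominated by the one-time $O(mk) + \poly{k,1/\epsilon}$ preprocessing for leverage scores and sparsification, plus solving $n$ least-squares problems with $k$ unknowns and $O(k/\epsilon)$ equations, giving the claimed $O(mk) + \poly{k,1/\epsilon}$ bound overall. Achieving success probability $0.99$ is routine after constant-factor amplification, so the genuinely delicate piece is really the log-factor removal via BSS-style sparsification, which is where I would expect to invoke the Chen--Price machinery as a black box.
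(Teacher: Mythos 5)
This statement is not proven in the paper at all: it is quoted verbatim from Chen and Price \cite{chen2017condition} and used as a black box, so there is no internal proof to compare your attempt against. Judged on its own terms, your outline correctly identifies the structure of the problem (the regression decouples over the rows of $A$ with a common design matrix $U$, so a single reweighted column sketch $S$ satisfying a subspace-embedding and an approximate-matrix-product guarantee suffices, and reads only $O(k/\epsilon)$ columns of $A$), and it correctly locates the only nontrivial issue, namely removing the $\log k$ from the leverage-score bound. But your concrete plan for that step has a gap, and your stated fallback is to ``invoke the Chen--Price machinery as a black box'' --- which is circular, since that machinery \emph{is} the statement you were asked to prove.

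The gap is in the two-stage construction. The approximate-matrix-product requirement is a bound on $\norm{(A-X^*U)SS^TU^T}_F$ where $X^*$ is the optimal solution, and the standard analysis relies crucially on i.i.d.\ importance sampling making $SS^T$ an unbiased estimator of the identity, so that $\E[(A-X^*U)SS^TU^T]=(A-X^*U)U^T=0$ by the normal equations and only the variance must be controlled. If $S$ is the union of a deterministic BSS sparsifier and fresh random columns, the deterministic part contributes a fixed, generally nonzero bias to this cross term that the analysis cannot absorb. Chen and Price instead run a single \emph{randomized} BSS-type process in which each of the $O(k/\epsilon)$ columns is drawn from a step-dependent distribution chosen so that the resulting weighted sketch is simultaneously a constant-factor spectral approximation of $UU^T$ and an unbiased, low-variance estimator of the cross term; the $(1+\epsilon)$ guarantee is proved in expectation and converted to constant probability by Markov. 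A smaller issue: computing an exact orthonormal basis of the row span of $U$ costs $O(mk^2)$, which exceeds the claimed $O(mk)+\poly{k,1/\epsilon}$ when $m$ is superpolynomial in $k$; one needs approximate leverage scores via sketching to meet the stated running time.
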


Since our sampling procedure evaluates the sum of squared distances (rather than just distances), we need the following approximate version of the triangle inequality.
\begin{claim}\label{clm:triangle}
For every $x,y,z\in\mathcal Z$ in a metric space $(\mathcal Z,\dist)$, $\dist(x,y)^2\leq2(\dist(x,z)^2+\dist(z,y)^2)$.
\end{claim}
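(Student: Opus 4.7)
The plan is to prove this in one short step by combining the ordinary triangle inequality with the elementary quadratic bound $(a+b)^2 \leq 2(a^2+b^2)$.

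First I would invoke the metric triangle inequality to obtain $\dist(x,y) \leq \dist(x,z) + \dist(z,y)$. Since both sides are nonnegative, squaring preserves the inequality, so $\dist(x,y)^2 \leq (\dist(x,z) + \dist(z,y))^2$. Next I would apply the elementary fact that for all reals $a,b$, we have $(a+b)^2 \leq 2(a^2 + b^2)$; this follows immediately from expanding $2(a^2+b^2) - (a+b)^2 = (a-b)^2 \geq 0$. Setting $a = \dist(x,z)$ and $b = \dist(z,y)$ yields $(\dist(x,z) + \dist(z,y))^2 \leq 2(\dist(x,z)^2 + \dist(z,y)^2)$, and chaining the two bounds gives the claim.

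There is no real obstacle here: the statement is just the standard ``relaxed'' triangle inequality for squared distances, and the constant $2$ is optimal (achieved when $\dist(x,z) = \dist(z,y)$ and $z$ lies on a geodesic between $x$ and $y$). The only thing worth noting is that the argument uses nothing beyond the axioms of a metric space and the nonnegativity of squares, so it applies in the fully general setting of \Cref{def:distm}.
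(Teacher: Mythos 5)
Your proof is correct and follows essentially the same route as the paper's: both apply the triangle inequality, square, and then bound the cross term (the paper via the AM--GM inequality $\dist(x,z)\dist(z,y)\leq\tfrac12(\dist(x,z)^2+\dist(z,y)^2)$, you via the equivalent identity $2(a^2+b^2)-(a+b)^2=(a-b)^2\geq0$). No issues.
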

\begin{proof}
By the triangle inequality, $\dist(x,y)^2\leq(\dist(x,z) + \dist(z,y))^2=\dist(x,z)^2+2\dist(x,z)\dist(z,y)+\dist(z,y)^2$.
By the inequality of means, $\dist(x,z)\dist(z,y)\leq\tfrac12(\dist(x,z)^2+\dist(z,y)^2)$.
\end{proof}

\section{Algorithm}\label{sec:algorithm}

\newcommand{\INDENT}{\hspace{1em}}
\begin{algorithm}[!t]
\caption{Low-rank approximation for distance matrices}
\label{alg:main}
\smallskip
\textbf{Input:} Distance matrix $A\in\R^{n\times m}$. 
\textbf{Output:} Matrices $V\in\R^{n\times k}$ and $U\in\R^{k\times m}$.
\smallskip{\hrule height.2pt}\smallskip
\begin{algorithmic}[1]
   \STATE Choose $i^*\in[n]$ and $j^*\in[m]$ uniformly at random.
   \STATE For each $i=1,\ldots,n$: $p_i\leftarrow A_{i,j^*}^2 + A_{i^*,j^*}^2 + \frac1m\sum_{j=1}^mA_{i^*,j}^2$.
   \STATE Sample $O(k/\epsilon)$ rows of $A$ according to the distribution proportional to $(p_1,\ldots,p_n)$.
   \STATE Compute $U$ from the sample, using~\Cref{thm:fkv}.
   \STATE Compute $V$ from $A$ and $U$, using~\Cref{thm:cwcp}.
   \STATE Return $V,U$.
\end{algorithmic}
\end{algorithm}

In this section we prove~\Cref{thm:ub}.
The algorithm is stated in~\Cref{alg:main}.
The main step in the analysis is to provide guarantees for the sampling probabilities $p_i$ computed in Steps 1 and 2 of the algorithm. They are specified by the following theorem.

\begin{theorem}\label{thm:sampling}
There is a randomized algorithm that given a distance matrix $A\in\R^{n\times m}$, runs in time $O(m+n)$, reads $O(m+n)$ entries of $A$, and outputs sampling probabilities $(p_1,\ldots,p_n)$, that with probability $1-\delta$ satisfy $p_i \geq \Omega(\delta)\cdot \norm{A_{i,*}}_2^2/\fsnorm{A}$ for every $i=1,\ldots,n$.
\end{theorem}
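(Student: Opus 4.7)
The plan is to separate the analysis into a \emph{deterministic} per-row lower bound on the unnormalized weights $p_i$ produced in Steps 1--2 of the algorithm, and a \emph{probabilistic} upper bound on their sum $Q=\sum_i p_i$ obtained via Markov's inequality. Normalizing $p_i/Q$ will then yield the claimed bound. The resource bounds are immediate: we only read the single row $A_{i^*,*}$ ($m$ entries) and the single column $A_{*,j^*}$ ($n$ entries), and the arithmetic is linear in these.

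For the deterministic lower bound, the key step is a double application of the approximate triangle inequality (\Cref{clm:triangle}). For any $i,j$ and any choice of $i^*,j^*$, first bridge $\dist(x_i,y_j)$ through the $y$-point $y_{j^*}$, then bridge $\dist(y_{j^*},y_j)$ through the $x$-point $x_{i^*}$:
\[
\dist(x_i,y_j)^2 \leq 2\bigl(A_{i,j^*}^2 + \dist(y_{j^*},y_j)^2\bigr) \leq 2A_{i,j^*}^2 + 4A_{i^*,j^*}^2 + 4A_{i^*,j}^2 .
\]
Summing over $j \in [m]$ yields
$\fsnorm{A_{i,*}} \leq 2m A_{i,j^*}^2 + 4m A_{i^*,j^*}^2 + 4\fsnorm{A_{i^*,*}} \leq 4m\,p_i$,
so $p_i \geq \fsnorm{A_{i,*}}/(4m)$ for \emph{every} $i$, with no randomness involved. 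The crucial point here is that one application of the triangle inequality cannot suffice, because $\dist(y_{j^*},y_j)$ is not an accessible entry of $A$; routing through \emph{both} a random row and a random column is precisely what makes the bound use the information that Steps 1--2 collect.

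For the normalization, compute the expectation of $Q$ over the uniform choices of $i^*$ and $j^*$: each of the three contributions $\sum_i A_{i,j^*}^2$, $n A_{i^*,j^*}^2$, and $(n/m)\fsnorm{A_{i^*,*}}$ has expectation $\fsnorm{A}/m$, giving $\E[Q] = 3\fsnorm{A}/m$. Markov's inequality then yields $Q \leq 3\fsnorm{A}/(m\delta)$ with probability at least $1-\delta$. Combined with the deterministic bound $p_i \geq \fsnorm{A_{i,*}}/(4m)$, the normalized probability satisfies
\[
\frac{p_i}{Q} \geq \frac{\fsnorm{A_{i,*}}}{4m Q} \geq \frac{\delta}{12}\cdot\frac{\fsnorm{A_{i,*}}}{\fsnorm{A}},
\]
simultaneously for all $i$, which is the desired $\Omega(\delta)$ guarantee.

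The only real obstacle is identifying the correct triangle-inequality chain; once the bound $A_{ij}^2 \lesssim A_{i,j^*}^2 + A_{i^*,j^*}^2 + A_{i^*,j}^2$ is in hand, everything else (summing over $j$, expectations of the three terms, Markov, and the final normalization) is routine. In particular, note that the lower bound on $p_i$ is deterministic, so the failure probability $\delta$ enters only through the Markov step on $Q$, which matches the $\Omega(\delta)$ factor in the theorem statement.
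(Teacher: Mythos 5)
Your proposal is correct and follows essentially the same argument as the paper: the same double application of \Cref{clm:triangle} (bridging through $y_{j^*}$ and then $x_{i^*}$) gives the deterministic bound $\norm{A_{i,*}}_2^2\leq 4m\,p_i$, and the same computation $\E[\sum_i p_i]=3\fsnorm{A}/m$ followed by Markov's inequality handles the normalization. No gaps.
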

\begin{proof}
Let $(\mathcal Z,\dist)$ be the metric space associated with $A$.
Let $\mathcal X=\{x_1,\ldots,x_n\}$ and $\mathcal Y=\{y_1,\ldots,y_m\}$ be the pointsets associated with its rows and its columns, respectively.
Choose a uniformly random $i^*\in[n]$ and a uniformly random $j^*\in[m]$.
For every $i\in[n]$, the output sampling probabilities are given by
\[ p_i = \dist(x_i,y_{j^*})^2 + \dist(x_{i^*},y_{j^*})^2 + \frac1m\sum_{j=1}^m\dist(x_{i^*},y_j)^2 . \]

All $p_i$'s can be computed in time $O(n+m)$ and by reading $n+m$ entries of $A$, since they only involve distances between $x_{i^*}$ to $\mathcal Y$ and between $y_{j^*}$ to $\mathcal X$. 
For every $i\in[n]$,

\begin{align*}
  \norm{A_{i,*}}_2^2 = \sum_{j=1}^m\dist(x_i,y_j)^2 &\leq 2\sum_{j=1}^m\left(\dist(x_i,y_{j^*})^2 + \dist(y_{j^*},y_j)^2\right) & \text{by~\Cref{clm:triangle}} \\
  &\leq 2\sum_{j=1}^m\left(\dist(x_i,y_{j^*})^2 + 2\dist(x_{i^*},y_{j^*})^2 + 2\dist(x_{i^*},y_j)^2\right) & \text{by~\Cref{clm:triangle}}\\
  &= 2m\cdot\dist(x_i,y_{j^*})^2 + 4m\cdot\dist(x_{i^*},y_{j^*})^2 + 4\sum_{j=1}^m\dist(x_{i^*},y_j)^2 &\\
  &\leq 4m\cdot p_i .&
\end{align*}

On the other hand, in expectation over $i^*$ and $j^*$ we have
$\E\left[\dist(x_i,y_{j^*})^2\right] = \frac1m\sum_{j=1}^m\dist(x_i,y_j)^2$, and 
$\E\left[\dist(x_{i^*},y_{j^*})^2\right] = \frac1{nm}\sum_{i=1}^n\sum_{j=1}^m\dist(x_i,y_j)^2$, and
$\E\left[\dist(x_{i^*},y_j)^2\right] = \frac1n\sum_{i=1}^n\dist(x_i,y_j)^2$.
Thus,

\begin{align*}
  \E\left[ \sum_{i=1}^np_i \right] & = \sum_{i=1}^n\left(\E\left[\dist(x_i,y_{j^*})^2\right] + \E\left[\dist(x_{i^*},y_{j^*})^2\right] +\E\left[\frac1m\sum_{j=1}^m\dist(x_{i^*},y_j)^2\right]\right) \\
  &= 3n\cdot \frac1{nm}\sum_{i=1}^n\sum_{j=1}^m\dist(x_i,y_j)^2 = \frac{3}{m}\fsnorm{A} .
\end{align*}

By Markov's inequality, $\sum_{i=1}^np_i\leq \frac{3}{\delta m}\fsnorm{A}$ with probability $1-\delta$.
Normalizing the $p_i$'s by their sum yields the theorem.
\end{proof}
We remark that if $A$ is a symmetric distance matrix, i.e., $\mathcal X=\mathcal Y$, the sampling probabilities can be simplified to choosing a single $i^*\in[n]$ uniformly at random, and letting $p_i = \dist(x_i,x_{i^*})^2 + \frac1n\sum_{j=1}^m\dist(x_{i^*},x_j)^2$. The proof is similar to the above.

\paragraph{Proof of~\Cref{thm:ub}.}
Consider~\Cref{alg:main}.
By~\Cref{thm:sampling}, the probabilities computed in Steps 1--2 are suitable for invoking~\Cref{thm:fkv}.
This ensures that the matrix $U$ computed in Steps 3--4 satisfies~\Cref{eq:fkv}.
\Cref{thm:cwcp} guarantees that the matrix $V$ computed in Step 5 satisfies~\Cref{eq:cwcp}.
Putting these together, we have
\begin{align*}
  \fsnorm{A-VU} & \leq (1+\epsilon)\min_{X\in\R^{n\times k}}\fsnorm{A-X^TU} & \text{by~\Cref{eq:cwcp}} \\
  &\leq (1+\epsilon)\fsnorm{A-AU^TU} & \\
  &\leq (1+\epsilon)\left(\fsnorm{A-A_k} + \epsilon\fsnorm{A}\right) & \text{by~\Cref{eq:fkv}} \\
  &\leq \fsnorm{A-A_k} + \epsilon\cdot(2+\epsilon)\cdot\fsnorm{A} & \text{since $\fsnorm{A-A_k}\leq\fsnorm{A}$,}
\end{align*}
and we can scale $\epsilon$ by a constant.
This proves~\Cref{e:upper}.
For the query complexity bound, observe that~\Cref{thm:fkv} reads $O(k/\epsilon)$ rows and~\Cref{thm:cwcp} reads $O(k/\epsilon)$ columns of the matrix, yielding a total of $O((n+m)k/\epsilon)$ queries. Finally, the running time is the sum of runnings times of~\Cref{thm:sampling,thm:fkv,thm:cwcp}.
\qed

\section{Lower Bound}\label{sec:lb}
For a clearer presentation, in this section we prove the lower bound in the special case $k=1$, for distance matrices that can be asymmetric (called~\emph{bipartite} in Definition~\ref{def:distm}).
This case encompasses the main ideas.
The full proof of~\Cref{thm:lb} appears in the appendix.
%
For concreteness, let us formally state the special case that will be proven in this section.
\begin{theorem}\label{thm:lb1}
Let $n,r,\epsilon$ be such that $r\leq n$ and $1>\epsilon\geq\Omega(n^{-1/3})$.
Any randomized algorithm that given a distance matrix $A\in\R^{n\times r}$, computes $V\in\R^{n\times k},U\in\R^{k\times r}$ that with probability $2/3$ satisfy
$\fsnorm{A-VU} \leq \fsnorm{A-A_1}+\epsilon\fsnorm{A}$,
must read at least $\Omega(n/\epsilon)$ entries of $A$ in expectation.
\end{theorem}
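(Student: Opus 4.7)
The plan is to reduce from a direct-sum version of the binary majority problem, following the outline in the introduction. Fix $r = \Theta(1/\epsilon)$ and consider $n$ independent uniformly random binary strings of length $r$, stacked as rows of $B \in \{0,1\}^{n \times r}$; a standard information-theoretic/Markov argument shows that any (possibly adaptive) algorithm that correctly reports the majority bit for a constant fraction of rows requires $\Omega(nr) = \Omega(n/\epsilon)$ queries in expectation. I would embed this into a distance matrix by shifting to $A = B + J \in \{1,2\}^{n\times r}$, which is automatically a valid bipartite distance matrix since the triangle inequality is vacuous when all pairwise distances lie in $[1,2]$. To sharpen the uniqueness of the top singular direction, I would further augment $A$ by appending one extra row with all entries equal to some large $M = \Theta(\sqrt n)$, yielding $\tilde A \in \R^{(n+1) \times r}$; this corresponds to adding one very far point to the metric space and preserves $\fsnorm{\tilde A} = \Theta(nr)$.

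The core step is to show that any rank-$1$ matrix $VU$ with $\fsnorm{\tilde A - VU} \leq \fsnorm{\tilde A - \tilde A_1} + \epsilon \fsnorm{\tilde A}$ approximately recovers the row means $\bar a_1, \ldots, \bar a_n$ of $A$. Decomposing $\tilde A = \bar{\tilde a}\, \mathbf{1}_r^T + E$, where $E$ has mean-zero $\pm 1/2$ entries in the first $n$ rows and a zero last row, a standard random-matrix bound gives $\norm{E}_2 = O(\sqrt n + \sqrt r) = O(\sqrt n)$, while the rank-$1$ part has operator norm $\norm{\bar{\tilde a}}_2 \sqrt r = \Theta(\sqrt{nr})$ thanks to the $M^2 = \Theta(n)$ contribution from the augmented row. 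Consequently $\sigma_1(\tilde A)^2 = \Theta(nr)$ and $\sigma_2(\tilde A)^2 = O(n)$, with a spectral gap that far exceeds the error budget $\epsilon \fsnorm{\tilde A} = O(n)$. Expanding $\fsnorm{\tilde A - \alpha v u^T} = \fsnorm{\tilde A} - 2\alpha\, v^T \tilde A u + \alpha^2$ and optimizing over the scalar $\alpha$, near-optimality forces $(v^T \tilde A u)^2 \geq \sigma_1^2 - \epsilon \fsnorm{\tilde A}$, which in turn pins the unit right singular direction $u$ of $VU$ to be cosine-close to $u_1 \approx \mathbf{1}_r/\sqrt r$. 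Therefore each row of $VU$ is essentially constant, say $c_i \mathbf{1}_r^T$, and subtracting this from the row-mean rank-$1$ approximation yields $\sum_{i=1}^n (c_i - \bar a_i)^2 = O(\epsilon n)$.

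Finally, anti-concentration of $\mathrm{Bin}(r, 1/2)$ guarantees $|\bar a_i - 1.5| \geq \Omega(\sqrt \epsilon)$ for a constant fraction of rows, while Markov's inequality applied to $\sum_i (c_i - \bar a_i)^2 = O(\epsilon n)$ yields $|c_i - \bar a_i| \leq O(\sqrt \epsilon)$ for all but a small fraction of rows. By rescaling $\epsilon$ by a sufficiently small constant factor (which only changes the lower bound by a constant), both events hold simultaneously for a constant fraction of rows, and on this intersection thresholding $c_i$ against $1.5$ recovers the majority bit of row $i$. A rank-$1$ algorithm using $q$ expected queries therefore solves the $n$-fold majority problem with $q + O(n+r)$ total queries, forcing $q = \Omega(n/\epsilon)$. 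The main obstacle I anticipate is the spectral step of converting a Frobenius-norm bound on $\fsnorm{\tilde A - VU}$ into genuine angular closeness of $u$ to $\mathbf{1}_r/\sqrt r$; this is where the choice $M = \Theta(\sqrt n)$ and the regime hypothesis $\epsilon \geq \Omega(n^{-1/3})$ (which ensures $\sigma_2^2/\sigma_1^2 = O(1/r) = O(\epsilon)$ with enough slack for the perturbation argument) need to be tracked carefully.
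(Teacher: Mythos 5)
Your proposal is correct in outline and follows essentially the same route as the paper: the same $n$-fold majority reduction, the same $\{1,2\}$-valued matrix augmented by a heavy all-$M$ row with $M=\Theta(\sqrt n)$, the same use of the spectral separation to force the right factor of the rank-$1$ output toward $\1/\sqrt r$, and the same decoding by thresholding recovered row means against $1.5$. The only substantive deviations are that you pin the right factor via a generic spectral-gap (Davis--Kahan-style) argument where the paper reads $\norm{\1-b}_2^2=O(1/C)$ directly off the augmented row after normalizing $a_{n+1}=M$, and that your asserted bound $\sum_i(c_i-\bar a_i)^2=O(\epsilon n)$ silently relies on showing the small off-$\1$ component of $VU$ cannot fit the noise $A\P^\perp$ by more than $o(n)$ --- the Hoffman--Wielandt step of \Cref{lmm:rk11} --- which is exactly the delicate point you flag and which does go through with the constants chosen appropriately.
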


\subsection{Hard Distribution over Distance Matrices}\label{sec:distribution}
By Yao's principle, it suffices to construct a distribution over distance matrices, and prove the sampling lower bound for any deterministic algorithm that operates on inputs from that distribution.
We begin by defining a suitable distribution over distance matrices and proving some useful properties.

\paragraph{Hard problem.}
In the majority problem, the goal is to compute the majority bit of an input bitstring.
We will show the hardness of low-rank approximation via reduction from solving multiple random instances of the majority problem. The sample-complexity hardness of this problem is well-known, and is stated in the following lemma.
The proof is included in the appendix.
 
\begin{lemma}
\label{lmm:lb}
Let $r,t>0$ be integers.
Any deterministic algorithm that gets a uniformly random matrix $S\in\{0,1\}^{t\times r}$ as input, and outputs $s^*\in\{0,1\}^t$ such that for every $i\in[t]$,
$\Pr[s^*(i) = \text{majority element of $i$th row of $S$}] \geq 2/3$,
must read in expectation at least $\Omega(rt)$ entries of $S$.
\end{lemma}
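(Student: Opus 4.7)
The plan is to first prove a sample lower bound for a single row ($t=1$), then lift to $t$ rows by a direct-sum argument that exploits the independence of the rows.

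\textbf{Single-row lower bound.} I would show that any deterministic algorithm which, given uniformly random $x\in\{0,1\}^r$, outputs $\mathrm{maj}(x)$ correctly with probability $\ge 2/3$ has expected query count $\E[Q]=\Omega(r)$. A deterministic algorithm with worst-case query budget $q$ defines a decision tree partitioning $\{0,1\}^r$ into cells, each obtained by fixing the queried coordinates to specific values; within a cell the algorithm outputs a fixed bit, and the unqueried coordinates remain uniform. If the queried bits in a cell sum to $s$ (out of $q$ queried), the majority of $x$ equals $1$ iff $\mathrm{Bin}(r-q,1/2)>r/2-s$, so the best posterior prediction has error approximately $\Phi(-|q/2-s|/\sqrt{(r-q)/4})$ by a local central limit theorem. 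Averaging over uniform $x$, the sum $s$ fluctuates by $\Theta(\sqrt{q})$ around $q/2$, yielding average error $1/2-O(\sqrt{q/(r-q)})$. Choosing a sufficiently small constant $c$ makes this exceed $1/3$ whenever $q\le cr$, contradicting success probability $\ge 2/3$. A Markov truncation argument (stop the algorithm after $O(\E[Q])$ queries, losing only a small fraction of probability mass) converts this worst-case lower bound into the expected-query lower bound $\E[Q]=\Omega(r)$.

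\textbf{Direct sum over $t$ rows.} Let $Q_i(S)$ denote the number of entries the deterministic algorithm reads from row $i$ on input $S$. By hypothesis, $\Pr_S[s^*(i)=\mathrm{maj}(S_i)]\ge 2/3$ for every $i$. Fix $i$ and condition on the remaining rows $S_{-i}$: since the rows are independent and each uniform, the algorithm's behavior on row $i$ becomes a deterministic query algorithm acting on a uniformly random bitstring $S_i\in\{0,1\}^r$. By a simple averaging argument on the conditional success probability, for a constant fraction of fixings $\sigma$ of $S_{-i}$ the conditional success probability on row $i$ exceeds $1/2+\Omega(1)$. Applying the single-row bound to each such $\sigma$ yields $\E[Q_i \mid S_{-i}=\sigma]=\Omega(r)$, and hence $\E[Q_i]=\Omega(r)$. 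Summing over $i\in[t]$ gives $\E[\sum_i Q_i]=\Omega(rt)$, which is exactly the claim.

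\textbf{Main obstacle.} The technical heart is the quantitative anti-concentration claim, namely that the cell-averaged prediction error is $1/2-O(\sqrt{q/r})$ rather than merely $1/2-o(1)$, so that the bound is tight enough to force $q=\Omega(r)$ at success probability $2/3$. I would address this either through the Berry--Esseen bound applied to a $\mathrm{Bin}(r-q,1/2)$ variable, or more elementarily via the pointwise estimate $\max_k\binom{r-q}{k}2^{-(r-q)}=\Theta(1/\sqrt{r-q})$ combined with summing contributions over the $\Theta(\sqrt{q})$ typical deviations of $s$ from $q/2$. Care is also needed so the Markov truncation step does not lose too much success probability; this is handled by choosing the truncation threshold a large constant multiple of $\E[Q]$, so the truncation losses and the worst-case cell error together remain strictly below $2/3$.
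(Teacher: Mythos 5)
Your proposal is correct, but it takes a genuinely different route from the paper. The paper proves this lemma by reduction from communication complexity: it cites the $\Omega(r)$ information cost of distributional Gap-Hamming, invokes a direct sum theorem for information complexity to get $\Omega(rt)$ for the $t$-fold version, and then lets Alice and Bob simulate the query algorithm on the XOR of their matrices, so that $q$ queries cost $\Theta(q)$ communication. You instead argue entirely within the query model: a decision-tree/anti-concentration computation shows that any deterministic algorithm making $q\le cr$ queries on a uniform $r$-bit string predicts the majority with success at most $\tfrac12+O(\sqrt{q/(r-q)})$ (since the leaf statistic $s$ deviates from $|T|/2$ by only $O(\sqrt q)$ while the unqueried mass has standard deviation $\Theta(\sqrt{r-q})$), a Markov truncation upgrades this to an expected-query bound, and the lift to $t$ rows is obtained by conditioning on $S_{-i}$ — which is legitimate here precisely because the rows are independent and the success guarantee is stated per row, so no communication-style direct sum theorem is needed. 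The trade-off: the paper's proof is short modulo two cited black boxes from the information-complexity literature, whereas yours is self-contained and elementary, yields the slightly stronger per-row conclusion $\E[Q_i]=\Omega(r)$ for every $i$, but requires you to carry out the quantitative binomial anti-concentration estimate (Berry--Esseen or the pointwise bound $\max_k\binom{m}{k}2^{-m}=\Theta(1/\sqrt m)$) and the martingale concentration of $s$ for adaptive queries carefully; you correctly identify these as the technical crux, and the fixes you sketch are standard and sound.
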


\paragraph{The distribution.}
Given $n$ and $\epsilon>0$, let $\beta,C>0$ be constants that will be chosen later. ($\beta$ will be sufficiently small and $C$ sufficiently large.)
Let $r=\beta/\epsilon$, and assume w.l.o.g.~this is an integer by letting $\epsilon$ be sufficiently smaller.
Note that in Lemma~\ref{lmm:lb}, we can symbolically replace the majority alphabet $\{0,1\}$ with any alphabet of size $2$, and here we will use $\{1,2\}$.
Let $S\in\{1,2\}^{n\times r}$ be a uniformly random matrix.
Let $s_1,\ldots,s_n$ be its rows.
We call each of its rows an~\emph{instance} (of the majority problem).
Thus $S$ is an instance of the random multi-instance majority problem from Lemma~\ref{lmm:lb} (with $t=n$).
We begin by establishing some of its probabilistic properties.

Our goal is to solve $S$ via reduction to rank-$1$ approximation of distance matrices.
To obtain a distribution over distance matrices, we first take $S$ and randomly permute its rows to obtain a matrix $A$.
The random permutation is denoted by $\pi:[n]\rightarrow[n]$.\footnote{The random permutation is for a technical reason and does not change the distribution. Specifically, it is to prevent the algorithm in Lemma~\ref{lmm:lb} from focusing on a few fixed instances $\{s_i\}_{i=1}^{n'},$, $n'\ll n$, and never attempt the rest.}
Then, we add an additional $(n+1)$th row to $A$, whose entries are all equal $M=\sqrt{Cn}$.
The matrix with the added row is denoted by $\bar A$.

\paragraph{Metric properties.}
First we show that $\bar A$ is indeed a (bipartite) distance matrix.
\begin{lemma}\label{lmm:abarismetric}
Every supported $\bar A$ is a distance matrix.
\end{lemma}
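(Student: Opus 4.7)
The plan is to exhibit an explicit metric space realizing $\bar A$. I take disjoint abstract point sets $\mathcal X = \{x_1,\ldots,x_{n+1}\}$ and $\mathcal Y = \{y_1,\ldots,y_r\}$, set $\mathcal Z := \mathcal X \cup \mathcal Y$, and define distances as follows: $\dist(x_i, y_j) := \bar A_{ij}$ for all $i \in [n+1]$ and $j \in [r]$ (so this value lies in $\{1,2\}$ when $i \leq n$ and equals $M = \sqrt{Cn}$ when $i = n+1$); $\dist(x_{n+1}, x_i) := M$ for each $i \leq n$; and $\dist(p,q) := 2$ for every remaining pair of distinct points $p,q$, i.e.\ within $\{x_1,\ldots,x_n\}$ or within $\{y_1,\ldots,y_r\}$. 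By construction $\dist(x_i, y_j) = \bar A_{ij}$, so the only thing left is to verify that $\dist$ is a metric.

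Non-negativity, symmetry, and $\dist(p,p) = 0$ are immediate, so the real content is checking the triangle inequality on triples of distinct points, which I would split into two cases. \textbf{Case 1: the triple avoids $x_{n+1}$.} Then all three pairwise distances lie in $\{1,2\}$, and since the sum of any two of them is at least $2$, it upper-bounds the third. \textbf{Case 2: the triple contains $x_{n+1}$ together with two other distinct points $p,q$.} Two of the three sides equal $M$ and the remaining side equals $\dist(p,q) \in \{1,2\}$. The inequality $M \leq M + \dist(p,q)$ is trivial, while $\dist(p,q) \leq 2 \leq 2M$ holds as long as $M \geq 1$, which is ensured for any $C \geq 1$ and $n \geq 1$.

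Putting the two cases together shows that $\dist$ is a metric, and by construction $\bar A$ is the bipartite distance matrix between $\mathcal X$ and $\mathcal Y$ in $(\mathcal Z, \dist)$, proving the lemma. I do not anticipate any real obstacle here: the lemma essentially reduces to the folklore observation that any matrix with entries in $\{1,2\}$ is a distance matrix (realized by making all unspecified intra-set distances equal to $2$), combined with the fact that appending a point at uniform distance $M$ from every other point is consistent with any metric provided $M$ is at least as large as the other pairwise distances. The only mild subtlety is exploiting the freedom in defining intra-$\mathcal X$ and intra-$\mathcal Y$ distances, which are not constrained by $\bar A$.
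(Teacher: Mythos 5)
Your proof is correct and takes essentially the same approach as the paper: extend $\bar A$ to a full metric by assigning the unconstrained intra-set distances a constant in $\{1,2\}$, setting all distances to the extra point equal to $M$, and checking the triangle inequality by a short case analysis on whether the triple involves $x_{n+1}$. The only cosmetic difference is that the paper sets the intra-set distances to $1$ rather than $2$; both choices work.
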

\begin{proof}
Consider a symbolic pointset $X=P\cup Q$ where $P\cap Q=\emptyset$, such that $P=\{p_1,\ldots,p_{n+1}\}$ corresponds to the rows of $\bar A$, and $Q=\{q_1,\ldots,q_r\}$ to the columns of $\bar A$.
Our goal is to define a metric $\dist$ on $X$ such that $\dist(p_i,q_j)=\bar A_{ij}$ for every $i\in[n+1]$ and $j\in[r]$.
We need to set the rest of the distances such that $\dist$ is indeed a metric -- that is, such that $\dist$ satisfies the triangle inequality.
For every $i,i'\in[n]$ we set $\dist(p_i,p_{i'})=1$.
For every $j,j'\in[r]$ we set $\dist(q_j,q_{j'})=1$.
Finally we need to set the distances from $p_{n+1}$.
By construction of $\bar A$ we already have $\dist(p_{n+1},q_j)=M$ for every $j\in[r]$.
We set all the remaining distances, $\dist(p_i,p_{n+1})$ for every $i\in[n]$, to also be $M$.

We need to verify that for all distinct triplets $x,y,z\in X$, $\dist(x,y)\leq \dist(x,z)+\dist(z,y)$.
Indeed, all distances are in $\{1,2,M\}$. If $\dist(x,y)\in\{1,2\}$ then the inequality holds for any setting of $\dist(x,z)$ and $\dist(z,y)$. Otherwise $\dist(x,y)=M$, hence necessarily either $x=p_{n+1}$ or $y=p_{n+1}$, and in both cases $\dist(x,z)+\dist(z,y)\geq\max\{\dist(x,z),\dist(z,y)\}\geq \dist(p_{n+1},z)=M=\dist(x,y)$ as needed.
\end{proof}

\paragraph{Probabilistic properties.}
By anti-concentration of the binomial distribution, it is known that in a random length-$r$ bistring, the majority bit is likely appear $\Omega(\sqrt{r})$ times more than the other bit.

\begin{lemma}[anti-concentration]\label{lmm:antic}
Let $0<\delta<1$. Let $s\in\{1,2\}^r$ be a uniformly random majority instance.
Then, for $\gamma=\Omega(\delta)$, the majority element of $s$ appears in it at least $\tfrac12r+\gamma\sqrt{r}$ times with probability at least $1-\delta$.
\end{lemma}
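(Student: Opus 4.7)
The plan is to reduce the statement to a standard anti-concentration bound for the Binomial distribution, which follows from the pointwise (local) upper bound on the Binomial probability mass function.

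First I would introduce $X$ to be the number of coordinates of $s$ equal to $1$, so that $X \sim \mathrm{Binomial}(r, 1/2)$, and observe that the number of occurrences of the majority element is exactly $\max(X, r-X) = \tfrac12 r + \bigl|X - \tfrac12 r\bigr|$. Thus the claim is equivalent to
\[
  \Pr\!\left[\bigl|X - \tfrac12 r\bigr| < \gamma\sqrt{r}\right] \leq \delta,
\]
and it suffices to show this holds whenever $\gamma$ is a sufficiently small constant multiple of $\delta$.

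Next I would invoke the standard pointwise bound that for every integer $k$,
\[
  \Pr[X = k] = \binom{r}{k} 2^{-r} \leq \frac{c}{\sqrt{r}},
\]
for a universal constant $c$, which follows from Stirling's approximation applied at the mode $k = \lfloor r/2\rfloor$ (where the binomial coefficient is maximized). Summing this bound over the at most $2\gamma\sqrt{r} + 1$ integers $k$ that lie in the interval $(r/2 - \gamma\sqrt{r},\, r/2 + \gamma\sqrt{r})$, I get
\[
  \Pr\!\left[\bigl|X - \tfrac12 r\bigr| < \gamma\sqrt{r}\right] \leq (2\gamma\sqrt{r} + 1)\cdot \frac{c}{\sqrt{r}} = O(\gamma) + O(1/\sqrt{r}).
\]

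Finally, choosing $\gamma = c'\delta$ for a small enough constant $c'$ makes the $O(\gamma)$ term at most $\delta/2$, and the $O(1/\sqrt{r})$ term is at most $\delta/2$ once $r$ is larger than some constant depending on $\delta$ (the small-$r$ regime can be handled separately, since for $r = O(1)$ one can simply shrink $\gamma$ to a small enough constant so that the event $\bigl|X-r/2\bigr|<\gamma\sqrt{r}$ forces $X = r/2$, which has probability at most a constant less than $1$, or is even vacuous when $r$ is odd). The main obstacle, such as it is, is just being careful that the local-CLT-style pointwise bound of $O(1/\sqrt{r})$ on $\Pr[X=k]$ is genuinely uniform in $k$; everything else is a one-line union bound and a choice of constants.
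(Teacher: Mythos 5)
Your proof is correct and takes essentially the same route as the paper: both reduce the claim to bounding $\Pr[\lvert X-\tfrac12 r\rvert<\gamma\sqrt r]$ for $X\sim\mathrm{Binomial}(r,\tfrac12)$, apply the pointwise estimate $\Pr[X=i]\le\Pr[X=\lfloor r/2\rfloor]\le 1/\sqrt{2\pi r}$, union-bound over the $O(\gamma\sqrt r)$ integers in the window, and set $\gamma=\Theta(\delta)$.
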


We call an instance $s$ in $S$~\emph{typical} if its majority element appears in it at least $\frac12r+\gamma\sqrt{r}$ times, where $\gamma$ is the constant from Lemma~\ref{lmm:antic}.
Otherwise, we call the instance~\emph{atypical}.

Let $\Psi_{\text{typical}}$ denote the event there are at least $0.9n$ typical instances.
By Markov's inequality, $\Pr[\Psi_{\text{typical}}]\geq1-10\delta$.


%

\paragraph{Spectral properties.}
We will also require some facts from random matrix theory about the spectrum of $A$.
Let $\1$ denote the all-$1$'s vector in $\R^r$.
Let $\P$ denote the  orthogonal projection on the subspace spanned by $\1$.
The proofs of the following two lemmas are given in the appendix.

\begin{lemma}\label{lmm:rk1}
Suppose $r^3 = O(n)$.
With probability $1-e^{-\Omega(n/r^{3/2})}$,
$\fsnorm{A-A_1} \geq \tfrac14nr - O(n) $.
\end{lemma}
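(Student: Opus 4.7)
The plan is to use the identity $\fsnorm{A-A_1}=\fsnorm{A}-\sigma_1(A)^2$ and bound each term separately. Writing $A=\tfrac{3}{2}\1_n\1_r^T+B$, where $B\in\{-\tfrac12,+\tfrac12\}^{n\times r}$ has i.i.d.~entries (since each $A_{ij}\in\{1,2\}$ with equal probability), the mean matrix carries the dominant rank-$1$ structure while $B$ is the mean-zero fluctuation. Since $A_{ij}^2\in\{1,4\}$, Hoeffding's inequality applied to the $nr$ bounded summands gives $\fsnorm{A}\geq\tfrac{5}{2}nr-O(n)$ with failure probability $e^{-\Omega(n/r)}$.

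For the upper bound on $\sigma_1(A)^2=\max_{\|v\|=1}\|Av\|^2$, I parametrize any unit $v\in\R^r$ as $v=\alpha u+w$ with $u=\1_r/\sqrt{r}$ and $w\perp u$ (so $\alpha^2+\|w\|^2=1$). Then $Av=\tfrac{3}{2}\alpha\sqrt{r}\,\1_n+Bv$, and expanding gives
\[
\|Av\|^2=\tfrac{9}{4}\alpha^2 nr+3\alpha\sqrt{r}\cdot\1_n^T Bv+\|Bv\|^2 \leq \tfrac{9}{4}nr+3\sqrt{r}\,\|\1_n^T B\|+\sigma_1(B)^2,
\]
using $\alpha^2\leq 1$, Cauchy--Schwarz on the cross term, and $\|Bv\|\leq\sigma_1(B)$.

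It remains to control the two fluctuation terms. For $\sigma_1(B)^2=O(n)$, I invoke a standard sub-gaussian operator-norm concentration bound (as in Vershynin), with failure probability $e^{-\Omega(n)}$. For $\|\1_n^T B\|^2=\sum_{j=1}^r(\sum_{i=1}^n B_{ij})^2$, each summand is sub-exponential with mean $n/4$ and $\psi_1$-norm $O(n)$, so Bernstein's inequality at deviation scale $t=\Theta(n^2/r)$ yields $\|\1_n^T B\|^2=O(n^2/r)$ with failure probability $e^{-\Omega(n/r)}$. Since $r^3=O(n)$ implies $r\leq\sqrt{n}$, this gives $\sqrt{r}\,\|\1_n^T B\|=O(n)$ and hence $\sigma_1(A)^2\leq \tfrac{9}{4}nr+O(n)$. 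Combining with the Hoeffding bound yields $\fsnorm{A-A_1}\geq \tfrac14 nr-O(n)$ at overall failure probability $e^{-\Omega(n/r)}\leq e^{-\Omega(n/r^{3/2})}$ (valid for $r\geq 1$).

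The main subtlety will be controlling the cross term $3\alpha\sqrt{r}\cdot\1_n^T Bv$: bounding $\|\1_n^T B\|^2$ only at its expected scale $\Theta(nr)$ via a tight Bernstein window yields failure probability merely $e^{-\Omega(r)}$, which is too weak. Instead, I apply Bernstein at the larger threshold $t=\Theta(n^2/r)$, which is still permissible because $r^3=O(n)$ makes $\sqrt{r}\cdot\sqrt{n^2/r}=O(n)$, and this delivers the required $e^{-\Omega(n/r)}$ concentration.
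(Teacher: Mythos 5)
Your proof is correct, but it takes a genuinely different route from the paper's. The paper controls the tail of the spectrum from below: it invokes the sharp hard-edge estimate of Feldheim--Sodin on the \emph{smallest} singular value of the fluctuation matrix $B$ (all squared singular values of $B$ are at least $\tfrac14 n - O(\sqrt{nr})$ with probability $1-e^{-\Omega(n/r^{3/2})}$), then uses Thompson's interlacing theorem for the rank-$1$ perturbation $1.5J$ to conclude that $r-1$ squared singular values of $A$ are at least that large, so $\fsnorm{A-A_1}\geq (r-2)(\tfrac14 n - O(\sqrt{nr}))$; the hypothesis $r^3=O(n)$ is exactly what makes the accumulated error $O(r^{3/2}\sqrt{n})$ into $O(n)$. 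You instead use the identity $\fsnorm{A-A_1}=\fsnorm{A}-\sigma_1(A)^2$ and control the \emph{top} of the spectrum from above, needing only the soft-edge bound $\sigma_1(B)=O(\sqrt n)$, a scalar Hoeffding bound for $\fsnorm{A}$, and a Bernstein bound on the cross term $\sqrt{r}\,\norm{\1_n^TB}_2$. This is more elementary (no hard-edge asymptotics, no interlacing), it only needs $r=O(\sqrt n)$ rather than $r^3=O(n)$, and it delivers the stronger failure probability $e^{-\Omega(n/r)}$, which indeed dominates the claimed $e^{-\Omega(n/r^{3/2})}$. Your treatment of the cross term is the one place requiring care, and you handle it correctly: the Cauchy--Schwarz step $|\1_n^TBv|\leq\norm{\1_n^TB}_2$ is deterministic and hence uniform over all unit $v$, and applying Bernstein at threshold $t=\Theta(n^2/r)$ (which dominates the mean $nr/4$ precisely because $r=O(\sqrt n)$) gives $\sqrt r\,\norm{\1_n^TB}_2=O(n)$ at the required exponential rate. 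Both arguments are valid; the paper's yields per-singular-value information (which it reuses elsewhere, e.g.\ in \Cref{lmm:spectralb}), while yours is self-contained and quantitatively a bit stronger for this particular lemma.
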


In the next lemma and throughout, $\norm{X}_2$ denotes the spectral norm of a matrix $X$.
\begin{lemma}\label{lmm:rk11}
Let $\zeta>0$.
Let $Z$ be a rank-$1$ matrix such that $\norm{Z}_2\leq\zeta\sqrt{n}$.
Then with probability $1-o(1)$,
$\fsnorm{A\P^\perp-Z} \geq \fsnorm{A\P^\perp} - \zeta\cdot O(n)$.
\end{lemma}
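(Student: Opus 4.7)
The plan is to expand the squared Frobenius error, use the rank-$1$ structure of $Z$ to reduce the cross-term to a quadratic form in the spectral norm of $A\P^\perp$, and then bound that spectral norm via standard random matrix concentration.

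First I would expand
\[
\fsnorm{A\P^\perp - Z} \;=\; \fsnorm{A\P^\perp} \;-\; 2\langle A\P^\perp, Z\rangle \;+\; \fsnorm{Z} \;\geq\; \fsnorm{A\P^\perp} \;-\; 2\bigl|\langle A\P^\perp, Z\rangle\bigr|,
\]
so that it suffices to show $\bigl|\langle A\P^\perp, Z\rangle\bigr| \leq \zeta\cdot O(n)$ with probability $1-o(1)$. Since $Z$ has rank $1$, write $Z = \sigma uv^T$ with $u\in\R^n$, $v\in\R^r$ unit vectors and $\sigma = \norm{Z}_2 \leq \zeta\sqrt{n}$. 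Then
\[
\bigl|\langle A\P^\perp, Z\rangle\bigr| \;=\; \sigma\,\bigl|u^T A\P^\perp v\bigr| \;\leq\; \sigma \cdot \norm{A\P^\perp}_2 \;\leq\; \zeta\sqrt{n}\cdot \norm{A\P^\perp}_2,
\]
reducing the task to proving $\norm{A\P^\perp}_2 = O(\sqrt{n})$ with probability $1-o(1)$.

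The key observation for this spectral bound is that projecting by $\P^\perp$ kills the constant-mean part of $A$. Since the entries of $A$ (the row-permuted version of $S$) are i.i.d.\ uniform on $\{1,2\}$, I would write $A = \tfrac{3}{2}\1_n\1_r^T + B$, where $B$ has i.i.d.\ entries uniform on $\{-\tfrac12,\tfrac12\}$. Because $\1_r^T\P^\perp = 0$, the rank-$1$ mean component vanishes and $A\P^\perp = B\P^\perp$. Using $\norm{\P^\perp}_2 = 1$, we get $\norm{A\P^\perp}_2 \leq \norm{B}_2$. A standard concentration bound for $\pm\tfrac12$ Rademacher-type matrices (e.g., the Bai--Yin / Latala bound) yields $\norm{B}_2 = O(\sqrt{n}+\sqrt{r})$ with probability $1-e^{-\Omega(n)}$. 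Since the regime of interest has $\epsilon \geq \Omega(n^{-1/3})$, we have $r = \beta/\epsilon = O(n^{1/3}) \leq n$, hence $\norm{A\P^\perp}_2 = O(\sqrt{n})$ whp.

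Chaining these, $\bigl|\langle A\P^\perp, Z\rangle\bigr| \leq \zeta\sqrt{n}\cdot O(\sqrt{n}) = \zeta\cdot O(n)$, which together with the initial expansion gives $\fsnorm{A\P^\perp - Z} \geq \fsnorm{A\P^\perp} - \zeta\cdot O(n)$, as required. I expect no real obstacles here: the only substantive ingredient is the Rademacher spectral norm bound, and the argument is otherwise a two-line decomposition exploiting rank-$1$ plus annihilation of the mean by $\P^\perp$. The one place to be careful is that the row permutation $\pi$ preserves the i.i.d.\ structure (rows are i.i.d.\ uniform, so permuting rows doesn't change the distribution), so $B$ is genuinely a random $\pm\tfrac12$ matrix to which the off-the-shelf bound applies.
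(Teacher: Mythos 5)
Your proof is correct, and it reaches the same destination as the paper's by a slightly more elementary route. Both arguments reduce the lemma to the single spectral fact $\norm{A\P^\perp}_2 = O(\sqrt{n})$, proved identically in both cases via the decomposition $A = 1.5J + B$, the annihilation $J\P^\perp = 0$, and a standard bound on $\norm{B}_2$ for a random $\pm\tfrac12$ matrix. Where you differ is the first step: the paper invokes the Hoffman--Wielandt inequality for singular values, which for rank-$1$ $Z$ gives $\fsnorm{A\P^\perp - Z} \geq \fsnorm{A\P^\perp} - 2\norm{A\P^\perp}_2\norm{Z}_2 + \norm{Z}_2^2$, whereas you expand the Frobenius norm directly and bound the cross term by $|\langle A\P^\perp, \sigma uv^T\rangle| = \sigma|u^TA\P^\perp v| \leq \norm{Z}_2\norm{A\P^\perp}_2$. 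The two bounds coincide up to the nonnegative $\norm{Z}_2^2$ term you discard, so nothing is lost; your version avoids citing Hoffman--Wielandt at the cost of nothing, while the paper's version is the natural template that generalizes to the rank-$k$ arguments used later in the appendix. Your side remarks (that the row permutation preserves the i.i.d.\ entry distribution, and that $r = O(n^{1/3}) \leq n$ so the $\sqrt{r}$ term is dominated) are both accurate.
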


Since $r=\beta/\epsilon$ and in~\Cref{thm:lb1} we assume $\epsilon^{-3}=O(n)$, Lemma~\ref{lmm:rk1} is satisfied with probability $1-o(1)$. Therefore,
\begin{corollary}\label{cor:spectral}
Denote by $\Psi_{\text{spectral}}$ the event that the conclusions of both Lemmas~\ref{lmm:rk1}~and~\ref{lmm:rk11} hold.
Then $\Pr[\Psi_{\text{spectral}}]\geq1-o(1)$,
and therefore $\Pr[\Psi_{\text{typical}}\wedge\Psi_{\text{spectral}}]\geq1-10\delta-o(1)$.
\end{corollary}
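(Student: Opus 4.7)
The plan is to combine three probabilistic guarantees via union bounds: the conclusion of Lemma~\ref{lmm:rk1}, the conclusion of Lemma~\ref{lmm:rk11}, and the event $\Psi_{\text{typical}}$ bounded earlier by Markov. The only thing requiring a small check is that the hypothesis $r^3 = O(n)$ of Lemma~\ref{lmm:rk1} is met under the parameters of \Cref{thm:lb1}, and that the resulting failure probability is indeed $o(1)$.

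First I would verify the hypothesis for Lemma~\ref{lmm:rk1}. Since we set $r = \beta/\epsilon$ and \Cref{thm:lb1} assumes $\epsilon \geq \Omega(n^{-1/3})$, we have $r^3 = \beta^3/\epsilon^3 = O(n)$ for any fixed constant $\beta$. This puts us in the regime of Lemma~\ref{lmm:rk1}, so its conclusion fails with probability at most $e^{-\Omega(n/r^{3/2})}$. Under the same regime $n/r^{3/2} \geq \Omega(\sqrt{n})$, so this failure probability is $e^{-\Omega(\sqrt{n})} = o(1)$, as already observed in the paragraph preceding the corollary. Lemma~\ref{lmm:rk11} is invoked with the rank-$1$ matrix under consideration, and it directly yields an $o(1)$ failure probability by its own statement.

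By a union bound, the event $\Psi_{\text{spectral}}$ that both conclusions hold occurs with probability at least $1 - o(1)$, giving the first claim. For the second claim, I would recall that $\Psi_{\text{typical}}$ was shown to hold with probability at least $1 - 10\delta$ via Markov's inequality applied to the number of atypical instances. A final union bound then yields
\[ \Pr[\Psi_{\text{typical}} \wedge \Psi_{\text{spectral}}] \geq 1 - 10\delta - o(1), \]
as claimed.

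There is no real obstacle: the technical content was already discharged in Lemmas~\ref{lmm:rk1} and~\ref{lmm:rk11} and in the Markov-bound for $\Psi_{\text{typical}}$. The only subtlety is tracking that the regime $\epsilon \geq \Omega(n^{-1/3})$ is exactly what converts the raw exponential bound $e^{-\Omega(n/r^{3/2})}$ into a genuine $o(1)$ quantity, which is why the hypothesis appears in \Cref{thm:lb1} at all.
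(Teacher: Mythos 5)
Your proposal is correct and matches the paper's own (largely implicit) argument: the paper likewise verifies $r^3=O(n)$ from $\epsilon\geq\Omega(n^{-1/3})$ to invoke Lemma~\ref{lmm:rk1}, notes that $e^{-\Omega(n/r^{3/2})}=o(1)$ in this regime, combines this with the $1-o(1)$ guarantee of Lemma~\ref{lmm:rk11} by a union bound, and then intersects with $\Psi_{\text{typical}}$ (probability $\geq 1-10\delta$ by Markov) via a final union bound. No gaps.
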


\paragraph{Bounds on low-rank approximation}
Finally we give upper bounds on the approximation error allowed by~\Cref{e:upper}.
For every instance $s_i$, let $\mu_i=\frac{1}{r}\sum_{j=1}^rs_{ij}$ denote its mean.

\begin{lemma}\label{lmm:error}
$\fsnorm{\bar A-\bar A_1}+\epsilon\fsnorm{\bar A} \leq \sum_{i=1}^n\norm{s_i-\mu_i\1}_2^2 + (4+C)\beta n$.
\end{lemma}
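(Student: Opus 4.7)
The plan is to upper bound the two terms on the left-hand side separately, and then add them. The first term $\fsnorm{\bar A - \bar A_1}$ measures the error of the \emph{optimal} rank-1 approximation, so any concrete rank-1 matrix gives an upper bound on it. The second term $\epsilon\fsnorm{\bar A}$ will be bounded using the entry-wise $\{1,2\}$ structure of the top $n$ rows and the known magnitude $M=\sqrt{Cn}$ of the last row, then collapsed using $r = \beta/\epsilon$.

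For the first step, I would exhibit the rank-1 matrix $W = u\1^T$, where $u \in \R^{n+1}$ has coordinates $u_i = \mu_{\pi^{-1}(i)}$ for $i \in [n]$ (so that the $i$th row is $\mu\1^T$ for the appropriate permuted instance) and $u_{n+1} = M$. Since $W$ has rank $1$ and the last row of $\bar A$ already equals $M\1^T$, the error against $\bar A$ contributed by the last row is zero, and the error from the top $n$ rows is exactly $\sum_{i=1}^n \norm{s_i - \mu_i\1}_2^2$. Thus
\[
\fsnorm{\bar A - \bar A_1} \leq \fsnorm{\bar A - W} = \sum_{i=1}^n \norm{s_i - \mu_i\1}_2^2.
\]

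For the second step, since each $s_{ij}\in\{1,2\}$ we have $\norm{s_i}_2^2 \leq 4r$ for every $i\in[n]$, and the added row contributes exactly $rM^2 = rCn$. Therefore $\fsnorm{\bar A} \leq 4nr + Cnr = (4+C)nr$. Substituting $r = \beta/\epsilon$ gives $\epsilon\fsnorm{\bar A} \leq (4+C)\beta n$. Adding this to the first bound yields the claimed inequality. There is no real obstacle here: the construction of $W$ is forced once one notices that appending the constant row $M\1^T$ costs nothing for a rank-1 approximation built on the direction $\1$, and the rest is a direct entry-count calculation calibrated so that the chosen $r = \beta/\epsilon$ cancels the $\epsilon$ factor cleanly.
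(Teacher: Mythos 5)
Your proof is correct and matches the paper's argument essentially exactly: the paper also bounds $\fsnorm{\bar A-\bar A_1}$ by comparison with the rank-$1$ matrix whose rows are the row means times $\1^T$ (your $W$ is exactly that matrix, with the $(n+1)$th row $M\1^T$ contributing zero error), and bounds $\epsilon\fsnorm{\bar A}$ by the same entry-wise count $4nr + Crn$ combined with $\epsilon r=\beta$.
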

\begin{proof}
Let $\bar A^*$ be the matrix in which each row equals $\1$ times the mean of the corresponding row of $A$. 
Then $\fsnorm{\bar A-\bar A_1}\leq\fsnorm{\bar A-\bar A^*} =  \sum_{i=1}^n\norm{s_i-\mu_i\1}_2^2$,
where the first inequality is since $\bar A^*$ has rank $1$ (each of its rows is a multiple of $\1$). 
Note that the sum ranges only up to $n$ and not $n+1$, since in the $(n+1)$th row all entries are equal (to $M$) and thus it contributes $0$ to $\fsnorm{\bar A-\bar A^*}$.
This bounds the first summand in the lemma.
To bound the second summand, note that each entry in the first $n$ rows of $\bar A$ is at most $2$, thus contributing in total $4rn$ to $\fsnorm{\bar A}$. The final row contributes $rM^2=Crn$.
Recalling that $\epsilon r=\beta$, we have $\epsilon\fsnorm{A}\leq (4+C)\beta n$.
\end{proof}

\begin{corollary}\label{cor:error}
$\fsnorm{\bar A-\bar A_1}+\epsilon\fsnorm{\bar A} \leq \tfrac14nr + (4+C)\beta n$.
\end{corollary}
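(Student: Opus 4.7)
The plan is to combine Lemma~\ref{lmm:error} with a uniform per-row bound $\norm{s_i-\mu_i\1}_2^2 \leq r/4$, which collapses the first summand to $\tfrac14 nr$ and leaves the additive $(4+C)\beta n$ term untouched.

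To establish the per-row bound, I would use the standard fact that the empirical mean minimizes the sum of squared deviations, i.e., $\mu_i = \argmin_{c \in \R}\sum_{j=1}^r (s_{ij}-c)^2$. Hence for any fixed $c$, $\norm{s_i-\mu_i\1}_2^2 \leq \sum_{j=1}^r (s_{ij}-c)^2$. Plugging in $c=3/2$ (the midpoint of the alphabet $\{1,2\}$), every entry satisfies $(s_{ij}-3/2)^2 = 1/4$ since $s_{ij}\in\{1,2\}$, so $\sum_{j=1}^r (s_{ij}-3/2)^2 = r/4$. This gives $\norm{s_i-\mu_i\1}_2^2 \leq r/4$ for every $i\in[n]$, and summing over $i$ yields $\sum_{i=1}^n \norm{s_i-\mu_i\1}_2^2 \leq nr/4$. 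Chaining this with Lemma~\ref{lmm:error} gives the corollary.

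There is no real obstacle here — the proof is a one-line consequence of Lemma~\ref{lmm:error} combined with the optimality of the mean and the fact that the alphabet has diameter $1$. The only mild subtlety is that one should invoke the optimality of the mean (rather than, say, trying to compute $\mu_i$ explicitly), since $\mu_i$ is a data-dependent random variable; using the comparison $\mu_i$ vs.\ $3/2$ makes the bound hold deterministically for every realization of $S$.
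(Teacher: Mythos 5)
Your proof is correct and follows exactly the paper's argument: bound $\norm{s_i-\mu_i\1}_2^2$ by $\norm{s_i-1.5\cdot\1}_2^2=\tfrac14 r$ using the optimality of the mean, sum over rows, and plug into Lemma~\ref{lmm:error}. Nothing to add.
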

\begin{proof}
For every $s_i$, its mean $\mu_i$ minimizes the sum of squared differences from a single value, namely
$\norm{s_i-\mu_i\1}_2^2=\min_{\nu\in\R}\norm{s_i-\nu\1}_2^2$.
In particular, $\norm{s_i-\mu_i\1}_2^2\leq\norm{s_i-1.5\cdot\1}_2^2$.
Furthermore, since $s_i\in\{1,2\}^r$, we have $\norm{s_i-1.5\cdot\1}_2^2=r\cdot(\tfrac12)^2=\tfrac14r$.
Hence $\sum_{i=1}^n\norm{s_i-\mu_i\1}_2^2\leq\tfrac14nr$, and the corollary follows from Lemma~\ref{lmm:error}.
\end{proof}

\subsection{Invoking the Algorithm}\label{sec:invoking}
Suppose we have a deterministic algorithm that given $\bar A$, returns $\bar A'=\bar a b^T$, where $\bar a\in\R^{n+1}$ and $b\in\R^r$, such that
\begin{equation}\label{eq:alg}
\fsnorm{\bar A-\bar ab^T}\leq\fsnorm{\bar A-\bar A_1}+\epsilon\fsnorm{\bar A} .
\end{equation}

Let $a\in\R^n$ denote the restriction of $\bar a$ to the first $n$ entries.
By scaling (i.e., multiplying $\bar a$ by a constant and $b$ by its reciprocal), we can assume w.l.o.g.~that $a_{n+1}=M$.
Since the $(n+1)$th row of $\bar A$ equals $M\cdot\1$, we have
\[ \fsnorm{\bar A-\bar ab^T} = \fsnorm{A-ab^T} + \norm{M\cdot\1-a_{n+1}b}_2^2 = \fsnorm{A-ab^T} + M^2\norm{\1-b}_2^2 . \]
If we rearrange this, and use~\Cref{eq:alg} and Corollary~\ref{cor:error} as an upper bound on $\fsnorm{\bar A-\bar ab^T}$ and Lemma~\ref{lmm:rk1} as a lower bound on $\fsnorm{A-ab^T}$, we get $M^2\norm{\1-b}_2^2 \leq (4+C)\beta n + O(n)$. Plugging $M=\sqrt{Cn}$,
\begin{equation}\label{eq:1b}
  \norm{\1-b}_2^2 \leq \left(\frac4C+1\right)\beta + \frac{O(1)}{C} = \frac{O(1)}{C} .
\end{equation}
This facts yields the following two lemmas, whose full proofs appear in the appendix.

\begin{lemma}\label{lmm:bproj}
We have $1-\eta/\sqrt{r} \leq \norm{b^T\P}/\norm{\1} \leq 1+\eta/\sqrt{r}$, 
where $\eta>0$ is a constant that can be made arbitrarily small by choosing $C>0$ sufficiently large.
\end{lemma}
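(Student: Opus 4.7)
\textbf{Proof plan for Lemma \ref{lmm:bproj}.}

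The plan is to reduce the claim to a direct computation using the bound $\norm{\1-b}_2^2 \leq O(1)/C$ from \Cref{eq:1b}. First I would expand the projection explicitly: since $\P$ is the orthogonal projection onto $\mathrm{span}(\1)$, it is symmetric and satisfies $\P b = \frac{\langle b,\1\rangle}{\norm{\1}_2^2}\1$, so
\[
\frac{\norm{b^T\P}_2}{\norm{\1}_2} \;=\; \frac{\norm{\P b}_2}{\norm{\1}_2} \;=\; \frac{|\langle b,\1\rangle|}{\norm{\1}_2^2} \;=\; \frac{|\langle b,\1\rangle|}{r}.
\]
So the task reduces to showing that $|\langle b,\1\rangle|/r$ lies in $[1-\eta/\sqrt r,\,1+\eta/\sqrt r]$.

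Next I would control $\langle b,\1\rangle$ by comparing it to $\langle \1,\1\rangle = r$. Writing $\langle b,\1\rangle - r = \langle b-\1,\1\rangle$ and applying Cauchy--Schwarz,
\[
|\langle b,\1\rangle - r| \;\leq\; \norm{b-\1}_2\cdot\norm{\1}_2 \;\leq\; \sqrt{O(1)/C}\cdot\sqrt{r},
\]
using \Cref{eq:1b}. Dividing by $r$,
\[
\left|\frac{\langle b,\1\rangle}{r} - 1\right| \;\leq\; \frac{\sqrt{O(1)/C}}{\sqrt{r}} \;=\; \frac{\eta}{\sqrt r},
\]
where $\eta := \sqrt{O(1)/C}$, which can be made arbitrarily small by choosing $C$ sufficiently large. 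For such $C$, the quantity $\langle b,\1\rangle/r$ is positive, so its absolute value equals itself and the two-sided bound in the lemma follows immediately.

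The argument is essentially a one-line computation plus Cauchy--Schwarz, so there is no real obstacle; the only subtlety is verifying that the norm-versus-projection bookkeeping leaves the correct $1/\sqrt{r}$ factor (rather than, say, $1/r$), which is precisely what makes this lemma useful downstream -- the distance $\norm{\1-b}$ of order $1/\sqrt{C}$ translates into a relative error of order $1/(\sqrt{C}\cdot\sqrt{r})$ on the projection ratio, because we are comparing an inner product with $\1$ (scaling like $r$) against a norm bound (scaling like $\sqrt{r}$).
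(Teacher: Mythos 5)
Your proof is correct and rests on the same ingredient as the paper's: the bound $\norm{\1-b}_2 \leq O(1)/\sqrt{C}$ from \Cref{eq:1b}, converted into a two-sided estimate on the projection of $b$ onto $\mathrm{span}(\1)$. The paper reaches the equivalent bound $\norm{\1}_2 - \norm{\1-b}_2 \leq \norm{b^T\P} \leq \norm{\1}_2 + \norm{\1-b}_2$ via the triangle inequality and a slightly more roundabout manipulation of $b^T\1$; your direct computation of $\norm{b^T\P}/\norm{\1} = |\langle b,\1\rangle|/r$ followed by Cauchy--Schwarz is the same argument in a cleaner form.
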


\begin{lemma}\label{lmm:a}
$\norm{a}=O(\sqrt{n})$.
\end{lemma}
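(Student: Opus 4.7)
The plan is to bound $\|ab^T\|_F = \|a\|\cdot\|b\|$ from above via a triangle inequality against the matrix $A$, and then divide by the lower bound on $\|b\|$ supplied by Lemma~\ref{lmm:bproj}. Since $\bar A$ has its last row equal to $M\cdot\1$ and $a_{n+1}=M$, the approximation error splits as
\[
\fsnorm{\bar A-\bar a b^T} \;=\; \fsnorm{A-ab^T} \;+\; M^2\norm{\1-b}_2^2,
\]
so in particular $\fsnorm{A-ab^T}\le\fsnorm{\bar A-\bar a b^T}$, which by~\Cref{eq:alg} and Corollary~\ref{cor:error} is at most $\tfrac14 nr+(4+C)\beta n = O(nr)$ (absorbing constants that depend on $C$ and $\beta$).

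Next I would bound $\fsnorm{A}$ from above: all entries of $A$ lie in $\{1,2\}$, so $\fsnorm{A}\le 4nr$. By the triangle inequality for the Frobenius norm,
\[
\norm{a}\cdot\norm{b} \;=\; \|ab^T\|_F \;\le\; \|A\|_F + \|A-ab^T\|_F \;\le\; 2\sqrt{nr}+O(\sqrt{nr}) \;=\; O(\sqrt{nr}).
\]

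Finally, I would lower bound $\|b\|$ using the results already established. By Lemma~\ref{lmm:bproj} we have $\|b^T\P\|\ge(1-\eta/\sqrt r)\|\1\|=(1-o(1))\sqrt r$, and since $\|b\|\ge\|b^T\P\|$, this gives $\|b\|=\Omega(\sqrt r)$ (alternatively, one can derive this directly from~\Cref{eq:1b} via $\|b\|\ge\|\1\|-\|\1-b\|\ge\sqrt r-O(1/\sqrt C)$, taking $C$ large enough). Dividing the two bounds yields
\[
\norm{a} \;\le\; \frac{O(\sqrt{nr})}{\Omega(\sqrt r)} \;=\; O(\sqrt n),
\]
as claimed.

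The argument is essentially a one-line calculation once the ingredients are in place; the only mild subtlety is making sure every bound is invoked under the right probabilistic event. Specifically, the upper bound on $\fsnorm{\bar A-\bar a b^T}$ comes from the algorithm's guarantee~\Cref{eq:alg} together with the deterministic Corollary~\ref{cor:error}, so it holds for every realization in the support; and the lower bound on $\|b\|$ from Lemma~\ref{lmm:bproj} holds under the same event $\Psi_{\text{typical}}\wedge\Psi_{\text{spectral}}$ already used throughout Section~\ref{sec:invoking}. Thus no new probabilistic work is needed, and the main obstacle--indeed the only real choice--is to set $C$ sufficiently large so that the constant hidden in $\Omega(\sqrt r)$ for $\|b\|$ is bounded away from zero.
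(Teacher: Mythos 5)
Your proposal is correct and follows essentially the same route as the paper: upper-bound $\fsnorm{A-ab^T}$ via~\Cref{eq:alg} and Corollary~\ref{cor:error}, combine with $\fsnorm{A}\leq 4nr$ to get $\norm{ab^T}_F=O(\sqrt{nr})$, and divide by the lower bound $\norm{b}_2=\Omega(\sqrt r)$ obtained from~\Cref{eq:1b}. The only cosmetic difference is that you use the plain Frobenius triangle inequality where the paper uses its squared variant (\Cref{clm:triangle}); both are fine.
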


\subsection{Solving Majority}
We now show how to use $\bar A'=\bar ab^T$ to solve the majority instance $S$ of the problem in Lemma~\ref{lmm:lb}.
We condition on the intersection of the events $\Psi_{\text{typical}}$ and $\Psi_{\text{spectral}}$.
By Corollary~\ref{cor:spectral} it occurs with probability at least $1-10\delta-o(1)$.

Let $s_1,\ldots,s_n\in\{1,2\}^r$ denote the random instances in $S$.
Recall that we assigned them to rows of $A$ by a uniformly random permutation $\pi$, that is, the $\pi(i)$th row of $A$ equals $s_i$.

We use $a$ to solve the majority problem as follows.
For each $s_i$, if $a_{\pi(i)}\leq1.5$ then we output that the majority is $1$, and otherwise we output that the majority is $2$.
We say that $a$~\emph{solves} the instance $s_i$ if the output is correct.
Due to $\pi$ being random, the probability that $A$ solves any instance $s_i$ is identical.
Denote this probability by $p$. We need to show that $p\geq2/3$.

Assume by contradiction that $p<2/3$.
By Markov's inequality, with probability at least $4/5$ we have at least $n/6$ unsolved instances.
Since by $\Psi_{\text{typical}}$ there are only $0.1n$ atypical instances, we have at least $n/15$ unsolved typical instances.
Denote by $S'$ the set of unsolved typical instances.
Consider such instance $s_i\in S'$. Suppose its majority element is $1$. Then, since it is typical,
  \begin{align}
    \norm{s_i-\mu_i\1}_2^2 &\leq \norm{s_i-(1.5-\gamma/\sqrt{r})\1}_2^2 \nonumber \\
    &\leq \left(\tfrac12r+\gamma\sqrt{r}\right)\left(\tfrac12-\gamma/\sqrt{r}\right)^2 + \left(\tfrac12r-\gamma\sqrt{r}\right)\left(\tfrac12+\gamma/\sqrt{r}\right)^2 \nonumber \\
    &= \tfrac14r-\gamma^2. \label{eq:typical}
  \end{align}
On the other hand, since $s_i$ is unsolved then $a_{\pi(i)}\geq1.5$.
Hence by Lemma~\ref{lmm:bproj}, $\frac{\norm{b^T\P}_2}{\norm{\1}_2}\cdot a_{\pi(i)}\geq1.5-\eta/\sqrt{r}$. Therefore, noting that $b^T\P = \frac{\norm{b^T\P}_2}{\norm{\1}_2}\cdot\1$, we have
\begin{align}
  \norm{s_i - a_{\pi(i)}b^T\P}_2^2 &= \norm{s_i - \frac{\norm{b^T\P}_2}{\norm{\1}_2}\cdot a_{\pi(i)}\1}_2^2 \nonumber \\
  & \geq \norm{s_i - (1.5-\tfrac{\eta}{\sqrt{r}})\cdot\1}_2^2 \nonumber \\
  &\geq \left(\tfrac12r+\gamma\sqrt{r}\right)\left(\tfrac12-\eta/\sqrt{r}\right)^2 + \left(\tfrac12r-\gamma\sqrt{r}\right)\left(\tfrac12+\eta/\sqrt{r}\right)^2 \nonumber \\
  &= \tfrac14r + \eta^2 - 2\gamma\eta . \label{eq:unsolved}
\end{align}
Similar calculations yield the same bounds when the majority element is $2$.
From~\Cref{eq:typical}, together with~\Cref{eq:alg} and Lemma~\ref{lmm:error}, we get:
\begin{equation}\label{eq:ub}
  \fsnorm{\bar A-\bar ab^T} \leq \sum_{i=1}^n\norm{s_i-\mu_i\1}_2^2 \leq \tfrac{1}{15}n(\tfrac14r-\gamma^2) + \sum_{s_i\notin S'}\norm{s_i-\mu_i\1}_2^2 + (4+C)\beta n . 
\end{equation}
On the other hand, by~\Cref{eq:unsolved},
\begin{equation}\label{eq:lb}
  \fsnorm{A-ab^T\P} = \sum_{i=1}^n\norm{s_i-a_{\pi(i)}b^T\P}_2^2 \geq \tfrac{1}{15}n(\tfrac14r + \eta^2 - 2\gamma\eta) + \sum_{s_i\notin S'}\norm{s_i-\mu_i\1}_2^2 . 
\end{equation}
It remains to relate~\Cref{eq:ub,eq:lb} to derive a contradiction.
By the Pythagorean identity,
$\fsnorm{A - a b^T} = \fsnorm{A\P - a b^T\P} + \fsnorm{A\P^\perp - a b^T\P^\perp}$,
and
\[
  \fsnorm{A - a b^T\P} = \fsnorm{A\P - a b^T\P^2} + \fsnorm{A\P^\perp - a b^T\P\P^\perp} = \fsnorm{A\P - a b^T\P} + \fsnorm{A\P^\perp} .
\]
Together,
$\fsnorm{A - a b^T\P} = \fsnorm{A - a b^T} + \left(\fsnorm{A\P^\perp} - \fsnorm{A\P^\perp - a b^T\P^\perp}\right)$.
Let us upper-bound both terms.
For the first term, we simply use $\fsnorm{A - a b^T}\leq\fsnorm{\bar A - \bar a b^T}$.
For the second term, note that
$\norm{b^T\P^\perp}_2 = \norm{\1\P^\perp - b^T\P^\perp}_2 \leq \norm{\1- b}_2$.
Together with Lemma~\ref{lmm:a},
$\norm{ab^T\P^\perp}_2\leq O(\sqrt n)\cdot\norm{1-b}_2$.
Thus by Lemma~\ref{lmm:rk11},
$\left(\fsnorm{A\P^\perp} - \fsnorm{A\P^\perp - a b^T\P^\perp}\right)\leq O(n)\cdot\norm{\1-b}_2$.
By~\Cref{eq:1b}, the latter is $O(n)/\sqrt{C}$.
Plugging both upper bounds,
\[
  \fsnorm{A - a b^T\P} \leq \fsnorm{\bar A - \bar a b^T} + O(n)\cdot C^{-1/2}.
\]
This relates~\Cref{eq:ub,eq:lb}, yielding
\[
  \tfrac{1}{15}(\gamma^2 + \eta^2) \leq \tfrac{2}{15}\cdot \gamma\eta + (4+C)\beta + O(1)\cdot C^{-1/2}.
\]
Since $\gamma$ is fixed, choosing $\beta,\eta$ sufficiently small and $C$ sufficiently large leads to a contradiction.

Thus $p\geq2/3$, meaning the reduction solves each instance in the majority problem $S$ with probability at least $2/3$.
Accounting for the conditioning on $\Psi_{\text{typical}}$ and $\Psi_{\text{spectral}}$, the determined low-rank approximation algorithm from~\Cref{sec:invoking} solves a random instance of Lemma~\ref{lmm:lb} with probability at least $2/3-10\delta-o(1)$ (the constants can be scaled without changing the lower bound).
Hence, it requires reading at least $\Omega(n/\epsilon)$ bits from the matrix, which proves~\Cref{thm:lb1}.

\section{Experiments}

In this section, we evaluate the empirical performance of~\Cref{alg:main} compared to the existing methods in the literature: 
conventional SVD, the algorithm of~\cite{bakshi2018sublinear} (BW), and the input-sparsity time algorithm of Clarkson and Woodruff~\cite{clarkson2017low} (IS).   
For SVD we use numpy's linear algebra package.\footnote{\href{https://docs.scipy.org/doc/numpy-1.15.1/reference/routines.linalg.html}{https://docs.scipy.org/doc/numpy-1.15.1/reference/routines.linalg.html}. This performs full SVD. The iterative SVD algorithms built into MATLAB and Python yielded errors larger by a few orders of magnitude than the reported methods, so they are not included.} 
%
The experimental setup is analogous to that in~\cite{bakshi2018sublinear}. Specifically,  we consider two datasets:
\begin{itemize}
\item{\bf Synthetic clustering dataset.} This data set is generated using the  {\tt scikit-learn} 
package. We generate $10,000$ points with $200$ features and partition the points into $20$ clusters. As observed in our experiments, the dataset is expected to have a good rank-$20$ approximation.
\item{\bf MNIST dataset.} The dataset contains $70,000$ handwritten characters, and each is considered a point. We subsample $10,000$ points.
\end{itemize}
For each dataset we construct a symmetric distance matrix $A_{i,j} = \dist(p_i, p_j)$.
We use four distances $\dist$: Manhattan ($\ell_1$), Euclidean ($\ell_2$), Chebyshev ($\ell_{\infty}$) and Canberra\footnote{The Canberra distance $\dist_c$ between vectors $p, q \in \mathbb{R}^n$ is defined as $\dist_c(p,q) = \sum_{i=1}^n {|p_i - q_i| \over |p_i| + |q_i|}$.} ($\ell_c$).
Figures~\ref{fig:synthetic} and~\ref{fig:mnist} show the approximation error for each distance on each dataset, for varying values of the rank $k$. Note that SVD achieves the optimal approximation error.
Table~\ref{tab:synthetic} lists the running times for $k=40$.
Figure~\ref{fig:asymptotic} shows the  running time of our algorithm for MNIST subsampled to varying sizes, for $k=40$.


\begin{figure}[htbp]
\centering
	\caption{The approximation error of the four algorithms on the synthetic clustering dataset.}
	\label{fig:synthetic}
	{%
		\begin{minipage}{.48\textwidth}%
			\includegraphics[width=\textwidth]{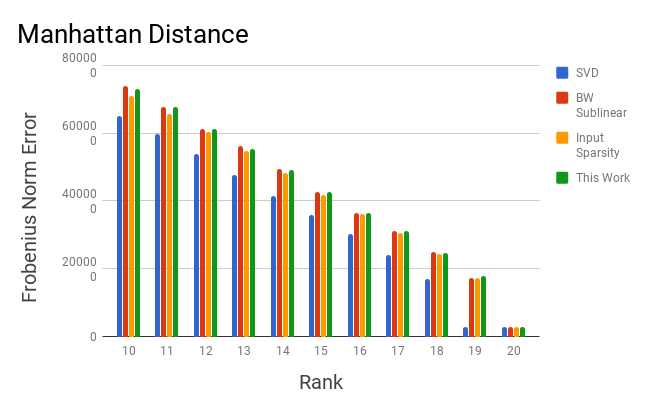}
		\end{minipage}\quad 
		\begin{minipage}{.48\textwidth}%
			\includegraphics[width=\textwidth]{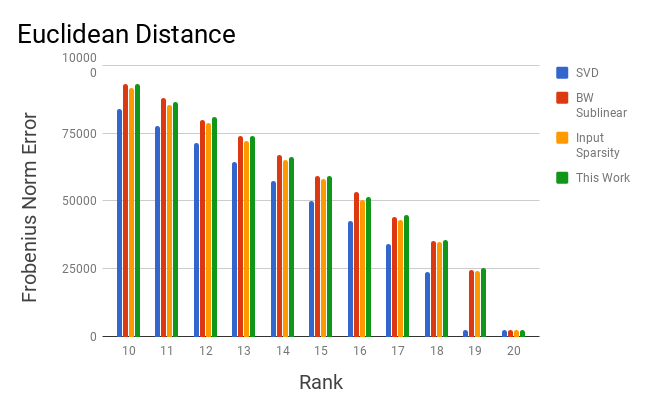}
		\end{minipage}\quad 
		\begin{minipage}{.48\textwidth}%
			\includegraphics[width=\textwidth]{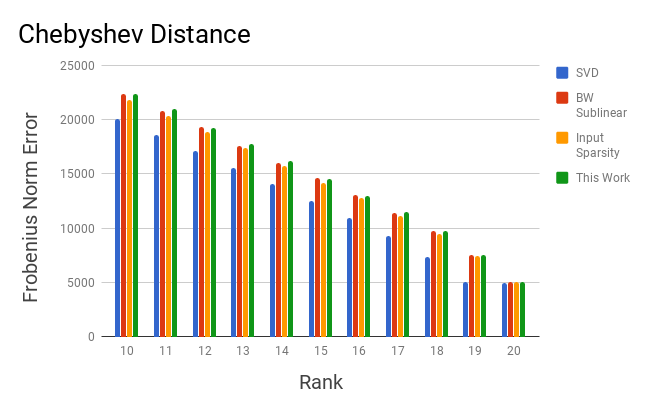}
		\end{minipage}\quad 
		\begin{minipage}{.48\textwidth}%
			\includegraphics[width=\textwidth]{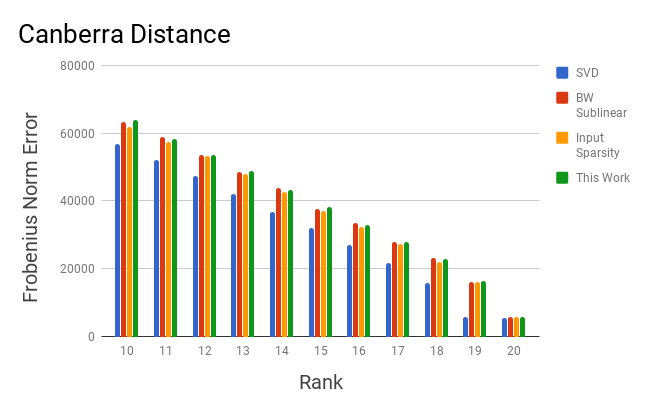}
		\end{minipage}\quad 
	}
\end{figure}

\begin{figure}[htbp]
\centering
	\caption{The approximation error of the four algorithms on the MNIST dataset.}
	\label{fig:mnist}
	{%
		\begin{minipage}{.48\textwidth}%
			\includegraphics[width=\textwidth]{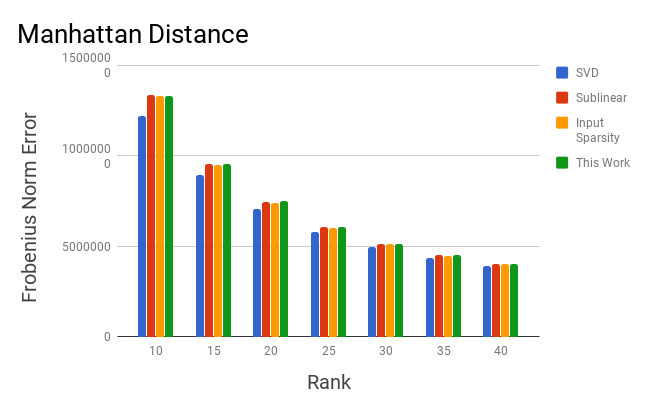}
		\end{minipage}\quad 
		\begin{minipage}{.48\textwidth}%
			\includegraphics[width=\textwidth]{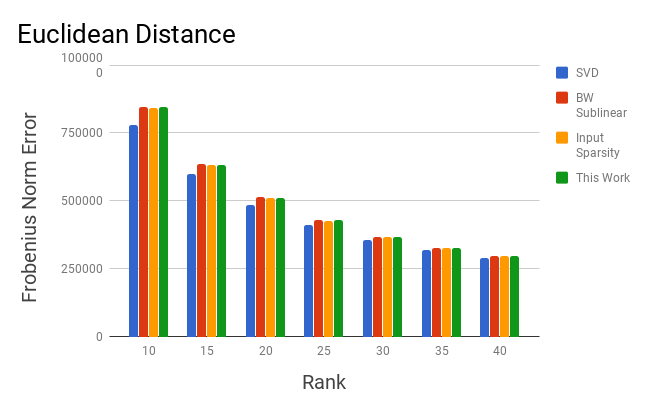}
		\end{minipage}\quad 
		\begin{minipage}{.48\textwidth}%
			\includegraphics[width=\textwidth]{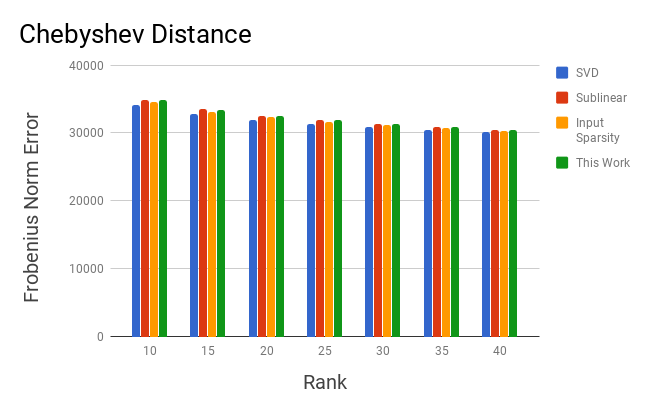}
		\end{minipage}\quad 
		\begin{minipage}{.48\textwidth}%
			\includegraphics[width=\textwidth]{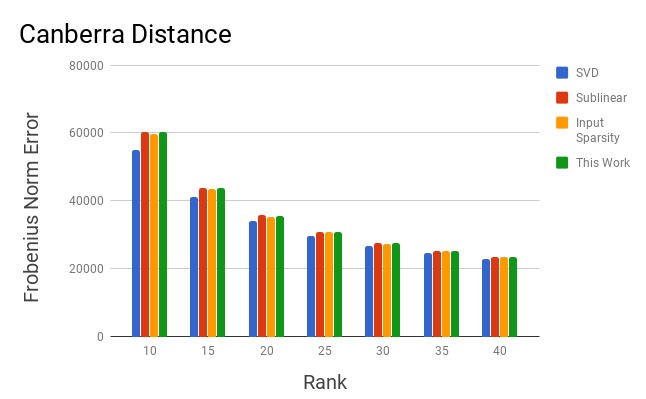}
		\end{minipage}\quad 
	}
\end{figure}

\begin{table}[htbp]
\centering
	\caption{Running times (in seconds) of the compared methods for rank $k=40$ approximation}
	\label{tab:synthetic}
	{%
		\begin{tabular}{|c||cccc||cccc|}
		    \hline
		    & \multicolumn{4}{|c||}{\bfseries Synthetic} & \multicolumn{4}{|c|}{\bfseries MNIST} \\
			\hline
			\bfseries Metric & \bfseries SVD & \bfseries IS & \bfseries BW & \bfseries This Work & \bfseries SVD & \bfseries IS & \bfseries BW & \bfseries This Work\\
			\hline
			$\ell_2$ & 398.77 &	8.95 & 1.70 & 1.17& 398.50 & 34.32 & 4.17 & 1.23\\
			$\ell_1$ & 410.60 & 8.16 & 1.82 & 1.197 & 560.91 & 39.50 & 3.71 & 1.23\\
			$\ell_\infty$ & 427.90 & 9.18 & 1.63 & 1.16 & 418.01 & 39.33 & 4.00 & 1.14\\
			$\ell_c$ & 452.17 & 8.49 & 1.76 & 1.15 & 390.07 & 38.34 & 3.91 & 1.24 \\
			\hline
		\end{tabular}
	}
\end{table}

\begin{figure}[htbp]
\centering
	\caption{Running time of our algorithm on subsets of MNIST, for $k=40$.}
	\label{fig:asymptotic}
	{\includegraphics[width=0.44\textwidth]{}}
\end{figure}



\paragraph{Acknowledgments.} P.~Indyk, A.~Vakilian and T.~Wagner were supported by funds from the MIT-IBM Watson AI Lab, NSF, and Simons Foundation. D.~Woodruff was supported partly by the National Science Foundation under Grant No. CCF-1815840, and this work was done partly while he was visiting the Simons Institute for the Theory of Computing. The authors would also like to thank Ainesh Bakshi for implementing the algorithm in this paper and producing our empirical results.

\bibliographystyle{amsalpha}
\bibliography{indyk19}

\providecommand{\bysame}{\leavevmode\hbox to3em{\hrulefill}\thinspace}
\providecommand{\MR}{\relax\ifhmode\unskip\space\fi MR }
\providecommand{\MRhref}[2]{%
  \href{http://www.ams.org/mathscinet-getitem?mr=#1}{#2}
}
\providecommand{\href}[2]{#2}
\begin{thebibliography}{21}
\providecommand{\natexlab}[1]{#1}
\providecommand{\url}[1]{\texttt{#1}}
\expandafter\ifx\csname urlstyle\endcsname\relax
  \providecommand{\doi}[1]{doi: #1}\else
  \providecommand{\doi}{doi: \begingroup \urlstyle{rm}\Url}\fi

\bibitem[Bakshi and Woodruff(2018)]{bakshi2018sublinear}
Ainesh Bakshi and David Woodruff.
\newblock Sublinear time low-rank approximation of distance matrices.
\newblock In \emph{Advances in Neural Information Processing Systems}, pages
  3786--3796, 2018.

\bibitem[Braverman and Rao(2014)]{braverman2014information}
Mark Braverman and Anup Rao.
\newblock Information equals amortized communication.
\newblock \emph{IEEE Transactions on Information Theory}, 60\penalty0
  (10):\penalty0 6058--6069, 2014.

\bibitem[Braverman et~al.(2016)Braverman, Garg, Pankratov, and
  Weinstein]{braverman2016information}
Mark Braverman, Ankit Garg, Denis Pankratov, and Omri Weinstein.
\newblock Information lower bounds via self-reducibility.
\newblock \emph{Theory of Computing Systems}, 59\penalty0 (2):\penalty0
  377--396, 2016.

\bibitem[Cand{\`e}s and Recht(2009)]{candes2009exact}
Emmanuel~J Cand{\`e}s and Benjamin Recht.
\newblock Exact matrix completion via convex optimization.
\newblock \emph{Foundations of Computational mathematics}, 9\penalty0
  (6):\penalty0 717, 2009.

\bibitem[Chen(2009)]{chen2009coresets}
Ke~Chen.
\newblock On coresets for k-median and k-means clustering in metric and
  euclidean spaces and their applications.
\newblock \emph{SIAM Journal on Computing}, 39\penalty0 (3):\penalty0 923--947,
  2009.

\bibitem[Chen and Price(2017)]{chen2017condition}
Xue Chen and Eric Price.
\newblock Condition number-free query and active learning of linear families.
\newblock \emph{arXiv preprint arXiv:1711.10051}, 2017.

\bibitem[Clarkson and Woodruff(2017)]{clarkson2017low}
Kenneth~L Clarkson and David~P Woodruff.
\newblock Low-rank approximation and regression in input sparsity time.
\newblock \emph{Journal of the ACM (JACM)}, 63\penalty0 (6):\penalty0 54, 2017.
\newblock (first appeared in STOC'13).

\bibitem[Dokmanic et~al.(2015)Dokmanic, Parhizkar, Ranieri, and
  Vetterli]{dokmanic2015euclidean}
Ivan Dokmanic, Reza Parhizkar, Juri Ranieri, and Martin Vetterli.
\newblock Euclidean distance matrices: essential theory, algorithms, and
  applications.
\newblock \emph{IEEE Signal Processing Magazine}, 32\penalty0 (6):\penalty0
  12--30, 2015.

\bibitem[Feldheim and Sodin(2010)]{feldheim2010universality}
Ohad~N Feldheim and Sasha Sodin.
\newblock A universality result for the smallest eigenvalues of certain sample
  covariance matrices.
\newblock \emph{Geometric And Functional Analysis}, 20\penalty0 (1):\penalty0
  88--123, 2010.

\bibitem[Frieze et~al.(2004)Frieze, Kannan, and Vempala]{frieze2004fast}
Alan Frieze, Ravi Kannan, and Santosh Vempala.
\newblock Fast monte-carlo algorithms for finding low-rank approximations.
\newblock \emph{Journal of the ACM (JACM)}, 51\penalty0 (6):\penalty0
  1025--1041, 2004.

\bibitem[Hoffman and Wielandt(2003)]{hoffman2003variation}
Alan~J Hoffman and Helmut~W Wielandt.
\newblock The variation of the spectrum of a normal matrix.
\newblock In \emph{Selected Papers Of Alan J Hoffman: With Commentary}, pages
  118--120. World Scientific, 2003.

\bibitem[Holm and Sander(1993)]{holm1993protein}
Liisa Holm and Chris Sander.
\newblock Protein structure comparison by alignment of distance matrices.
\newblock \emph{Journal of molecular biology}, 233\penalty0 (1):\penalty0
  123--138, 1993.

\bibitem[Indyk(1999)]{indyk1999sublinear}
Piotr Indyk.
\newblock A sublinear time approximation scheme for clustering in metric
  spaces.
\newblock In \emph{Foundations of Computer Science, 1999. 40th Annual Symposium
  on}, pages 154--159. IEEE, 1999.

\bibitem[Mahoney(2011)]{mahoney2011randomized}
Michael~W Mahoney.
\newblock Randomized algorithms for matrices and data.
\newblock \emph{Foundations and Trends{\textregistered} in Machine Learning},
  3\penalty0 (2):\penalty0 123--224, 2011.

\bibitem[Musco and Woodruff(2017)]{musco2017sublinear}
Cameron Musco and David~P Woodruff.
\newblock Sublinear time low-rank approximation of positive semidefinite
  matrices.
\newblock In \emph{Foundations of Computer Science (FOCS), 2017 IEEE 58th
  Annual Symposium on}, pages 672--683. IEEE, 2017.

\bibitem[Rudelson and Vershynin(2010)]{rudelson2010non}
Mark Rudelson and Roman Vershynin.
\newblock Non-asymptotic theory of random matrices: extreme singular values.
\newblock In \emph{Proceedings of the International Congress of Mathematicians
  2010 (ICM 2010) (In 4 Volumes) Vol. I: Plenary Lectures and Ceremonies Vols.
  II--IV: Invited Lectures}, pages 1576--1602. World Scientific, 2010.

\bibitem[Tao(2010)]{taoblog}
Terence Tao.
\newblock 254a, notes 3a: Eigenvalues and sums of hermitian matrices.
\newblock
  \url{https://terrytao.wordpress.com/2010/01/12/254a-notes-3a-eigenvalues-and-sums-of-hermitian-matrices},
  2010.

\bibitem[Tenenbaum et~al.(2000)Tenenbaum, De~Silva, and
  Langford]{tenenbaum2000global}
Joshua~B Tenenbaum, Vin De~Silva, and John~C Langford.
\newblock A global geometric framework for nonlinear dimensionality reduction.
\newblock \emph{science}, 290\penalty0 (5500):\penalty0 2319--2323, 2000.

\bibitem[Thompson(1976)]{thompson1976behavior}
RC~Thompson.
\newblock The behavior of eigenvalues and singular values under perturbations
  of restricted rank.
\newblock \emph{Linear Algebra and its Applications}, 13\penalty0
  (1-2):\penalty0 69--78, 1976.

\bibitem[Weinberger and Saul(2006)]{weinberger2006unsupervised}
Kilian~Q Weinberger and Lawrence~K Saul.
\newblock Unsupervised learning of image manifolds by semidefinite programming.
\newblock \emph{International journal of computer vision}, 70\penalty0
  (1):\penalty0 77--90, 2006.

\bibitem[Woodruff(2014)]{woodruff2014sketching}
David~P Woodruff.
\newblock Sketching as a tool for numerical linear algebra.
\newblock \emph{Foundations and Trends{\textregistered} in Theoretical Computer
  Science}, 10\penalty0 (1--2):\penalty0 1--157, 2014.

\end{thebibliography}

\appendix

\section{Deferred Proofs from~\Cref{sec:lb}}

\subsection{Preliminary Lemmas}
The following lemmas are known and we include their proofs for completeness.

\begin{lemma}[hardness of majority, Lemma~\ref{lmm:lb} restated]
Let $r,t>0$ be integers.
Any deterministic algorithm that gets a uniformly random matrix $S\in\{0,1\}^{t\times r}$ as input, and outputs $s^*\in\{0,1\}^t$ such that for every $i\in[t]$,
$\Pr[s^*(i) = \text{majority element of $i$th row of $S$}] \geq 2/3$,
must read in expectation at least $\Omega(rt)$ entries of $S$.
\end{lemma}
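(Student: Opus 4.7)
The plan is to derive the $\Omega(rt)$ bound by a reduction from the $r=1$-instance version: first prove that computing the majority of a single uniformly random $r$-bit string with probability $\geq 2/3$ requires $\Omega(r)$ queries in expectation, and then show that any algorithm solving the $t$-instance problem effectively simulates a single-instance solver on each row, so the query cost on each row is $\Omega(r)$ in expectation and the total is $\Omega(rt)$ by linearity.

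\textbf{Step 1 (single-instance bound).} I would show that any (possibly randomized, possibly adaptive) algorithm that, given a uniformly random $s\in\{0,1\}^r$, outputs the majority bit of $s$ with probability at least $2/3$, must read $\Omega(r)$ entries of $s$ in expectation. Without loss of generality fix a deterministic decision tree that makes exactly $q$ queries on every path (we recover the general case by a standard Yao/Markov averaging argument at the end). Because the underlying bits of $s$ are i.i.d.\ uniform, the sequence of values seen along any root-to-leaf path is distributed as $q$ i.i.d.\ Bernoulli$(1/2)$ bits, independent of which positions were queried. Conditional on reaching a leaf that has seen $a$ ones, the remaining $r-q$ bits sum to an independent $Y\sim\operatorname{Bin}(r-q,1/2)$, and the true majority is $1$ iff $a+Y>r/2$. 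Hence the best attainable success probability at that leaf is
\[
\max\bigl(\Pr[a+Y>r/2],\Pr[a+Y\le r/2]\bigr)\;\le\; \tfrac12+O\!\left(\tfrac{1+|a-q/2|}{\sqrt{r-q}}\right),
\]
by anti-concentration of the binomial (the local CLT bound $\Pr[Y=y]=O(1/\sqrt{r-q})$ integrated over an interval of length $|a-q/2|$). Since $a\sim\operatorname{Bin}(q,1/2)$, we have $\mathbb{E}[|a-q/2|]=O(\sqrt{q})$, and averaging over leaves gives overall success probability at most $\tfrac12+O(\sqrt{q/(r-q)})$. Setting this below $2/3$ forces $q\geq c\,r$ for a constant $c>0$. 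To pass from a worst-case query bound to an expected one, apply Markov's inequality: if the expected number of queries were $o(r)$, truncating the algorithm at, say, $r/(100)$ queries would lose only $o(1)$ in success probability and still contradict the bound above.

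\textbf{Step 2 (reduction).} Fix a deterministic algorithm $\mathcal{A}$ satisfying the hypothesis of the lemma, let $Q$ be the total number of entries it reads, and let $Q_i$ denote the number of entries it reads from row $i$, so that $Q=\sum_{i=1}^t Q_i$. For each $i\in[t]$ I construct a (randomized) single-instance algorithm $\mathcal{A}_i$: given input $s\in\{0,1\}^r$, draw the other $t-1$ rows as fresh independent uniform bitstrings, form the $t\times r$ matrix $S'$ whose $i$-th row is $s$, simulate $\mathcal{A}$ on $S'$, answering each query to row $i$ by reading from $s$ and each query to another row by reading from the internally sampled bits, and output $s^*(i)$. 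Because rows of the true input $S$ are i.i.d.\ uniform, the joint distribution of $S'$ matches that of $S$, so $\mathcal{A}_i$ succeeds on the single-instance majority problem with the same probability $\geq 2/3$ that $\mathcal{A}$ succeeds on row $i$, and its expected query count into $s$ equals $\mathbb{E}[Q_i]$.

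\textbf{Step 3 (combining).} Applying Step 1 to each $\mathcal{A}_i$ (the single-instance bound is insensitive to private randomness by averaging over it) gives $\mathbb{E}[Q_i]\geq\Omega(r)$. Summing,
\[
\mathbb{E}[Q]\;=\;\sum_{i=1}^t \mathbb{E}[Q_i]\;\geq\;\Omega(rt),
\]
which is the claimed lower bound. The main obstacle is the anti-concentration calculation in Step 1: one must be careful that the bound $\tfrac12+O(\sqrt{q/(r-q)})$ holds uniformly over adaptive algorithms, which is what the i.i.d.\ structure of the queried values guarantees; the rest of the argument is essentially bookkeeping via linearity of expectation and a distribution-preserving simulation.
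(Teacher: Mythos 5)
Your proof is correct, but it takes a genuinely different route from the paper's. The paper proves this lemma by reduction from the distributional Gap-Hamming communication problem: it invokes the known $\Omega(r)$ information-cost lower bound for a single Gap-Hamming instance, applies an off-the-shelf direct sum theorem for information cost to get $\Omega(rt)$ for the $t$-fold version, and then observes that majority over the xor matrix solves Gap-Hamming on the typical rows. You instead give a self-contained query-complexity argument: a direct anti-concentration bound showing that any adaptive decision tree making $q$ queries into a uniform $r$-bit string predicts the majority with advantage only $O(\sqrt{q/(r-q)})$ (using that the queried values are i.i.d.\ Bernoulli$(1/2)$ regardless of adaptivity, and that the unqueried bits remain an independent binomial), followed by an elementary per-row simulation that exploits the independence of the rows and the fact that the lemma's success guarantee is stated \emph{per row}, so the direct sum reduces to linearity of expectation. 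Your approach avoids the communication-complexity and information-cost machinery entirely and is arguably more transparent for this particular product distribution; the paper's approach leans on standard black-box theorems and would generalize more readily to settings where the per-instance guarantee is weaker (e.g., only a constant fraction of rows solved) or where the inputs are correlated. The technical points you flag -- uniformity of the anti-concentration bound over adaptive trees, the Markov truncation to pass from worst-case to expected query count, and averaging over the simulator's private randomness -- are all handled correctly.
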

\begin{proof}
The reduction is from the distributional Gap-Hamming communication problem, which is defined as follows.
Alice has a bit string $x$ in $\{0,1\}^r$ and Bob has a bit string $y$ in $\{0,1\}^r$, where $x$ and $y$ are independent and uniformly distributed. Let $\Delta(x,y)$ denote their Hamming distance.
The goal is to decide whether $\Delta(x,y) \geq \tfrac12r + \sqrt{r}$ or $\Delta(x,y) \leq \tfrac12r - \sqrt{r}$.
If neither case holds, then any output is considered successful.
The information cost under this distribution is $\Omega(r)$~\cite{braverman2016information}.

Next consider the $r$-fold version of the same problem, i.e., Alice and Bob are given $r$ instances of distributional Gap-Hamming, and they need to solve a constant fraction of them.
By a standard direct sum theorem (see e.g.~\cite{braverman2014information}), this requires $\Omega(rt)$ bits of communication.

Finally we reduce this problem to the majority problem in lemma statement.
Let $X$ denote the xor matrix of Alice's and Bob's matrices.
The Gap Hamming problem is equivalent to finding the majority bit over rows of $X$ in which the majority bit appears at least $\tfrac12r + \sqrt{r}$ times (call these rows ``typical'').
By Lemma~\ref{lmm:antic} this happens in a large constant fraction of the rows.
Given a black-box algorithm for the majority problem that queries $q$ entries of the input matrix, Alice and Bob can simulate it on $M$ by communicating to each other only those entries of their matrices, which costs them $\Theta(q)$. The algorithm solves a large fraction of the rows, and thus a large fraction of the typical rows.
Hence they have solved the Gap Hamming problem, and $q=\Omega(rt)$.
\end{proof}

\begin{lemma}[binomial anti-concentration, Lemma~\ref{lmm:antic} restated]
Let $0<\delta<1$. Let $s\in\{1,2\}^r$ be a uniformly random majority instance.
Then, for $\gamma=\Omega(\delta)$, the majority element of $s$ appears in it at least $\tfrac12r+\gamma\sqrt{r}$ times with probability at least $1-\delta$.
\end{lemma}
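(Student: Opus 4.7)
The plan is to reduce the claim to a standard estimate on the binomial distribution near its mode. Let $X = |\{j \in [r] : s_j = 1\}|$. Then $X \sim \mathrm{Binomial}(r, 1/2)$, and the majority element of $s$ appears exactly $\max(X, r-X) = \tfrac{r}{2} + |X - \tfrac{r}{2}|$ times. Hence the desired event is equivalent to $|X - r/2| \geq \gamma\sqrt{r}$, so it suffices to show that
\[
  \Pr\bigl[|X - \tfrac{r}{2}| < \gamma\sqrt{r}\bigr] \leq \delta
\]
for a suitable $\gamma = \Omega(\delta)$.

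First I would bound the pointwise mass of $X$. For every integer $k$, we have $\Pr[X=k] = \binom{r}{k} 2^{-r} \leq \binom{r}{\lfloor r/2 \rfloor} 2^{-r}$, and Stirling's formula gives $\binom{r}{\lfloor r/2 \rfloor} 2^{-r} = O(1/\sqrt{r})$. The number of integers $k$ with $|k - r/2| < \gamma\sqrt{r}$ is at most $2\gamma\sqrt{r} + 1$. A union bound therefore yields
\[
  \Pr\bigl[|X - \tfrac{r}{2}| < \gamma\sqrt{r}\bigr] \leq (2\gamma\sqrt{r} + 1) \cdot O(1/\sqrt{r}) \leq C_1 \gamma + C_2/\sqrt{r},
\]
for absolute constants $C_1, C_2 > 0$.

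Finally, I would set $\gamma = \delta/(2C_1)$. The first term is then at most $\delta/2$. If $r$ is large enough that $C_2/\sqrt{r} \leq \delta/2$, the total bound is at most $\delta$, as required. For the remaining regime of small $r$ (where $r$ is bounded by a constant depending on $\delta$), the claim can be handled directly: we may shrink $\gamma$ further to a sufficiently small constant so that $\gamma\sqrt{r} < 1$, which makes the event $|X - r/2| \geq \gamma\sqrt{r}$ equivalent to the strictly weaker event that the string is not perfectly balanced, and this already occurs with probability $\geq 1 - \binom{r}{r/2}2^{-r} \geq 1 - \delta$ on the finite set of small $r$'s by choosing the constant small enough.

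The main (and only mild) obstacle is the secondary term $C_2/\sqrt{r}$ in the union bound, which forces the separate treatment for small $r$; but since the conclusion is required only for $\gamma = \Omega(\delta)$ rather than a specific constant, we have enough slack to absorb this by shrinking $\gamma$ within the allowed regime.
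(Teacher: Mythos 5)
Your proof is correct and follows essentially the same route as the paper's: reduce to $\Pr[|X-\tfrac12 r|<\gamma\sqrt r]$ for $X\sim\mathrm{Binomial}(r,\tfrac12)$, bound each point mass by the central one, $\binom{r}{\lfloor r/2\rfloor}2^{-r}=O(1/\sqrt r)$, and multiply by the $O(\gamma\sqrt r)$ integers in the window. You are in fact slightly more careful than the paper about the additive $+1$ in the count of integers (the paper silently absorbs it into $2\gamma\sqrt r$); the only residual caveat is that for small even $r$ the perfectly balanced outcome has constant probability, so the statement genuinely requires $r$ large rather than merely a smaller $\gamma$ — a gap shared by the paper's proof and harmless in the application, where $r=\Theta(1/\epsilon)$.
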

\begin{proof}
Let $X\sim\mathrm{Binomial}(r,\tfrac12)$.
The statement we need to show is equivalent to $\Pr[|X-\tfrac12r|<\tfrac12\gamma\sqrt{r}]<\delta$, or equivalently,\[ \sum_{i=\lfloor\tfrac12r-\tfrac12\gamma\sqrt{r}\rfloor}^{\lceil\tfrac12r+\tfrac12\gamma\sqrt{r}\rceil}\Pr[X=i] < \delta . \]
Let $\gamma=\sqrt{\pi/2}\cdot\delta$.
Note that $\Pr[X=i]\leq\Pr[X=\lfloor r/2 \rfloor]$ for every $i$, and $\Pr[X=\lfloor r/2 \rfloor]=2^{-r}{r\choose\lfloor r/2 \rfloor}\leq1/\sqrt{2\pi r}$ by a known estimate. Therefore, the above left-hand side sum is upper-bounded by $2\gamma\sqrt{r}/\sqrt{2\pi r}=\delta$ as needed.
\end{proof}

\subsection{Spectral Properties}

For the next two lemmas, write $A=1.5J+B$ where $J$ is the all-$1$'s matrix and $B$ is a matrix with i.i.d.~random entries chosen uniformly from $\{-\tfrac12,\tfrac12\}$.

\begin{lemma}[Lemma~\ref{lmm:rk1}, restated]
Suppose $r^3 = O(n)$.
With probability $1-e^{-\Omega(n/r^{3/2})}$,
$\fsnorm{A-A_1} \geq \tfrac14nr - O(n) $.
\end{lemma}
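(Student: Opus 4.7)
My plan is to use the identity $\fsnorm{A-A_1} = \fsnorm{A} - \sigma_1(A)^2$ and bound the two pieces separately. For $\fsnorm{A}$, expand $A_{ij} = 1.5 + B_{ij}$ with $B_{ij} \in \{-\tfrac12,\tfrac12\}$ i.i.d.~uniform to get $\fsnorm{A} = \tfrac{5}{2}nr + 3\sum_{i,j} B_{ij}$; Hoeffding's inequality applied to this bounded zero-mean sum immediately yields $\fsnorm{A} \geq \tfrac{5}{2}nr - O(n)$ with probability at least $1 - e^{-\Omega(n/r)}$. The bulk of the work is then to establish the matching upper bound $\sigma_1(A)^2 \leq \tfrac{9}{4}nr + O(n)$.

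The naive triangle estimate $\sigma_1(A) \leq 1.5\|J\|_2 + \|B\|_2 = 1.5\sqrt{nr} + O(\sqrt{n})$ is too loose, because squaring it introduces a cross term of order $\sqrt{nr}\cdot\sqrt{n} = n\sqrt{r}$, overshooting the target by a $\sqrt{r}$ factor. To remove this factor I carry out a second-order analysis of $AA^T$. With $\sigma = 1.5\sqrt{nr}$, $u_0 = \1_n/\sqrt{n}$, $v_0 = \1_r/\sqrt{r}$, and $e = Bv_0$,
\[
AA^T \;=\; \sigma^2 u_0 u_0^T + \sigma(u_0 e^T + e u_0^T) + BB^T,
\]
and splitting $e = \alpha u_0 + e_\perp$ with $\alpha = u_0^T B v_0$ and $e_\perp \perp u_0$ rewrites this as $(\sigma^2 + 2\sigma\alpha) u_0 u_0^T + \sigma(u_0 e_\perp^T + e_\perp u_0^T) + BB^T$. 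For any unit vector $u = \gamma u_0 + u_\perp$ with $u_\perp \perp u_0$, the quadratic form is bounded by
\[
u^T AA^T u \;\leq\; (\sigma^2+2\sigma\alpha)\gamma^2 + 2\sigma|\gamma|\sqrt{1-\gamma^2}\,\|e_\perp\| + \|B\|_2^2 .
\]
Maximizing the first two terms over $\gamma \in [-1,1]$ yields $\tfrac{1}{2}(\sigma^2+2\sigma\alpha) + \tfrac{1}{2}\sqrt{(\sigma^2+2\sigma\alpha)^2 + 4\sigma^2\|e_\perp\|^2}$; since $\sigma\|e_\perp\| \ll \sigma^2$, linearizing the square root bounds this by $\sigma^2 + 2\sigma\alpha + \|e_\perp\|^2(1+o(1))$. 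Plugging in the high-probability estimates $|\alpha| = O(1)$ (sub-Gaussian scalar), $\|e_\perp\| \leq \|Bv_0\| = O(\sqrt{n})$ (sum of sub-Gaussians), and $\|B\|_2 = O(\sqrt{n})$ (standard random-matrix theory, using $r \leq n$), I obtain $\sigma_1(A)^2 \leq \tfrac{9}{4}nr + O(n)$, and combining with the lower bound on $\fsnorm{A}$ gives $\fsnorm{A-A_1} \geq \tfrac{1}{4}nr - O(n)$.

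The main obstacle is exactly the sharpness of the upper bound on $\sigma_1(A)^2$. The gain over the triangle-inequality bound comes from the $\gamma$-optimization, which encodes the geometric fact that the top singular vector of $A$ must have component orthogonal to $u_0$ of size only $O(1/\sqrt{r})$, damping the leading perturbation contribution by precisely the missing $\sqrt{r}$ factor. The assumption $r^3 = O(n)$ is used to guarantee that the higher-order terms dropped during linearization remain $O(n)$, and to ensure the three concentration events (on $\sum_{i,j}B_{ij}$, on $\|Bv_0\|$, and on $\|B\|_2$) all hold simultaneously with probability at least $1 - e^{-\Omega(n/r^{3/2})}$.
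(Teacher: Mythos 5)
Your proof is correct, but it takes a genuinely different route from the paper's. The paper controls the \emph{lower} end of the spectrum: it invokes the Feldheim--Sodin estimate that all $r$ squared singular values of the random part $B$ are at least $\tfrac14n-O(\sqrt{nr})$ with probability $1-e^{-\Omega(n/r^{3/2})}$, uses Thompson's interlacing theorem to argue that the rank-one shift $1.5J$ destroys at most one of these, and then sums $r-2$ squared singular values; the accumulated error $(r-2)\cdot O(\sqrt{nr})=O(r^{3/2}\sqrt n)$ is exactly why the hypothesis $r^3=O(n)$ appears. You instead work from the identity $\fsnorm{A-A_1}=\fsnorm{A}-\sigma_1(A)^2$, so you only need elementary concentration for $\fsnorm{A}$ and an \emph{upper} bound on the top singular value, which you obtain by a clean second-order analysis of $AA^T$ (the $\gamma$-optimization correctly kills the $n\sqrt r$ cross term that the naive triangle inequality leaves behind). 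This trades a deep small-singular-value estimate for standard facts ($\|B\|_2=O(\sqrt n)$ plus Hoeffding/Bernstein), and in fact your argument does not need $r^3=O(n)$ at all --- your attribution of that hypothesis to the linearization step is a mis-diagnosis, since $\sqrt{c^2+4d^2}\le c+2d^2/c$ is exact and the correction factor in $d^2/c$ is controlled by $|\alpha|/\sigma=o(1)$ alone. One small quantitative slip: the bound $|\alpha|=O(1)$ holds only with constant probability at the sub-Gaussian scale; at failure probability $e^{-\Omega(n/r^{3/2})}$ you can only assert $|\alpha|=O(\sqrt n/r^{3/4})$, but this still gives $2\sigma\alpha=O(nr^{-1/4})=O(n)$, so the conclusion is unaffected.
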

\begin{proof}
We use a sharp estimate of~\cite{feldheim2010universality} on the smallest singular value of $B$ (see also~\cite{rudelson2010non}, eq.~(2.5)).
It states that with probability $1-\exp(-\Omega(n/r^{3/2}))$, all $r$ singular values of $B$ are at least $\tfrac14n-O(\sqrt{nr})$.
Furthermore, since $A$ is obtained from $B$ by adding a rank-$1$ matrix ($1.5J$), then by Theorem 1 of~\cite{thompson1976behavior}, $A$ has at least $r-1$ singular values which are at least $\tfrac14n-O(\sqrt{nr})$.
Therefore $\fsnorm{A-A_1}$, which is the sum of all squared singular values of $A$ except the largest, is at least
$(r-2)\cdot(\tfrac14n-O(\sqrt{nr})) = \tfrac14nr - O(n) - O(r^{3/2}\sqrt{n})$.
\end{proof}

\begin{lemma}[Lemma~\ref{lmm:rk11}, restated]
Let $\zeta>0$.
Let $Z$ be a rank-$1$ matrix such that $\norm{Z}_2\leq\zeta\sqrt{n}$.
Then with probability $1-o(1)$,
$\fsnorm{A\P^\perp-Z} \geq \fsnorm{A\P^\perp} - \zeta\cdot O(n)$.
\end{lemma}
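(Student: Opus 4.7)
The plan is to first reduce the statement to one about $B := A - 1.5J$, where $J$ is the all-ones matrix. Since each row of $J$ equals $\1^T$, which lies in the range of $\P$, we have $J\P^\perp = 0$ and therefore $A\P^\perp = B\P^\perp$. The lemma then becomes: for any rank-$1$ matrix $Z$ with $\norm{Z}_2 \leq \zeta\sqrt n$, $\fsnorm{B\P^\perp - Z} \geq \fsnorm{B\P^\perp} - \zeta \cdot O(n)$.

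Next I would expand the squared Frobenius norm and drop the non-negative $\fsnorm{Z}$ term:
\[
\fsnorm{B\P^\perp - Z} \;=\; \fsnorm{B\P^\perp} \;-\; 2\,\langle B\P^\perp, Z\rangle_F \;+\; \fsnorm{Z} \;\geq\; \fsnorm{B\P^\perp} \;-\; 2\,\bigl|\langle B\P^\perp, Z\rangle_F\bigr|.
\]
So the entire lemma reduces to showing $|\langle B\P^\perp, Z\rangle_F| \leq \zeta \cdot O(n)$.

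To bound the cross term I would use the rank-$1$ structure of $Z$: write $Z = \sigma\,\hat u \hat v^T$ with unit vectors $\hat u \in \R^n$, $\hat v \in \R^r$ and $\sigma = \norm{Z}_2 \leq \zeta\sqrt n$. Then
\[
\bigl|\langle B\P^\perp, Z\rangle_F\bigr| \;=\; \sigma\cdot\bigl|\hat u^T B\P^\perp \hat v\bigr| \;\leq\; \sigma\cdot\norm{B\P^\perp}_2 \;\leq\; \sigma\cdot\norm{B}_2,
\]
where the first inequality is the variational characterization of the spectral norm and the second uses $\norm{\P^\perp}_2 \leq 1$. It remains to establish $\norm{B}_2 = O(\sqrt n)$ with probability $1-o(1)$. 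This is a standard non-asymptotic random matrix bound: $B$ is $n\times r$ with i.i.d.\ sub-Gaussian entries in $\{-\tfrac12,\tfrac12\}$, so $\norm{B}_2 \leq O(\sqrt n + \sqrt r)$ holds with probability $1 - e^{-\Omega(n)}$ (this can be cited from Vershynin, or is implicit in the references \cite{feldheim2010universality,rudelson2010non} already used for Lemma~\ref{lmm:rk1}). In our regime $r = O(n^{1/3})$, so $\norm{B}_2 = O(\sqrt n)$, and plugging back gives $|\langle B\P^\perp, Z\rangle_F| \leq \zeta\sqrt n \cdot O(\sqrt n) = \zeta \cdot O(n)$, proving the lemma.

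The main (and essentially only) substantive ingredient is the high-probability upper bound on the spectral norm of $B$; everything else is the trivial identity $J\P^\perp = 0$ combined with an expansion of the Frobenius norm and the rank-one duality $|\langle M, uv^T\rangle| \leq \norm{u}\norm{v}\norm{M}_2$. I do not anticipate a genuine obstacle, only the choice of which spectral-norm tail inequality to cite.
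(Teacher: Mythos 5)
Your proof is correct, and it reaches the same quantitative bound as the paper by a slightly different and more elementary route in the deterministic step. The paper invokes the Hoffman--Wielandt inequality for singular values with $X=A\P^\perp$ and $Y=Z$: since $Z$ has rank $1$, that inequality collapses to $\fsnorm{A\P^\perp-Z}\geq\fsnorm{A\P^\perp}-2\norm{A\P^\perp}_2\norm{Z}_2+\norm{Z}_2^2$. You instead expand the Frobenius norm directly and bound the cross term by $|\langle A\P^\perp,\sigma\hat u\hat v^T\rangle_F|=\sigma|\hat u^T(A\P^\perp)\hat v|\leq\norm{Z}_2\norm{A\P^\perp}_2$, which is exactly the same estimate obtained without appealing to Hoffman--Wielandt; in effect you give a one-line proof of the only case of that inequality needed here. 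From that point on the two arguments coincide: both observe that $\norm{A\P^\perp}_2=\norm{(1.5J+B)\P^\perp}_2=\norm{B\P^\perp}_2\leq\norm{B}_2$ (your remark that $J\P^\perp=0$ is the same observation), and both rely on the standard high-probability bound $\norm{B}_2=O(\sqrt n)$ for a random sign matrix (the paper cites Proposition 2.4 of \cite{rudelson2010non}), giving a cross term of $\zeta\sqrt n\cdot O(\sqrt n)=\zeta\cdot O(n)$. Your version is marginally more self-contained; the paper's version generalizes more readily if one later needs the analogous statement for higher-rank $Z$.
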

\begin{proof}
By the Hoffman-Wielandt inequality~\cite{hoffman2003variation} for singular values\footnote{See also Exercise 22(v) in~\cite{taoblog}}, any $X,Y\in\R^{n\times r}$ satisfy $
  \fsnorm{X-Y} \geq \sum_{j=1}^r(\sigma_j(X)-\sigma_j(Y))^2$,
where $\sigma_1(X)\geq\ldots\geq\sigma_r(X)$ and $\sigma_1(Y)\geq\ldots\geq\sigma_r(Y)$ are their respective sorted singular values.
In particular, letting $X=A\P^\perp$ and $Y=Z$, since $Z$ has rank $1$, we have
\begin{align*}
\fsnorm{A\P^\perp-Z} &\geq \sum_{j=2}^r(\sigma_j(A\P^\perp))^2 + \left(\sigma_1(A\P^\perp)-\sigma_1(Z)\right)^2 \\
&= \fsnorm{A\P^\perp} - 2\sigma_1(A\P^\perp)\sigma_1(Z) + (\sigma_1(Z))^2 \\
&= \fsnorm{A\P^\perp} - 2\norm{A\P^\perp}_2\norm{Z}_2 + \norm{Z}_2^2 .
\end{align*}
Therefore it suffices to show that $\norm{A\P^\perp}_2 = O(\sqrt n)$.
Indeed,
$\norm{A\P^\perp}_2 = \norm{(1.5J+B)\P^\perp}_2 = \norm{B\P^\perp}_2 \leq \norm{B}_2$,
and an upper bound $\norm{B}_2=O(\sqrt n)$ is well known, e.g., see~Proposition 2.4 in \cite{rudelson2010non}.\end{proof}

\subsection{Lemmas from~\Cref{sec:invoking}}
\begin{lemma}[Lemma~\ref{lmm:bproj}, restated]
We have
\[ 1-\frac{\eta}{\sqrt{r}} \leq \frac{\norm{b^T\P}}{\norm{\1}} \leq 1+\frac{\eta}{\sqrt{r}} , \] 
where $\eta>0$ is a constant that can be made arbitrarily small by choosing $C>0$ sufficiently large.
\end{lemma}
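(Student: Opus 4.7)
\textbf{Proof proposal for Lemma~\ref{lmm:bproj}.} The plan is to turn the bound $\norm{\1-b}_2^2 \leq O(1)/C$ from Equation~(\ref{eq:1b}) into a two-sided bound on $\norm{b^T\P}_2/\norm{\1}_2$ by computing the projection explicitly and applying Cauchy--Schwarz.

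First I would write $\P$ in closed form. Since $\P$ is the orthogonal projection onto the span of $\1\in\R^r$, we have $\P = \1\1^T/\norm{\1}_2^2 = \1\1^T/r$. Therefore $b^T\P = \frac{\langle b,\1\rangle}{r}\,\1^T$, so
\[
\frac{\norm{b^T\P}_2}{\norm{\1}_2} \;=\; \frac{|\langle b,\1\rangle|\,\norm{\1}_2/r}{\norm{\1}_2} \;=\; \frac{|\langle b,\1\rangle|}{r}.
\]
Thus the whole lemma reduces to showing $\bigl|\,\langle b,\1\rangle/r - 1\,\bigr| \leq \eta/\sqrt{r}$.

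Next I would express the inner product as $\langle b,\1\rangle = \langle \1,\1\rangle - \langle \1-b,\1\rangle = r - \langle \1-b,\1\rangle$, and control the remainder by Cauchy--Schwarz: $|\langle \1-b,\1\rangle| \leq \norm{\1-b}_2\cdot\norm{\1}_2 = \sqrt{r}\,\norm{\1-b}_2$. Dividing through by $r$ gives
\[
\left|\,\frac{\langle b,\1\rangle}{r} - 1\,\right| \;\leq\; \frac{\norm{\1-b}_2}{\sqrt{r}}.
\]
Plugging in the bound $\norm{\1-b}_2 \leq \sqrt{O(1)/C}$ from Equation~(\ref{eq:1b}) produces $\eta/\sqrt{r}$ with $\eta = \sqrt{O(1)/C}$, which is arbitrarily small once $C$ is taken sufficiently large. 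For $C$ large enough this also guarantees $\langle b,\1\rangle \geq 0$ (since $r\geq 1$ and $\eta<1$), so the absolute value can be dropped and the lower bound $1-\eta/\sqrt{r}$ indeed applies to the norm ratio.

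There is no real obstacle; the only subtlety is keeping track of the factor of $\sqrt{r}$ (the Cauchy--Schwarz inequality saves exactly one such factor, which is precisely what is needed to turn $\norm{\1-b}_2=O(1/\sqrt{C})$ into a deviation of $\eta/\sqrt{r}$ rather than $\eta$). The whole argument is two lines of linear algebra once one notices that projecting $b$ onto $\1$ and measuring its length is just computing $\langle b,\1\rangle/\sqrt{r}$.
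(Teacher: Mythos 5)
Your proof is correct and follows essentially the same route as the paper: both arguments reduce the claim to the bound $\norm{\1-b}_2 = O(1/\sqrt{C})$ from Equation~(\ref{eq:1b}) and then show $\bigl|\norm{b^T\P}_2 - \norm{\1}_2\bigr| \leq \norm{\1-b}_2$ by elementary linear algebra before dividing by $\norm{\1}_2=\sqrt{r}$. Your use of the explicit form $\P=\1\1^T/r$ plus Cauchy--Schwarz is a slightly cleaner packaging of the paper's triangle-inequality manipulation, but the content is identical.
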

\begin{proof}
By the triangle inequality we have
\[ \norm{\1}_2-\norm{\1-b}_2\leq\norm{b}_2\leq\norm{\1}_2+\norm{\1-b}_2 .\]
The upper bound implies
\begin{equation}\label{eq:triupper}
\norm{b^T\P}\leq\norm{b}\leq\norm{\1}_2+\norm{\1-b}_2 .
\end{equation}
The lower bound implies
\[
  \norm{\1-b}_2^2 = \norm{\1}_2^2 + \norm{b}_2^2 - 2b^T\1 \geq \norm{\1}_2^2 + \norm{\1}_2^2 + \norm{\1-b}_2^2 -2\norm{\1}_2\norm{\1-b}_2 - 2b^T\1,
\]
which rearranges to $b^T\1\geq\norm{\1}_2^2-\norm{\1}_2\norm{\1-b}_2$, implying 
\begin{equation}\label{eq:trilower}
\norm{b^T\P}=b^T(\frac{1}{\norm{\1}_2}\1)\geq\norm{\1}_2-\norm{\1-b}_2 .
\end{equation}
Putting~\Cref{eq:triupper,eq:trilower} together,
\[ \norm{\1}_2 - \norm{\1-b}_2 \leq \norm{b^T\P} \leq \norm{\1}_2 + \norm{\1-b}_2 ,\]
and the lemma follows since $\norm{\1}_2=\sqrt{r}$ and since by~\cref{eq:1b}, $\norm{\1-b}_2$ is a constant that can be made arbitrarily small by choosing $C>0$ sufficiently large.
\end{proof}

\begin{lemma}[Lemma~\ref{lmm:a}, restated]
$\norm{a}=O(\sqrt{n})$.
\end{lemma}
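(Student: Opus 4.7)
The plan is to bound $\norm{a}_2$ using the Frobenius-norm triangle inequality, together with the upper bound on $\fsnorm{\bar A-\bar a b^T}$ coming from the algorithm's guarantee (Corollary~\ref{cor:error}) and the lower bound on $\norm{b}_2$ coming from~\Cref{eq:1b}. The idea is that $A$ itself has Frobenius norm only $O(\sqrt{nr})$ and $\bar a b^T$ approximates it to additive Frobenius error $O(\sqrt{nr})$, so $\norm{ab^T}_F$ must be $O(\sqrt{nr})$; dividing by $\norm{b}_2 = \Theta(\sqrt r)$ then gives the claim.

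First I would observe that zeroing out the $(n+1)$th row of $\bar A - \bar a b^T$ only decreases its Frobenius norm, so by~\Cref{eq:alg} together with Corollary~\ref{cor:error},
\[ \fsnorm{A - ab^T} \leq \fsnorm{\bar A - \bar a b^T} \leq \tfrac14 nr + (4+C)\beta n = O(nr), \]
absorbing the $(4+C)\beta n$ term using $r\geq 1$. Since every entry of $A$ lies in $\{1,2\}$, we also have $\fsnorm{A}\leq 4nr$. Applying the Frobenius triangle inequality to $ab^T = A - (A - ab^T)$ then yields
\[ \norm{a}_2\,\norm{b}_2 \;=\; \norm{ab^T}_F \;\leq\; \norm{A}_F + \norm{A - ab^T}_F \;=\; O(\sqrt{nr}). \]
For the denominator, \Cref{eq:1b} gives $\norm{\1 - b}_2 = O(1/\sqrt C)$, so the reverse triangle inequality yields $\norm{b}_2 \geq \norm{\1}_2 - \norm{\1-b}_2 = \sqrt r - O(1/\sqrt C) = \Omega(\sqrt r)$ once $C$ is chosen sufficiently large. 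Dividing gives $\norm{a}_2 = O(\sqrt n)$, as desired.

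There is no real obstacle in this argument --- it is essentially a two-line consequence of facts already established. The only items worth checking are that the $(n+1)$th row really does contribute non-negatively (so we may drop it without flipping the inequality), and that the implicit constants in $O(\cdot)$ and $\Omega(\cdot)$ are allowed to depend on the fixed constants $\beta$ and $C$ of the hard distribution; both are immediate.
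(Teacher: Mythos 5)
Your proof is correct and follows essentially the same route as the paper: bound $\fsnorm{A-ab^T}$ via \Cref{eq:alg} and Corollary~\ref{cor:error}, deduce $\norm{ab^T}_F=O(\sqrt{nr})$ by the (Frobenius) triangle inequality with $\fsnorm{A}\leq 4nr$, and divide by $\norm{b}_2=\Omega(\sqrt r)$ obtained from \Cref{eq:1b}. The only cosmetic difference is that the paper applies the squared triangle inequality (\Cref{clm:triangle}) where you use the plain one.
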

\begin{proof}
By the triangle inequality, $\norm{b}_2 \geq \norm{\1}_2-\norm{b-\1}_2$.
Since $\norm{\1}_2=\sqrt r$ and $\norm{b-\1}_2$ is an arbitrarily small constant by~\Cref{eq:1b}, $\norm{b}_2 \geq \tfrac12\sqrt r$.
Thus
\begin{equation}\label{eq:aaux}
  \fsnorm{ab^T}=\norm{a}_2^2\norm{b}_2^2\geq\tfrac14r\norm{a}_2^2 .
\end{equation}
We finish by showing that $\fsnorm{ab^T}=O(nr)$. Indeed,
\begin{align*}
\fsnorm{A-ab^T} &\leq \fsnorm{\bar A-\bar ab^T} &  \\
&\leq \fsnorm{\bar A-\bar A_1}+\epsilon\fsnorm{\bar A} & \text{by~\Cref{eq:alg}} \\
&\leq \tfrac14nr + (4+C)\beta n . & \text{by Corollary~\ref{cor:error}}
\end{align*}
Furthermore $\fsnorm{A}\leq4nr$ since each entry of $A$ has absolute value at most $2$.
Finally, by approximate triangle inequality (\Cref{clm:triangle}),
\[ 
 \fsnorm{ab^T} \leq 2\fsnorm{A} + 2\fsnorm{A-ab^T} = O(nr) .
\]
With~\Cref{eq:aaux} this implies the lemma.
\end{proof}

\subsection{Lower Bound for Symmetric Distance Matrices}\label{sec:symmetriclb}

In this section we show that the lower bound in~\Cref{thm:lb1} applies also to symmetric distance matrices.
The proof is by a reduction to the asymmetric case.

\subsubsection{General $k$}
We start by reducing rank-$k$ approximation of asymmetric distance matrices to rank-$(2k+2)$ approximation of symmetric distance matrices.
By tuning $\beta$, suppose w.l.o.g.~that $kr=\beta k/\epsilon$ is a divisor of $n$.
Let $B\in\R^{n\times kr}$ be an asymmetric distance matrix drawn from the hard distribution. 
Recall that all entries are in $\{1,2,3\}$. We can scale them by half so they are all the interval $[1,2]$.

We construct a symmetric distance matrix $A\in\R^{2n\times 2n}$. It is partitioned into $n\times n$ blocks,
$$A = \left[
\begin{array}{cc}
A_{11} & A_{12} \\
A_{21} & A_{22}
\end{array} 
\right] .$$
We set its entries as follows.
Its main diagonal is all-zeros.
$A_{21}$ consists of $n/(kr)$ copies of $B$, concatenated horizontally.
$A_{11}$ has all off-diagonal entries set to $1$.
$A_{12}$ and $A_{22}$ are determined symmetrically.
Since all entries are in the interval $[1,2]$, the triangle inequality is satisfied trivially and thus $A$ is a distance matrix.

We will show here that any rank-$(2k+2)$ approximation algorithm for $A$ must read at least $\Omega(nk/\epsilon)$ of its entries.
Let $\mathbf0$ and $\1$ denote the all-$0$'s and all-$1$'s vectors in $\R^n$, respectively.
For any $x,y\in\R^n$ let $[x\;y]$ denote their concatenation into a vector in $\R^{2n}$.
Let $B_k$ be the optimal rank-$k$ approximation of $B$.
Write $B$ as $B=UV^T$ where $U\in\R^{n\times k}$ and $V\in\R^{kr\times k}$.
Let $u_1,\ldots,u_k$ be the columns of $U$ and let $v_1,\ldots,v_k$ be the columns of $V^T$.
For every $i\in[k]$, let $\bar v_i$ be the vector given by concatenating $n/(kr)$ copies of $v_i$.
Consider the rank-$(2k+2)$ approximation of $A$ given by the column vectors $\{[\mathbf0\; u_i]:i\in[k]\} \cup \{[\bar v_i\;\mathbf0]:i\in[k]\} \cup \{[\1\;\mathbf0],[\mathbf0\;\1]\}$.
Let us bound its error in approximating $A$.
On $A_{12}$ and $A_{21}$, which contain concatenated copies of $B$, the vectors $\{[\mathbf0\; u_i]:i\in[k]\} \cup \{[\bar v_i\;\mathbf0]:i\in[k]\}$ attain the optimal error. On $A_{11}$ and $A_{11}$, which contain $0$'s on the diagonal and $1$'s off the diagonal, the vectors $\{[\1\;\mathbf0],[\mathbf0\;\1]\}$ attain zero error on the off-diagonal entries, and $2n$ error in total over the diagonal entries.
Consequently,
\[ \fsnorm{A-A_{2k+2}}\leq t\cdot\fsnorm{B-B_1}+2n , \]
where $t=2n/(kr)$ is the number of copies of $B$ embedded in $A$.

Suppose we have an algorithm that given $A$, returns a rank-$(2k+2)$ matrix $A'$ that satisfies~\Cref{e:upper}.
Let $A_1',\ldots,A_t'$ denote the restriction of $A'$ to the blocks matching the copies of $B$ embedded in $A$.
Thus, $\fsnorm{A-A'} \geq \sum_{i=1}^t\fsnorm{B-A_i'}$.
Furthermore, since $\fsnorm{A}=\Theta(n^2)=\Theta(tnkr)$ and $\fsnorm{B}=\Theta(nkr)$, we have $\fsnorm{A}=O(1)\cdot t\cdot\fsnorm{B}$. Putting everything into~\Cref{e:upper},
\[ \sum_{i=1}^t\fsnorm{B-A_i'} \leq t\cdot\left(\fsnorm{B-B_k} + O(n) + O(\epsilon)\cdot\fsnorm{B}\right) .\]
By averaging, for at least one $i\in[t]$ we have $\fsnorm{B-A_i'}\leq\fsnorm{B-B_k}+O(\epsilon)\cdot\fsnorm{B}$. By scaling $\epsilon$ by a constant, $A'$ solves the rank-$k$ approximation problem for $B$. By the proof for the asymmetric case, this requires $\Omega(nk/\epsilon)$ queries to its input.

\subsubsection{$k=1$}
The previous section proves hardness for rank-$k$ approximation of symmetric distance matrices, for $k\geq4$.
For completeness let us also show hardness for the $k=1$ case, by a somewhat more refined analysis of the reduction in that case.


To this end we slightly modify the construction of $A$ from the previous section.
We draw $B\in\R^{n\times r}$ from the hard distribution for asymmetric distance matrices in the $k=1$ case.
$A$ is constructed as above, except that in $A_{11}$ and $A_{22}$, we change the off-diagonal entries from $1$ to $1.5^2=2.25$.
Again we can scale everything by a constant so that all entries are in the interval $[1,2]$, which yields a distance matrix.

Let $B_1=uv^T$ be the best rank-$1$ approximation of $B$, where $u\in\R^n$ and $v\in\R^r$.
As above we let $\bar v$ denote $n/r$ copies of $v$ concatenated to form a vector in $\R^n$.
Consider the rank-$1$ approximation of $A$ given by $A'=[\bar v\;u][\bar v\;u]^T$.
The error on $A_{12}$ and $A_{21}$ is optimal by construction.
We need to show that the error on $A_{11}$ and $A_{22}$ is at most $\epsilon\cdot O(n^2)$.
To this end we use the fact in our hard distribution that generated $B$ in~\Cref{sec:lb}, for all supported $B$, the top left and right singular vectors are nearly the same. Namely, the top-right one is close to $\1$, and the top-left one is close to $1.5\cdot\1$, for any matrix $B$ in the support.

Concretely, consider an entry in $A_{22}$ whose value is $2.25$.
The corresponding entry in $A'$ is $v_iv_j$ for $i\neq j$.
Each $v_i$ is the mean of a uniformly random vector in $\{1,2\}^r$.
Thus it is a scaled binomial random variable with mean $1.5$ and variance $1/(4r)=\frac14\beta\epsilon$.
Furthermore, $v_i$ and $v_j$ are independent. Therefore, the expected squared Frobenius norm error on that entry is
\[
  \E[(2.25-v_iv_j)^2] =
  \Var[v_i v_j] =
  \Var[v_i]\cdot\Var[v_j] = (\tfrac14\beta\epsilon)^2.
\]
Thus the expected total error over all of the $2.25$ entries (of which there are $O(n^2)$) is $\epsilon\cdot O(n^2)$.
The concentration for the $1$-entries is even stronger.
This completes the proof of the symmetric case for $k=1$.

\section{Lower Bound for General $k$}
In this section we prove the full statement of~\Cref{thm:lb}.
The proof largely goes by reduction to the $k=1$ case, described as follows.
We take $k$ copies (referred to as~\emph{blocks}) of the hard distribution from the $k=1$ case, and concatenate them horizontally into an $n\times(kr)$ matrix (where as previously, $r=\Theta(1/\epsilon)$).
Then, for each block we pick a Hadamard vector, and add it to all columns in that block. This renders the blocks nearly orthogonal, forcing any low-rank approximation algorithm to compute the majority element of most rows in most blocks, thus solving $\Omega(nk/\epsilon)$ instances of random majority, yielding the desired lower bound.
Even though the description is straightforward, the formal proof requires some elaborate technical work, as given in the rest of this section.

Another rather minor difference is that due to having $k$ blocks (which correspond to $k$ clusters of points in the metric space), we cannot add a ``heavy row'' (which would correspond to a very far point), with all entries set to a large $M>0$, to each block as we did in the $k=1$ case.
The reason is that the clusters are close (the distance between every two clusters is at most $2$), so any point which is far from one cluster must be far from all of them. Thus it would sharpen the spectrum separation of the entire matrix, but not of each block separately, which is the effect we wish to achieve (namely, it would increase the top singular value, but not all top-$k$ singular values.).
This is solved by adding $M^2$ light rows instead of a single heavy row.
In a light row, we can set all distances to a given cluster $i\in[k]$ to $2$, and the rest of the distances to $1$.
This makes the corresponding point slightly further from cluster $i$ than from the rest of the clusters.
Over many similar light rows, this yields the desired effect.

\subsection{Hard distribution}
Given $n,k,\epsilon$, let $\beta,C>0$ be constants that will be chosen later. ($\beta$ will be sufficiently small and $C$ sufficiently large.)
Let $r=\beta/\epsilon$, and assume w.l.o.g.~this is an integer by letting $\epsilon$ be sufficiently smaller.
Let $N=(1+C)n$.

Next we use the Walsh construction of Hadamard vectors.
Recall these are vectors with entries in $\{\pm1\}^N$ which are pairwise orthogonal.
Let $v_1,\ldots,v_k\in\R^N$ be $k$ Hadamard column vectors, which are different than all-$1$'s.
We rescale them to have entries in $\{\pm\frac12\}$.
Let $V^i\in\R^{N\times r}$ be made of $r$ copies of $v^i$ concatenated horizontally.

For every $i=1,\ldots,k$ let $S^i\in\{\pm\frac12\}^{n\times r}$ be made of $n$ vertically stacked instances of majority, after a random permutation of the rows.
We complete it to a matrix $\bar S^i\in\{0,\pm\frac12\}^{N\times r}$ by adding all-$0$'s lines at the bottom.
We use $J$ to denote the all-$1$'s matrix of dimensions implied by context.
We form a matrix $\bar A^i\in\R^{N\times r}$ by
\[
  \bar A^i = 2J + V^i + \bar S^i .
\]
We concatenate the $A^i$'s horizontally to obtain a matrix $\bar A\in\R^{N\times kr}$.
This defines the hard distribution over distance matrices $\bar A\in\R^{N\times kr}$.
\begin{claim}
Every supported $\bar A$ is a bipartite distance matrix.
\end{claim}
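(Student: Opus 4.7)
The plan is to mimic the proof of Lemma~\ref{lmm:abarismetric}: exhibit an explicit symbolic point set whose pairwise distances realize $\bar A$ as a cross-distance matrix, then verify the triangle inequality by case analysis. The key observation to set up is that every entry of $\bar A$ lies in the interval $[1,3]$. Indeed, since $V^i$ has entries in $\{\pm\tfrac12\}$, $\bar S^i$ has entries in $\{0,\pm\tfrac12\}$, and $2J$ adds $2$ everywhere, the entries of each $\bar A^i$ (hence of $\bar A$) lie in $\{1,\tfrac32,2,\tfrac52,3\}\subset[1,3]$. This bounded range is exactly what will make the remaining verification trivial.

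Next I would introduce disjoint symbolic point sets $\mathcal P=\{p_1,\ldots,p_N\}$ (one point per row of $\bar A$) and $\mathcal Q=\{q_1,\ldots,q_{kr}\}$ (one point per column), and define a candidate distance $\dist$ on $\mathcal P\cup\mathcal Q$ by setting $\dist(p_i,q_j):=\bar A_{ij}$ for the cross distances, and $\dist(p_i,p_{i'}):=2$ and $\dist(q_j,q_{j'}):=2$ for $i\ne i'$ and $j\ne j'$ respectively (with all self-distances zero). It then suffices to check the triangle inequality for every ordered triple of distinct points.

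The case analysis splits by how many points of each side appear in the triple. For a triple of the form $(p,p,p)$ or $(q,q,q)$ all three sides have length $2$ and the inequality is immediate. For $(p_i,p_{i'},q_j)$ the nontrivial inequality is $\bar A_{ij}\le 2+\bar A_{i'j}$, which holds because $\bar A_{ij}\le 3$ and $\bar A_{i'j}\ge 1$; the inequality $2\le \bar A_{ij}+\bar A_{i'j}$ holds because each cross distance is at least $1$. The $(p_i,q_j,q_{j'})$ case is symmetric. This shows $\dist$ is a metric and $\bar A_{ij}=\dist(p_i,q_j)$, so $\bar A$ is a bipartite distance matrix per Definition~\ref{def:distm}.

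There is no real obstacle here: the whole statement is a routine verification, and the only thing to get right is bookkeeping the range $[1,3]$ of the entries. The one conceptual point worth emphasizing in the write-up is \emph{why} the choice of within-side distance equal to $2$ is forced: with maximum cross-distance gap $3-1=2$, the within-side distance must be at least $2$ to satisfy $\bar A_{ij}\le \alpha+\bar A_{i'j}$, and with minimum cross-distance sum $1+1=2$, it must be at most $2$ to satisfy $\alpha\le \bar A_{ij}+\bar A_{i'j}$, so $\alpha=2$ is the unique valid choice. This is the analogue in the bipartite multi-block setting of the trivial embedding used in Lemma~\ref{lmm:abarismetric}, and no new ingredient is needed.
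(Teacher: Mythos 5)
Your proof is correct and takes the same approach as the paper, which simply asserts that all entries of $\bar A$ lie in $\{1,1\tfrac12,2,2\tfrac12,3\}\subset[1,3]$ and hence satisfy the triangle inequality in a bipartite metric. You have merely spelled out the verification the paper leaves implicit (setting within-side distances to $2$ and checking the cases), which is a correct and complete elaboration of the same one-line argument.
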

\begin{proof}
It can checked that all entries of $\bar A$ are in $\{1,1\tfrac12,2,2\tfrac12,3\}\subset[1,3]$, and thus they satisfy the triangle inequality in a bipartite metric.
\end{proof}

Let $\bar B = \bar A-2J$ and
  $\bar B^i = V^i + \bar S^i$.
Moreover, let $B\in\R^{n\times kr}$ denote the restriction of $\bar B$ to its top $n$ rows.
In most of the proof we will actually work with the matrix $\bar B$ instead of $\bar A$ (cf.~Lemma~\ref{lmm:algb} later on).

\subsubsection{Spectral Properties}
\begin{lemma}\label{lmm:spectralb}
Suppose $(kr)^3=O(n)$.
With probability at least $1-e^{-\Omega(n/(kr)^{3/2}}$, each of the top $k$ squared singular values of $\bar B$ is $\Theta(nr)$, and every other squared singular value is $(\tfrac14\pm o(1))n$.
\end{lemma}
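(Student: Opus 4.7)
The plan is to write $\bar B = V + S$, where $V = [V^1 \mid \cdots \mid V^k] \in \R^{N \times kr}$ is deterministic of rank exactly $k$ whose $k$ nonzero singular values all equal $\sqrt{Nr/4} = \Theta(\sqrt{nr})$ (since $v^1,\ldots,v^k$ are pairwise orthogonal Hadamard vectors of norm $\sqrt{N/4}$, with right singular vectors the block-supported unit vectors $e_i = (0,\ldots,0,\1_r/\sqrt r,0,\ldots,0)$), and $S = [\bar S^1 \mid \cdots \mid \bar S^k] \in \R^{N \times kr}$ is random with i.i.d.\ $\{\pm\tfrac12\}$ entries in its top $n$ rows and zeros below. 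The Feldheim-type sharp estimate cited in the proof of~\Cref{lmm:rk1}, now applied to an $n \times kr$ random sign matrix, says that with probability $1 - \exp(-\Omega(n/(kr)^{3/2}))$ all $kr$ singular values of $S$ lie in $\sqrt n/2 \pm O(\sqrt{kr})$. Since $(kr)^3 = O(n)$, the relative error is $O(n^{-1/3}) = o(1)$, so $\sigma_i(S)^2 = (1/4 \pm o(1))n$ for every $i \le kr$; in particular $\norm{S}_2 = O(\sqrt n)$.

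For the top $k$ singular values of $\bar B$, Weyl's inequality $|\sigma_i(\bar B) - \sigma_i(V)| \le \norm{S}_2 = O(\sqrt n)$ immediately gives $\sigma_i(\bar B) = \sqrt{Nr/4}\,(1 \pm O(1/\sqrt r))$ for $i \le k$, which squares to $\Theta(nr)$. For $i > k$, the general Weyl bound $\sigma_{i+j-1}(A+B) \le \sigma_i(A) + \sigma_j(B)$ applied with $A=S$, $B=V$, $j=k+1$ (so that $\sigma_{k+1}(V) = 0$) yields $\sigma_i(\bar B) \le \sigma_{i-k}(S) \le \sqrt n/2\,(1+o(1))$, hence $\sigma_i(\bar B)^2 \le (1/4 + o(1))n$.

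The main obstacle is the matching lower bound $\sigma_i(\bar B)^2 \ge (1/4 - o(1))n$ for every $i > k$, since the analogous rank-$k$ Weyl shift only gives $\sigma_i(\bar B) \ge \sigma_{i+k}(S)$, which is trivially $0$ once $i > kr-k$ and so leaves the bottom $k$ singular values of $\bar B$ unprotected. I would instead lower-bound $\sigma_{\min}(\bar B)$ directly, after which all $\sigma_i(\bar B)$ inherit the bound. Let $R \subset \R^{kr}$ be the row space of $V$, spanned by $e_1,\ldots,e_k$, and $W \subset \R^N$ the column space of $V$, spanned by $v^1,\ldots,v^k$. For a unit $y \in \R^{kr}$ decompose $y = y_R + y_{R^\perp}$ and set $\alpha = \norm{y_R}_2$. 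Since $V y_{R^\perp} = 0$ and $V y_R \in W$, Pythagoras in $\R^N$ gives
\[
\norm{\bar B y}_2^2 = \norm{V y_R + P_W S y}_2^2 + \norm{P_{W^\perp} S y}_2^2.
\]
The quantitative input is $\rho := \norm{P_W S}_2 = o(\sqrt n)$: writing $P_W = UU^T$ with $U = [u^1 \mid \cdots \mid u^k]$ and $u^i = v^i/\norm{v^i}_2$, each entry of $U^T S$ is a sum of $n$ independent $\pm 1/(2\sqrt N)$ variables and so $O(1)$ by standard sub-Gaussian concentration, which gives $\norm{P_W S}_2 \le \norm{U^T S}_F = O(k\sqrt r)$ whp, yielding $\rho^2 = O(k^2 r) = o(n)$ whenever $(kr)^3 = O(n)$. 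With this in hand, one splits on whether $\alpha \le \sqrt{n/(Nr)}$ (so $\norm{V y_R}_2 \le \sqrt n/2$, and the second Pythagorean term alone gives $\norm{S y}_2^2 - \rho^2 \ge (n/4)(1-o(1))$) or $\alpha > \sqrt{n/(Nr)}$ (so $\norm{V y_R}_2 > \sqrt n/2$, and the triangle inequality $\norm{V y_R + P_W S y}_2 \ge \norm{V y_R}_2 - \rho \ge (\sqrt n/2)(1-o(1))$ makes the first term suffice). Either way $\norm{\bar B y}_2^2 \ge (n/4)(1-o(1))$ uniformly in $y$, which closes the proof.
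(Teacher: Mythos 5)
Your proof is correct and in two places genuinely diverges from (and is more complete than) the paper's own argument. The paper also splits $\bar B$ into the random part $S$ plus the $k$ rank-one blocks $V^i$, and also gets the bulk from the Feldheim-type estimate used in Lemma~\ref{lmm:rk1}; but it then disposes of the bulk in one line (``adding $k$ rank-$1$ matrices preserves the bound for all but the top $k$''), i.e.\ by the interlacing $\sigma_{i+k}(S)\le\sigma_i(\bar B)\le\sigma_{i-k}(S)$ --- exactly the step you correctly flag as leaving the bottom $k$ singular values of $\bar B$ with no lower bound. Your direct bound on $\sigma_{\min}(\bar B)$ via the orthogonal split $\norm{\bar By}_2^2=\norm{Vy_R+P_WSy}_2^2+\norm{P_{W^\perp}Sy}_2^2$ together with $\norm{P_WS}_2=o(\sqrt n)$ closes that gap; in fact your second Pythagorean term already gives $\norm{\bar By}_2^2\ge\norm{Sy}_2^2-\norm{P_WS}_2^2\ge(\tfrac14-o(1))n$ uniformly in $y$, so the case split on $\alpha$ is redundant. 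For the top $k$, the paper instead uses a Frobenius-mass counting argument plus the test vectors $v^i$ against the block indicators to certify $k$ singular values of squared value $\Omega(Cnr)$ (leaving the matching $O(nr)$ upper bound essentially implicit), whereas your Weyl perturbation of the exactly computable spectrum of $V$ delivers both bounds at once and more cleanly. The one detail to tighten is the concentration for $\rho=\norm{P_WS}_2$: to match the stated failure probability $e^{-\Omega(n/(kr)^{3/2})}$ you need the per-entry Hoeffding bound (or a bound on $\norm{U^TS}_F$ via Lipschitz/matrix concentration) at that confidence level, which inflates $\rho^2$ by a harmless factor but keeps it $o(n)$ under $(kr)^3=O(n)$, so nothing breaks.
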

\begin{proof}
Similarly to~Lemma~\ref{lmm:rk1}, all squared singular value of the random portion of $\bar B$ (which are the blocks $S^i$) are $(\tfrac14\pm o(1))n$ with high probability. Since $\bar B$ is obtained by adding $k$ rank-$1$ matrices to that random portion, this bound holds for all but the top-$k$ singular values of $\bar B$.

For the upper part of the spectrum, first note that the absolute value of each entry in $\bar B$ is distributed uniformly i.i.d.~in $\{0,1\}$. Thus with high probability, $\fsnorm{\bar B}\geq\tfrac12Cnkr$. By the above, the squared singular values except the top-$k$ sum to at most $(\tfrac14+o(1))nkr$ (since there are $kr-k$ of them). Thus the sum of squares of the top-$k$ singular vectors is at least, say, $\tfrac14Cnkr$.
On the other hand, for every block $i$, if we multiply $\bar B$ on the left by $v^i$ and on the right by the $i$th block indicator, we get $(1-o(1))Cnr$. Since the Hadamard vectors $\{v_i\}$ are orthogonal and the block indicators are orthogonal, $\bar B$ must have at least $k$ singular values whose square is at least $(1-o(1))Cnr$. The lemma follows.
\end{proof}

\subsubsection{Bounds on Low-Rank Approximation}
For every $i\in[k]$ and $j\in[n]$ let us denote by $s^i_j$ the majority instance which is at the $j$th line of $S^i$.
Let $\mu^i_j$ denote its mean.
As in the $k=1$ case, let $\1$ denote the all-$1$'s vector in $\R^r$, and let $\P$ denote the  orthogonal projection on the subspace spanned by it.

\begin{lemma}\label{lmm:errork}
$\fsnorm{\bar B-\bar B_k}+\epsilon\fsnorm{\bar B} \leq \sum_{i=1}^k\sum_{j=1}^n\norm{s^i_j-\mu^i_j\1}_2^2 + 4(1+C)\beta nk$.
\end{lemma}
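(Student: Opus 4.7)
The plan is to mimic the $k=1$ proof (Lemma~\ref{lmm:error}) by constructing an explicit rank-$k$ approximation built block-by-block, exploiting that $\bar B=[\bar B^1\mid\cdots\mid\bar B^k]$ is a horizontal concatenation. Since the rank of a horizontal concatenation is at most the sum of ranks of the pieces, concatenating $k$ individual rank-$1$ approximations of the blocks gives a valid rank-$k$ matrix, and the squared Frobenius error decomposes as a sum over blocks.

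Concretely, for each $i\in[k]$ I would define $\hat B^i\in\R^{N\times r}$ by replacing each row of $\bar B^i$ with its row-mean times $\1^T$. Every row of $\hat B^i$ is then a scalar multiple of $\1^T$, so $\hat B^i=c^i\1^T$ for $c^i\in\R^N$ listing the row means, which is rank-$1$. Setting $\tilde B=[\hat B^1\mid\cdots\mid\hat B^k]$ yields a matrix of rank $\leq k$, and hence
\[ \fsnorm{\bar B-\bar B_k} \;\leq\; \fsnorm{\bar B-\tilde B} \;=\; \sum_{i=1}^k \fsnorm{\bar B^i-\hat B^i}. \]

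Next I would compute the per-block error. For $j\leq n$, the $j$th row of $\bar B^i$ equals $v^i_j\1^T+s^i_j$, so its row mean is $v^i_j+\mu^i_j$ and the row contribution to $\fsnorm{\bar B^i-\hat B^i}$ is exactly $\norm{s^i_j-\mu^i_j\1}_2^2$ (the $v^i_j\1^T$ piece cancels). For the remaining $j>n$ rows, the row of $\bar B^i$ equals the constant vector $v^i_j\1^T$, which matches its own row-mean vector and so contributes $0$. Summing, $\fsnorm{\bar B^i-\hat B^i}=\sum_{j=1}^n\norm{s^i_j-\mu^i_j\1}_2^2$, which yields the first summand of the lemma after summing over $i$.

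For the $\epsilon\fsnorm{\bar B}$ term, since every entry of $\bar B$ satisfies $|\bar B_{j,l}|\leq |v^i_j|+|\bar S^i_{j,l}|\leq 1$ (crudely bounded by $2$), we have $\fsnorm{\bar B}\leq 4Nkr = 4(1+C)nkr$; using $\epsilon r=\beta$ gives $\epsilon\fsnorm{\bar B}\leq 4(1+C)\beta nk$. Combining the two estimates finishes the proof. There is no real obstacle here: the only subtlety is confirming that the block-concatenation $\tilde B$ has rank at most $k$ (which is immediate from the column-space argument), and verifying that the constant extra rows $j>n$ contribute zero error so that the final sum ranges only from $j=1$ to $n$.
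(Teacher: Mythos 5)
Your proof is correct and matches the paper's (much terser) argument: the paper's candidate rank-$k$ matrix, obtained by replacing each instance $s^i_j$ with $\mu^i_j\1$ while leaving the constant Hadamard part intact, is exactly your $\tilde B$ built from per-block row means, and the bound $\epsilon\fsnorm{\bar B}\leq 4(1+C)\beta nk$ is obtained the same way.
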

\begin{proof}
For the first summand, consider the rank-$k$ matrix given by replacing each majority instance $s^i_j$ in $B$ by $\mu^i_j\1$.
For the second summand, note that each entry in $\bar B$ has magnitude at most $2$, thus $\epsilon\fsnorm{\bar B}\leq \epsilon\cdot(1+C)n\cdot kr = 4(1+C)\beta nk$.
\end{proof}

\begin{corollary}\label{cor:errork}
$\fsnorm{\bar B-\bar B_k}+\epsilon\fsnorm{\bar B} \leq \tfrac14nkr + 4(1+C)\beta kn$.
\end{corollary}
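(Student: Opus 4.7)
The plan is to mimic the reasoning used to pass from Lemma~\ref{lmm:error} to Corollary~\ref{cor:error} in the $k=1$ case, but now applied blockwise. Lemma~\ref{lmm:errork} already absorbs the $\epsilon\fsnorm{\bar B}$ term into the additive $4(1+C)\beta nk$, so the only remaining task is to bound the double sum $\sum_{i=1}^k\sum_{j=1}^n\norm{s^i_j-\mu^i_j\1}_2^2$ by $\tfrac14 nkr$.

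The key observation is a one-line optimization fact: for any fixed vector $s^i_j\in\R^r$, the scalar $\mu^i_j$ (being the empirical mean of $s^i_j$) is precisely the minimizer of the function $\nu\mapsto\norm{s^i_j-\nu\1}_2^2$ over $\nu\in\R$. Hence for \emph{any} choice of $\nu$ we have $\norm{s^i_j-\mu^i_j\1}_2^2\leq\norm{s^i_j-\nu\1}_2^2$; taking $\nu=0$ gives $\norm{s^i_j-\mu^i_j\1}_2^2\leq\norm{s^i_j}_2^2$. Since in this construction the majority instances live in $\{\pm\tfrac12\}^r$ (as noted when the rows of $S^i$ were defined), we obtain $\norm{s^i_j}_2^2 = r\cdot(\tfrac12)^2 = \tfrac14 r$ deterministically.

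Summing this bound over all $i\in[k]$ and $j\in[n]$ yields $\sum_{i,j}\norm{s^i_j-\mu^i_j\1}_2^2 \leq \tfrac14 nkr$, and plugging into Lemma~\ref{lmm:errork} gives the claimed inequality.

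I don't expect any serious obstacle: this is a purely algebraic corollary of Lemma~\ref{lmm:errork}, and the only ``trick'' is choosing the right comparison point $\nu=0$ (rather than $\nu=1.5$ as in the $k=1$ case), which is dictated by the fact that the alphabet here has been recentered to $\{\pm\tfrac12\}$. No probabilistic arguments or spectral facts are needed.
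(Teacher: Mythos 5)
Your proof is correct and follows essentially the same route as the paper: bound each $\norm{s^i_j-\mu^i_j\1}_2^2$ by comparing the optimal mean to the center of the alphabet, which costs exactly $(\tfrac12)^2$ per entry, and then sum over all $nk$ instances. (In fact the paper's one-line proof says ``replace $\mu_i$ by $1.5$,'' a leftover from the $k=1$ alphabet $\{1,2\}$; your choice of $\nu=0$ is the right comparison point for the recentered alphabet $\{\pm\tfrac12\}$, and the resulting bound is identical.)
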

\begin{proof}
In the term $\sum_{i=1}^n\norm{s_i-\mu_i\1}_2^2$ in the above lemma, if we replace $\mu_i$ by $1.5$ (which does not decrease the term since the means are optimal for it), we pay exactly $(\tfrac12)^2$ per entry.
\end{proof}

\subsection{Invoking the Algorithm}
Suppose we have a deterministic algorithm that given $\bar A$, returns $\bar A'$ of rank $k+1$that with probability at least $(2/3)+\delta$ satisfies
\begin{equation}
\fsnorm{\bar A-\bar A'}\leq\fsnorm{\bar A-\bar A_{k+1}}+\epsilon\fsnorm{\bar A} . 
\end{equation}
This the hypothesis of~\Cref{thm:lb}, except with $k+1$ instead of $k$; this will be more convenient to work with, and does not change the theorem statement (one can shift $k$ by $1$ everywhere).

We now move from working with $\bar A$ to working with $\bar B$.
\begin{lemma}\label{lmm:algb}
We can obtain an approximation $\bar B'$ of rank $k+2$ that satifies
\begin{equation}\label{eq:algb}
\fsnorm{\bar B-\bar B'}\leq\fsnorm{\bar B-\bar B_k}+O(\epsilon)\cdot\fsnorm{\bar B} . 
\end{equation}
\end{lemma}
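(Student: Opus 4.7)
The plan is to exploit the fact that $\bar A$ and $\bar B$ differ only by the rank-$1$ matrix $2J$, so any good low-rank approximation of one immediately yields a (slightly higher-rank) good approximation of the other.

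First, given the rank-$(k+1)$ approximation $\bar A'$ produced by the algorithm, I would set $\bar B' := \bar A' - 2J$. Since $2J$ has rank $1$, $\bar B'$ has rank at most $k+2$, as required. The reconstruction error is preserved exactly:
\[
\fsnorm{\bar B - \bar B'} = \fsnorm{(\bar A-2J) - (\bar A'-2J)} = \fsnorm{\bar A - \bar A'} \leq \fsnorm{\bar A-\bar A_{k+1}} + \epsilon\fsnorm{\bar A}.
\]

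Next I would translate the right-hand side from $\bar A$ back to $\bar B$. For the optimal approximation term, the matrix $\bar B_k + 2J$ has rank at most $k+1$, so it is a valid competitor for $\bar A_{k+1}$; hence
\[
\fsnorm{\bar A - \bar A_{k+1}} \leq \fsnorm{\bar A - (\bar B_k + 2J)} = \fsnorm{\bar B - \bar B_k}.
\]
For the additive term, I need to show $\fsnorm{\bar A} = O(\fsnorm{\bar B})$. Since every entry of $\bar A$ lies in $[1,3]$, we have $\fsnorm{\bar A} \leq 9 N k r = O(Cnkr)$. On the other hand, in the proof of~\Cref{lmm:spectralb} it was established that $\fsnorm{\bar B} \geq \tfrac{1}{2}Cnkr$ with high probability (each $|\bar B_{ij}|$ is an i.i.d.\ uniform bit). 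Dividing gives $\fsnorm{\bar A} \leq O(1)\cdot\fsnorm{\bar B}$, where the constant depends only on the fixed constant $C$. Combining everything yields
\[
\fsnorm{\bar B - \bar B'} \leq \fsnorm{\bar B - \bar B_k} + O(\epsilon)\cdot\fsnorm{\bar B},
\]
which is exactly~\Cref{eq:algb}. The success probability carries over up to the tiny probability that the lower bound on $\fsnorm{\bar B}$ fails, which is absorbed into the $\delta$ slack already present in the algorithm's guarantee.

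The only mild subtlety (and the main conceptual point) is the comparison $\fsnorm{\bar A} = O(\fsnorm{\bar B})$: a priori one might worry that subtracting $2J$ could drastically reduce the Frobenius norm, but the random structure of $\bar B$ guarantees a large Frobenius mass on the order of $Cnkr$, matching the trivial upper bound on $\fsnorm{\bar A}$ up to a constant factor. Everything else is an algebraic identity together with the rank-additivity observation, so no further obstacle is expected.
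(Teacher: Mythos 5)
Your proposal is correct and follows essentially the same route as the paper: set $\bar B'=\bar A'-2J$, note the Frobenius error is unchanged, compare $\bar A_{k+1}$ against the rank-$(k+1)$ competitor $\bar B_k+2J$, and absorb $\epsilon\fsnorm{\bar A}$ into $O(\epsilon)\fsnorm{\bar B}$. You actually spell out the last comparison ($\fsnorm{\bar A}=O(\fsnorm{\bar B})$ via the entry bounds and the $\Omega(Cnkr)$ mass of $\bar B$) more carefully than the paper does, which is a welcome addition rather than a deviation.
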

\begin{proof}
We take $\bar B'=\bar A'-2J$. Then, for a constant $c>0$,
\begin{align*}
  \fsnorm{\bar B - \bar B'} &= \fsnorm{(\bar A-2J) - (\bar A'-2J)} \\
  &= \fsnorm{\bar A - \bar A'} \\
  &\leq \fsnorm{\bar A-\bar A_{k+1}}+\epsilon\fsnorm{\bar A} \\
  &\leq \fsnorm{\bar A-\bar B_k-2J}+\epsilon\fsnorm{\bar A} \\
  &\leq \fsnorm{\bar B-\bar B_k}+c\epsilon\fsnorm{\bar B} ,
\end{align*}
and we scale $\epsilon$ down by $c$.
\end{proof}

Combined with Corollary~\ref{cor:errork}, we get
\begin{corollary}\label{cor:errorkb}
$\fsnorm{\bar B-\bar B'}\leq \tfrac14nkr + 4(1+C)\beta kn$.
\end{corollary}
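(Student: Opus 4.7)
The plan is straightforward since this corollary is purely a combination of the two results immediately preceding it. First I would invoke Lemma~\ref{lmm:algb} to get
\[
  \fsnorm{\bar B-\bar B'}\;\leq\;\fsnorm{\bar B-\bar B_k}+O(\epsilon)\cdot\fsnorm{\bar B}.
\]
Then I would upper-bound the right-hand side using Corollary~\ref{cor:errork}, which controls the sum $\fsnorm{\bar B-\bar B_k}+\epsilon\fsnorm{\bar B}$ by $\tfrac14 nkr + 4(1+C)\beta kn$. Chaining the two inequalities yields a bound of the same shape as the corollary statement.

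The only subtlety is the mismatch between the $O(\epsilon)$ factor in Lemma~\ref{lmm:algb} and the $\epsilon$ factor in Corollary~\ref{cor:errork}. I would handle this by noting that the hidden constant $c$ in $O(\epsilon)\leq c\epsilon$ is absolute, so that running through the proof of Corollary~\ref{cor:errork} with $c\epsilon$ in place of $\epsilon$ multiplies the second summand by $c$, producing $\tfrac14 nkr + 4c(1+C)\beta kn$. Since $\beta$ is a free constant that will only be chosen small in the subsequent analysis, I can simply redefine $\beta$ to absorb $c$ and recover exactly $\tfrac14 nkr + 4(1+C)\beta kn$, as required.

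I do not anticipate any real obstacle: this corollary is purely bookkeeping that packages the prior two bounds into the form convenient for the majority-reduction argument that follows. The only thing to double-check is that absorbing $c$ into $\beta$ does not clash with later uses of $\beta$; inspection of the subsequent section shows $\beta$ is only required to be ``sufficiently small,'' so rescaling by an absolute constant is harmless.
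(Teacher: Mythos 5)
Your proposal is correct and matches the paper exactly: the paper derives this corollary by simply chaining Lemma~\ref{lmm:algb} with Corollary~\ref{cor:errork}, and the $O(\epsilon)$-versus-$\epsilon$ constant is absorbed by rescaling (the proof of Lemma~\ref{lmm:algb} already notes "we scale $\epsilon$ down by $c$," which is equivalent to your rescaling of $\beta$). No gaps.
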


Recall that $\bar B_k'$ denotes the optimal rank-$k$ approximation of $\bar B'$, and thus $\fsnorm{\bar B_k'}$ is the sum of squares of the top-$k$ singular values of $\bar B'$ (which has a total of $k+2$ singular values).

\begin{lemma}\label{lmm:spectralbprime}
The singular values of $\bar B'$ satisfy
\[
  \sum_{i=1}^{k+2}\left(\sigma_i(\bar B)-\sigma_i(\bar B')\right)^2 \leq O(C\beta nk) .
\]
Furthermore, the two bottom squared singular values are each $O(C\beta nk)$.
\end{lemma}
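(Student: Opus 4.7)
The plan is to combine the upper bound on $\fsnorm{\bar B - \bar B'}$ from Corollary~\ref{cor:errorkb} with the spectral profile of $\bar B$ from Lemma~\ref{lmm:spectralb}, linking the two via the Hoffman-Wielandt inequality exactly as in the proof of Lemma~\ref{lmm:rk11}. Since $\bar B'$ has rank at most $k+2$, we have $\sigma_i(\bar B')=0$ for $i>k+2$, so Hoffman-Wielandt gives
\[
\fsnorm{\bar B - \bar B'} \;\geq\; \sum_{i=1}^{k+2}\bigl(\sigma_i(\bar B)-\sigma_i(\bar B')\bigr)^2 + \sum_{i=k+3}^{kr}\sigma_i(\bar B)^2.
\]

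Next I would lower-bound the tail term. By Lemma~\ref{lmm:spectralb}, each $\sigma_i(\bar B)^2$ with $i>k$ equals $(\tfrac14\pm o(1))n$; tracing back through the proof (which is essentially Lemma~\ref{lmm:rk1} applied to $\bar B$ with $r$ replaced by $kr$), the $o(1)$ slack is $O(\sqrt{kr/n})$. Summing the $kr-k-2$ such terms yields
\[
\sum_{i=k+3}^{kr}\sigma_i(\bar B)^2 \;\geq\; \tfrac14 n(kr-k-2) - O\bigl((kr)^{3/2}\sqrt{n}\bigr) \;=\; \tfrac14 nkr - O(nk),
\]
where the last equality uses the standing hypothesis $(kr)^3=O(n)$ to bound $(kr)^{3/2}\sqrt n=O(n)$. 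Combined with Corollary~\ref{cor:errorkb}'s upper bound $\fsnorm{\bar B-\bar B'}\leq \tfrac14 nkr + 4(1+C)\beta nk$, rearranging gives
\[
\sum_{i=1}^{k+2}\bigl(\sigma_i(\bar B)-\sigma_i(\bar B')\bigr)^2 \;\leq\; 4(1+C)\beta nk + O(nk) \;=\; O(C\beta nk),
\]
where the $O(nk)$ term is absorbed into $O(C\beta nk)$ using that $C$ is chosen large enough (so $C\beta=\Omega(1)$).

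For the furthermore part, Lemma~\ref{lmm:spectralb} gives $\sigma_{k+1}(\bar B),\sigma_{k+2}(\bar B)=O(\sqrt n)$. The bound just proved yields $\bigl|\sigma_{k+j}(\bar B)-\sigma_{k+j}(\bar B')\bigr|\leq O(\sqrt{C\beta nk})$ for $j\in\{1,2\}$, so by the triangle inequality $\sigma_{k+j}(\bar B')\leq O(\sqrt n)+O(\sqrt{C\beta nk})=O(\sqrt{C\beta nk})$ (again using that $C\beta k\geq 1$), and squaring gives $\sigma_{k+j}(\bar B')^2=O(C\beta nk)$.

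\textbf{Main obstacle.} The only delicate point is bookkeeping the spectral slack: one must check that the $o(1)$ appearing in Lemma~\ref{lmm:spectralb} is small enough, when multiplied by the number of middle singular values ($\approx kr$) and by $n$, to remain $O(nk)$. This is precisely where the assumption $(kr)^3=O(n)$ is used, and verifying this absorption (rather than just writing $o(nkr)$) is the step that requires care.
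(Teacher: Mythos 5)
Your overall strategy (Hoffman--Wielandt on $\fsnorm{\bar B-\bar B'}$, using that $\bar B'$ has rank $k+2$ so the tail of $\bar B$'s spectrum appears on the right-hand side) is the paper's strategy, but there is a genuine quantitative gap in how you close the estimate. You bound $\fsnorm{\bar B-\bar B'}$ from above by $\tfrac14 nkr+4(1+C)\beta nk$ (Corollary~\ref{cor:errorkb}) and the tail $\sum_{i=k+3}^{kr}\sigma_i(\bar B)^2$ from below by $\tfrac14 nkr-O(nk)$, and subtract. The problem is that these two bounds do not cancel to $O(C\beta nk)$: the true value of $\fsnorm{\bar B-\bar B_{k+2}}$ is $(kr-k-2)\cdot(\tfrac14\pm o(1))n=\tfrac14 nkr-\tfrac14 n(k+2)+O(n)$, i.e.\ genuinely $\Theta(nk)$ \emph{below} $\tfrac14 nkr$, with an absolute constant (about $\tfrac14$) in front of $nk$. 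So your subtraction leaves a residual $\approx\tfrac14 nk$ that has nothing to do with $C$ or $\beta$. Your attempt to absorb it via ``$C\beta=\Omega(1)$'' is inconsistent with the paper's parameter regime: the footnote in the paper's own proof of this lemma, and the final contradiction $\zeta(\gamma-\eta)^2\leq O(\sqrt{C\beta})$ in the majority-solving step, both require $C\beta$ to be a \emph{small} constant. The entire point of writing $O(C\beta nk)$ with an absolute hidden constant is that the bound can be tuned down by the choice of $\beta$; a bound of $\tfrac14 nk+O(C\beta nk)$ cannot, and it would break Lemma~\ref{lmm:toblocks} and the downstream contradiction.

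The paper avoids this loss by never passing through the absolute quantity $\tfrac14 nkr$: it uses Lemma~\ref{lmm:algb} directly, $\fsnorm{\bar B-\bar B'}\leq\fsnorm{\bar B-\bar B_k}+O(C\beta nk)$, so that after Hoffman--Wielandt the slack is exactly $\fsnorm{\bar B-\bar B_k}-\fsnorm{\bar B-\bar B_{k+2}}=\sigma_{k+1}(\bar B)^2+\sigma_{k+2}(\bar B)^2=O(n)$ by Lemma~\ref{lmm:spectralb}. This $O(n)$ (not $O(nk)$) term is then absorbed into $O(C\beta nk)$ by taking $k$ larger than a sufficiently large constant depending on $C\beta$ (legitimate since small $k$ is covered by the $k=1$ argument) --- not by taking $C\beta$ large. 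Your ``furthermore'' step is otherwise the same as the paper's, but it inherits both issues: it relies on the first bound, and it again absorbs $O(\sqrt n)$ into $O(\sqrt{C\beta nk})$ via the same invalid assumption on $C\beta$ rather than via $k$ being large.
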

\begin{proof}
Using~Lemma~\ref{lmm:algb} as an upper bound and the Hoffman-Weilandt inequality as a lower bound on $\fsnorm{\bar B-\bar B'}$,
\[ \sum_{i=1}^{k+2}\left(\sigma_i(\bar B)-\sigma_i(\bar B')\right)^2+ \fsnorm{\bar B-\bar B_{k+2}} \leq \fsnorm{\bar B-\bar B'}\leq\fsnorm{\bar B-\bar B_k}+O(C\beta nk) . \]
Observe that $\fsnorm{\bar B-\bar B_k} - \fsnorm{\bar B-\bar B_{k+2}} = \sigma_{k+1}(\bar B)^2 + \sigma_{k+2}(\bar B)^2$, and by Lemma~\ref{lmm:spectralb} each of these two summands is $O(n)$, which is less than $O(C\beta nk)$.\footnote{We recall that $\beta$ and $C$ are constants that will eventually be chosen such that $C\beta$ is smaller than a sufficiently small constant. It holds that $n=O(C\beta nk)$ is $k$ is larger than a sufficiently large constant, which we can assume w.l.o.g.~since we have already proven the $k=1$ case.}
Plugging this above yields the desired inequality, $\sum_{i=1}^{k+2}\left(\sigma_i(\bar B)-\sigma_i(\bar B')\right)^2 \leq O(C\beta nk)$.

As for the bottom two singular values of $\bar B'$, the inequality just proven yields in particular $\left(\sigma_{k+1}(\bar B)-\sigma_{k+1}(\bar B')\right)^2 \leq O(C\beta nk)$, hence $\sigma_{k+1}(\bar B')\leq \sigma_{k+1}(\bar B)+O(\sqrt{C\beta nk})$.
As already mentioned above, $\sigma_{k+1}(\bar B)=O(\sqrt{n})=O(\sqrt{C\beta nk})$ by Lemma~\ref{lmm:spectralb}. Thus $\sigma_{k+1}(\bar B')^2\leq O(C\beta nk)$. The same holds for $\sigma_{k+2}(\bar B')$.
\end{proof}

%

\subsubsection{Averaging Columns in Blocks}\label{sec:kto1}
We now carry out the main part of the reduction to the rank-$1$ case.
We do this by showing that $\bar B$ can be approximated by the matrix resulting from taking $\bar B'$ and replacing each column in each block by the average of columns in that block. Note that in the resulting matrix, each block has rank-$1$ since its columns are identical. Therefore, by averaging, we could get a rank-$1$ approximation for a large constant fraction of the blocks.
Let us now argue this formally.

Let $\Pi_{\1}$ be the orthogonal projection of $\R^{kr}$ on the subspace spanned by block indicators.
Note that for a matrix $Z\in\R^{N\times kr}$, the operation $Z\Pi_{\1}$ averages the columns in each block.
The main lemma for this part of this following.
\begin{lemma}\label{lmm:toblocks}
$\fsnorm{\bar B - \bar B'\Pi_{\1}} \leq \fsnorm{\bar B - \bar B'} + O(\sqrt{C\beta}\cdot nk)$.
\end{lemma}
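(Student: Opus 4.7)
The plan is to apply a Pythagorean decomposition with respect to the projection $\Pi_{\1}$ and then control the resulting residual using spectral properties of $\bar B$ and $\bar B'$. Since $\Pi_{\1}$ and $\Pi_{\1}^\perp$ are complementary orthogonal projections, for any matrices $X,Y$ the pieces $X\Pi_{\1}$ and $Y\Pi_{\1}^\perp$ are Frobenius-orthogonal. Applying this to both $\bar B-\bar B'\Pi_{\1}$ and $\bar B-\bar B'$ and subtracting gives
\[
\fsnorm{\bar B-\bar B'\Pi_{\1}}-\fsnorm{\bar B-\bar B'}=\fsnorm{\bar B\Pi_{\1}^\perp}-\fsnorm{(\bar B-\bar B')\Pi_{\1}^\perp}=2\langle \bar B\Pi_{\1}^\perp,\bar B'\Pi_{\1}^\perp\rangle-\fsnorm{\bar B'\Pi_{\1}^\perp},
\]
so the lemma reduces to upper bounding the right-hand side by $O(\sqrt{C\beta}\cdot nk)$.

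For the inner product, I would use that $\bar B'\Pi_{\1}^\perp$ has rank at most $k+2$, together with a bound on the spectral norm of $\bar B\Pi_{\1}^\perp$. Because $V\Pi_{\1}=V$ by construction of the Hadamard signal, we have $\bar B\Pi_{\1}^\perp=\bar S\Pi_{\1}^\perp$, and $\norm{\bar S}_2=O(\sqrt{n})$ by the standard $\pm\tfrac12$ i.i.d.\ random matrix bound used in the proof of Lemma~\ref{lmm:spectralb}. Combining the von Neumann trace inequality with Cauchy--Schwarz on the (at most) $k+2$ singular values,
\[
\langle \bar B\Pi_{\1}^\perp,\bar B'\Pi_{\1}^\perp\rangle\leq\sum_{i=1}^{k+2}\sigma_i(\bar B\Pi_{\1}^\perp)\sigma_i(\bar B'\Pi_{\1}^\perp)\leq\sqrt{(k+2)}\cdot\norm{\bar B\Pi_{\1}^\perp}_2\cdot\norm{\bar B'\Pi_{\1}^\perp}_F=O(\sqrt{nk})\cdot\norm{\bar B'\Pi_{\1}^\perp}_F.
\]

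The remaining ingredient is to show $\fsnorm{\bar B'\Pi_{\1}^\perp}=O(C\beta\cdot nk)$, from which the bound $2\cdot O(\sqrt{nk})\cdot\sqrt{C\beta nk}-0=O(\sqrt{C\beta}\cdot nk)$ follows. The point is that the top-$k$ right singular subspace of $\bar B'$ is close to $\mathrm{range}(\Pi_{\1})$: writing $\bar B=V+\bar S$ with $\sigma_k(V)=\Theta(\sqrt{Cnr})$ (equal Hadamard singular values) and $\norm{\bar S}_2=O(\sqrt{n})$, a Wedin/Davis--Kahan argument gives that the top-$k$ right singular subspace of $\bar B$ has principal angles $\theta_i$ with $\sum_i\sin^2\theta_i=O(k/(Cr))$ to $\mathrm{range}(\Pi_{\1})$. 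Transferring this to $\bar B'$ via the Frobenius closeness from Lemma~\ref{lmm:algb} and the spectral-gap information of Lemma~\ref{lmm:spectralbprime}, and adding the contribution of the bottom two singular values of $\bar B'$ (which sum squared to $O(C\beta nk)$ by Lemma~\ref{lmm:spectralbprime}), yields the claim.

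The main obstacle is precisely this subspace-alignment step. One must run Wedin in two stages -- first comparing $\bar B$ to the clean signal $V$ to get alignment with $\mathrm{range}(\Pi_{\1})$, and then transferring to $\bar B'$ -- while carefully tracking how $\sigma_1(\bar B')^2=\Theta(Cnr)$ multiplies the small squared-angle error, and how this combines with the explicit $O(C\beta nk)$ bound on the two smallest singular values of $\bar B'$. Balancing the $C$, $\beta$, and $\epsilon=\beta/r$ factors so that both contributions fit into the $O(C\beta nk)$ envelope is the delicate part; everything else is the clean Pythagorean reduction and the rank-based Cauchy--Schwarz estimate above.
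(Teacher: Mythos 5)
Your Pythagorean reduction and cross-term bound are correct and essentially match the first half of the paper's proof: the paper reaches the same point via the Hoffman--Wielandt inequality followed by Cauchy--Schwarz on the top $k+2$ singular values, which is equivalent to your von Neumann estimate, and your observation that $\bar B\Pi_{\1}^\perp=\bar S\Pi_{\1}^\perp$, hence $\norm{\bar B\Pi_{\1}^\perp}_2=O(\sqrt n)$, gives the needed $O(\sqrt{nk})$ factor directly. You have also correctly isolated the crux, $\fsnorm{\bar B'\Pi_{\1}^\perp}=O(C\beta nk)$, and your two-stage plan (align $\mathrm{range}(\Pi_{\1})$ with the top-$k$ right singular subspace $V_T$ of $\bar B$, then align $V_T$ with the row space of $\bar B'$, then add the two residual singular values controlled by Lemma~\ref{lmm:spectralbprime}) is exactly the skeleton of the paper's argument.

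The gap is in the tool you invoke for the transfer to $\bar B'$. Wedin/Davis--Kahan controls subspace rotation by the norm of the perturbation, but $E=\bar B'-\bar B$ is not small in any useful norm: since $\bar B'$ has rank $k+2$, $\fsnorm{E}\geq\fsnorm{\bar B-\bar B_{k+2}}=\Theta(nkr)$, the same order as $\fsnorm{\bar B}$ itself, and dividing $\norm{E}_2\leq\norm{E}_F$ by the singular gap yields at best $\norm{\sin\Theta}_2=O(\sqrt{k/C})$, which is vacuous. What the paper uses instead is a \emph{captured-energy} argument (Lemmas~\ref{lmm:tech1} and~\ref{lmm:twice}): because $\bar B'$ attains near-optimal Frobenius error, its row space $W$ satisfies $\fsnorm{\bar BW^T}\geq\fsnorm{\bar B_k}-O(nk)$, and since the top $k$ squared singular values of $\bar B$ are all comparable while the rest are $O(n)$, this forces $\fsnorm{V_T^TW^T}\geq k-O(\epsilon k)$; the hypothesis here is an excess-approximation-error bound, not a perturbation bound, and the two are not interchangeable. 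Two further pieces you would still need to supply: the transitivity step combining the two alignments is not free (the paper proves it as Lemma~\ref{lmm:tech3}), and the conversion from angles to $\fsnorm{(\bar B'\Pi_{\1}^\perp)_k}$ cannot simply multiply the summed squared sines by $\sigma_1(\bar B')^2$ without care -- the paper's Lemma~\ref{lmm:tech2} shows the loss can be charged to only the top $\epsilon k$ singular values, which is precisely the step you flag as delicate but do not carry out.
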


The proof will go by showing that the row space of $\bar B'$ has to be close to the span of the block indicators, which is the subspace on which $\Pi_{\1}$ projects. (This would yield $\bar B' \approx \bar B'\Pi_{\1}$ and hence $\fsnorm{\bar B - \bar B'} \approx \fsnorm{\bar B - \bar B'\Pi_{\1}}$, as the lemma asserts).
The way we show this is by transitivity, by showing that both subspaces are close to the top-$k$ row space of $\bar B$.
We will require the following technical linear algebraic claims, whose proofs are deferred to~\Cref{sec:kto1proofs} for better readability.

\begin{lemma}\label{lmm:tech1}
Let $A,B\in\R^{n\times m}$.
Let $B=U\Sigma V^T$ be the SVD of $B$.
Suppose $\fsnorm{A-B}\leq\fsnorm{A}-\Delta$.
Then $\fsnorm{AV}\geq\Delta$.
\end{lemma}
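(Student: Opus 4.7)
\textbf{Proof proposal for Lemma \ref{lmm:tech1}.} The plan is to use the orthogonal projection onto the row space of $B$ (which is exactly the column span of $V$) together with a Pythagorean decomposition of Frobenius norms. Let $\Pi = VV^T$ and $\Pi^\perp = I - \Pi$, and note that since $V$ has orthonormal columns we have $V^TV = I$, so $B\Pi = U\Sigma V^T V V^T = U\Sigma V^T = B$, and in particular $B\Pi^\perp = 0$.

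First I would expand $\|A-B\|_F^2$ using this observation. Writing $A - B = (A\Pi - B) + A\Pi^\perp$ and noting that the two summands have orthogonal row spaces (their supports lie in orthogonal subspaces of $\R^m$), we get
\[
\fsnorm{A-B} = \fsnorm{A\Pi - B} + \fsnorm{A\Pi^\perp}.
\]
Similarly, $\fsnorm{A} = \fsnorm{A\Pi} + \fsnorm{A\Pi^\perp}$.

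Combining these two identities with the hypothesis $\fsnorm{A-B}\leq\fsnorm{A}-\Delta$ and canceling $\fsnorm{A\Pi^\perp}$ from both sides yields
\[
\fsnorm{A\Pi - B} \leq \fsnorm{A\Pi} - \Delta,
\]
and in particular $\fsnorm{A\Pi} \geq \Delta$. Finally, since $V^TV = I$, one computes $\fsnorm{A\Pi} = \mathrm{tr}(AVV^TVV^TA^T) = \mathrm{tr}(AVV^TA^T) = \fsnorm{AV}$, so $\fsnorm{AV}\geq\Delta$ as required.

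The argument is essentially a one-line Pythagorean decomposition, so I do not expect a real obstacle. The only point that requires a moment of care is the observation $B\Pi = B$ (which is what makes the cross term vanish and allows the clean cancellation), and the standard identity $\fsnorm{AVV^T} = \fsnorm{AV}$ for $V$ with orthonormal columns.
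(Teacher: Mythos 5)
Your proposal is correct and is essentially the same argument as the paper's: both decompose $\fsnorm{A-B}$ and $\fsnorm{A}$ via the Pythagorean theorem with respect to the projections $VV^T$ and $I-VV^T$, use $B(I-VV^T)=0$ to cancel the $\fsnorm{A(I-VV^T)}$ terms, and conclude via $\fsnorm{AVV^T}=\fsnorm{AV}$. No gaps.
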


\begin{lemma}\label{lmm:tech2}
Let $A\in\R^{n\times n}$ and let $A=U\Sigma V^T$ be its SVD.
Let $V_k$ be the restriction of $V$ to the top-$k$ right singular vectors of $A$.
Let $\Pi$ an orthogonal projection on some $k$-dimensional subspace.
If
\[ (1-\epsilon)k \leq \fsnorm{V_k^T\Pi} \leq k, \]
then
\[ \fsnorm{(A\Pi^\perp)_k} \leq \fsnorm{A_{\epsilon k}} + \fsnorm{(A_{n-k})_k} . \]
(Recall again that $\fsnorm{X_k}$ denotes the sum of squared top-$k$ singular values for every matrix $X$.)
\end{lemma}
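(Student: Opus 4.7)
The plan is to split $A\Pi^\perp$ along the top-$k$/tail decomposition of the SVD of $A$, writing
\[
  A\Pi^\perp = A_k\Pi^\perp + (A-A_k)\Pi^\perp,
\]
and exploiting the fact that the two summands have orthogonal column spaces. Indeed, if $U_k$ collects the top-$k$ left singular vectors of $A$, the columns of $A_k$ lie in $\mathrm{span}(U_k)$ and the columns of $A-A_k$ lie in $\mathrm{span}(U_k)^\perp$; right multiplication only shrinks column spaces, so for any matrix $Q$, the matrices $A_k\Pi^\perp Q$ and $(A-A_k)\Pi^\perp Q$ also have their columns in these orthogonal subspaces, which forces their Frobenius inner product to vanish.

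Combining this with the variational identity $\fsnorm{M_k} = \max_Q \fsnorm{MQ}$, where the max ranges over orthogonal projections $Q$ of rank $k$, gives
\[
  \fsnorm{(A\Pi^\perp)_k} = \max_{Q}\bigl(\fsnorm{A_k\Pi^\perp Q} + \fsnorm{(A-A_k)\Pi^\perp Q}\bigr) \leq \fsnorm{A_k\Pi^\perp} + \fsnorm{((A-A_k)\Pi^\perp)_k},
\]
using $\fsnorm{A_k\Pi^\perp Q}\leq\fsnorm{A_k\Pi^\perp}$ to eliminate $Q$ from the first summand.

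Each of the two remaining terms is then bounded separately. For the top part, writing $A_k\Pi^\perp = U_k\Sigma_k V_k^T\Pi^\perp$ and using $U_k^T U_k = I_k$ yields $\fsnorm{A_k\Pi^\perp} = \sum_{i=1}^k \sigma_i^2\,\|\Pi^\perp v_i\|_2^2$, where $v_i$ is the $i$-th right singular vector of $A$. The hypothesis $\fsnorm{V_k^T\Pi}\geq(1-\epsilon)k$, together with the Pythagorean identity $\fsnorm{V_k^T\Pi}+\fsnorm{V_k^T\Pi^\perp}=\fsnorm{V_k^T}=k$, implies $\sum_{i=1}^k\|\Pi^\perp v_i\|_2^2\leq\epsilon k$. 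Since each weight lies in $[0,1]$, a rearrangement argument (concentrate the mass on the largest $\sigma_i^2$'s) bounds the sum by $\sigma_1^2+\ldots+\sigma_{\epsilon k}^2 = \fsnorm{A_{\epsilon k}}$. For the tail part, the standard fact that right multiplication by a projection cannot increase singular values gives $\sigma_i((A-A_k)\Pi^\perp)\leq\sigma_i(A-A_k)=\sigma_{k+i}(A)$ for every $i$, so $\fsnorm{((A-A_k)\Pi^\perp)_k}\leq \sigma_{k+1}^2+\ldots+\sigma_{2k}^2 = \fsnorm{(A_{n-k})_k}$. Adding the two bounds proves the lemma.

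The main step to get right is the column-space orthogonality between $A_k\Pi^\perp$ and $(A-A_k)\Pi^\perp$; it is what keeps the bound tight, avoiding the factor-of-two loss a naive triangle-inequality attack on $\fsnorm{(X+Y)_k}$ would incur. Everything else is standard SVD and rearrangement bookkeeping.
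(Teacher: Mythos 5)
Your proof is correct and follows essentially the same route as the paper's: the same split of $A\Pi^\perp$ into $A_k\Pi^\perp$ plus the tail with orthogonal column spaces, the bound $\fsnorm{A_k\Pi^\perp}\le\fsnorm{A_{\epsilon k}}$ from the hypothesis on $\fsnorm{V_k^T\Pi}$, and singular-value monotonicity for the tail term. The only differences are presentational (you use the variational identity $\fsnorm{M_k}=\max_Q\fsnorm{MQ}$ and a direct rearrangement on the weights $\norm{\Pi^\perp v_i}_2^2$, where the paper rotates to a canonical form and minimizes $\fsnorm{\Sigma_k\Phi}$), and these are equivalent.
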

As a small digression, let us preview that we will use this lemma twice, on the matrices $\bar B$ and $\bar B'$.
In both cases the projection would be $\Pi_{\mathbf1}$.
In the former case the bound yielded by the lemma would be $O(nk)$, and it the latter case it would be (the better bound) $O(C\beta nk)$.
That is, we will get $\fsnorm{(\bar B\Pi^\perp)_k} = O(nk)$ and $\fsnorm{(\bar B'\Pi^\perp)_k} = O(C\beta nk)$ (see~\Cref{sec:kto1proofs} for an elaboration why).
However we still need to establish the condition $(1-\epsilon)k \leq \fsnorm{V_k^T\Pi} \leq k$ for both invocations, which we will do shortly.

\begin{lemma}\label{lmm:tech3}
Let $V,U,W\in\R^{n\times k}$ matrices such that each has orthonormal columns.
Suppose $\fsnorm{V^TU}\geq(1-\epsilon)k$ and $\fsnorm{U^TW}\geq(1-\epsilon)k$.
Then $\fsnorm{V^TW}\geq(1-O(\epsilon))k$.
\end{lemma}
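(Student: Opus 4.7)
The plan is to convert the Frobenius-inner-product conditions into Frobenius distances between the orthogonal projections $VV^T$, $UU^T$, $WW^T$, apply the triangle inequality in that metric, and then convert back.

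First I would record the identity that for any matrix $X\in\R^{n\times k}$ with orthonormal columns, $XX^T$ is the orthogonal projection onto the column span of $X$ and $\fsnorm{XX^T}=k$. This gives, for orthonormal $V,U$,
\[
  \fsnorm{VV^T - UU^T} = \fsnorm{VV^T}+\fsnorm{UU^T}-2\,\mathrm{tr}(VV^TUU^T) = 2k - 2\fsnorm{V^TU},
\]
using $\mathrm{tr}(VV^TUU^T)=\mathrm{tr}(U^TVV^TU)=\fsnorm{V^TU}$. Thus the hypotheses translate to
\[
  \fsnorm{VV^T - UU^T} \leq 2\epsilon k, \qquad \fsnorm{UU^T - WW^T} \leq 2\epsilon k.
\]

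Next I would invoke the triangle inequality for $\norm{\cdot}_F$ on the space of $n\times n$ matrices:
\[
  \norm{VV^T - WW^T}_F \leq \norm{VV^T - UU^T}_F + \norm{UU^T - WW^T}_F \leq 2\sqrt{2\epsilon k},
\]
so $\fsnorm{VV^T - WW^T}\leq 8\epsilon k$. Finally, applying the identity from the first step in reverse (now to the pair $V,W$) gives
\[
  2k - 2\fsnorm{V^TW} \leq 8\epsilon k,
\]
i.e.\ $\fsnorm{V^TW}\geq (1-4\epsilon)k = (1-O(\epsilon))k$, as claimed.

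I do not anticipate a serious obstacle: the only nonroutine observation is that the Frobenius norm of the difference of two rank-$k$ orthogonal projections is an exact linear function of the inner product $\fsnorm{V^TU}$, which immediately turns the ``near-identity'' conditions on $V^TU$ and $U^TW$ into genuine distances to which one may apply the triangle inequality. Once that is in hand, the bound follows from two applications of a trace identity and one triangle inequality, with no hidden issues about the ambient dimension $n$ or about the relationship between the three column spans.
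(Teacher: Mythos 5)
Your proof is correct, and it takes a genuinely different and considerably cleaner route than the paper's. The paper proves the lemma by rotating so that $U$ becomes the top $k$ standard basis vectors, splitting $V$ and $W$ into ``top'' and ``rest'' blocks, and then running a chain of trace, Cauchy--Schwarz, and submultiplicativity estimates followed by an SVD argument on $V_{top}$ to control $\fsnorm{V_{top}^TW_{top}}$. You instead observe the exact identity $\fsnorm{VV^T-UU^T}=2k-2\fsnorm{V^TU}$ for matrices with orthonormal columns, which turns each hypothesis into a bound $\fsnorm{VV^T-UU^T}\leq 2\epsilon k$ on the Frobenius distance between the corresponding orthogonal projections; the conclusion is then a single application of the triangle inequality, giving the explicit constant $\fsnorm{V^TW}\geq(1-4\epsilon)k$. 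All the steps check out: the trace identity $\mathrm{tr}(VV^TUU^T)=\fsnorm{V^TU}$ is correct, and $\bigl(2\sqrt{2\epsilon k}\bigr)^2=8\epsilon k$ gives the stated bound. Your argument also makes transparent something the paper's proof obscures, namely that the statement depends only on the three column spans (it is a transitivity statement for the projection metric on the Grassmannian), whereas the paper's blockwise computation buys nothing extra here beyond what your two-line identity already delivers.
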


We now prove~Lemma~\ref{lmm:toblocks}.
Let $\bar B = U\Sigma V^T$ denote the SVD of $\bar B$.
Write it as $\bar B = U\Sigma V^T = U_T\Sigma_TV_T^T + U_B\Sigma_BV_B^T$ where $\Sigma_T$ are the top $k$ singular values and $\Sigma_B$ are the remaining (bottom) singular values.

\begin{lemma}\label{lmm:twice}
Let $\bar B^*\in\R^{N\times kr}$ be a rank-$k'$ matrix such that $\fsnorm{\bar B - \bar B^*} \leq \tfrac14nkr + O(nk)$.
Let $W\in\R^{k'\times kr}$ be an orthonormal basis for the row span of $\bar B^*$.
Then $\fsnorm{WV_T}\geq k-O(\epsilon(k+k'))$.
\end{lemma}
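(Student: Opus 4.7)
}
The plan is to use the approximation hypothesis on $\bar B^*$ to force the row span of $\bar B^*$ (i.e.\ the row span of $W$) to be almost aligned with $V_T$, which is exactly what $\fsnorm{WV_T}\approx k$ asserts. The first step converts the approximation guarantee into a statement about the projection onto $\mathrm{rowspan}(W)$. Since $\bar B W^TW$ is the best Frobenius approximation of $\bar B$ supported in the row span of $W$ (here I use that $W$ has orthonormal rows, so $W^TW$ is the orthogonal projection onto that $k'$-dimensional subspace), we get
\[
  \fsnorm{\bar B-\bar B^*}\;\geq\;\fsnorm{\bar B-\bar BW^TW}\;=\;\fsnorm{\bar B}-\fsnorm{\bar BW^TW}.
\]
Combined with the hypothesis $\fsnorm{\bar B-\bar B^*}\leq \tfrac14 nkr+O(nk)$, this yields the key lower bound $\fsnorm{\bar BW^TW}\geq \fsnorm{\bar B}-\tfrac14 nkr-O(nk)$.

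The second step expresses $\fsnorm{\bar BW^TW}$ in terms of the singular values of $\bar B$. Writing the SVD $\bar B=U\Sigma V^T$, orthonormality of $U$ gives $\fsnorm{\bar BW^TW}=\fsnorm{\Sigma V^TW^T}=\sum_i\sigma_i^2\|Wv_i\|^2$, where $v_1,\ldots,v_{kr}$ are the right singular vectors. Let $A:=k-\fsnorm{WV_T}=\sum_{i\leq k}\bigl(1-\|Wv_i\|^2\bigr)$, which is exactly the quantity I need to upper-bound. Orthogonality of $V$ together with $\fsnorm{W}=k'$ gives the key identity $\sum_i\|Wv_i\|^2=\fsnorm{WV}=k'$, hence $\sum_{i>k}\|Wv_i\|^2=k'-k+A$. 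Bounding the top-$k$ sum using $\sigma_i^2\geq\sigma_k^2$ for $i\leq k$ and the bottom sum using $\sigma_i^2\leq\sigma_{k+1}^2$ for $i>k$ gives
\[
  \fsnorm{\bar BW^TW}\;\leq\;\fsnorm{\bar B_k}-(\sigma_k^2-\sigma_{k+1}^2)\,A+\sigma_{k+1}^2\,(k'-k).
\]

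Combining the upper and lower bounds on $\fsnorm{\bar BW^TW}$ and rearranging,
\[
  (\sigma_k^2-\sigma_{k+1}^2)\,A\;\leq\;\tfrac14 nkr-\fsnorm{\bar B-\bar B_k}+O(nk)+\sigma_{k+1}^2(k'-k).
\]
By Lemma~\ref{lmm:spectralb}, $\sigma_{k+1}^2,\ldots,\sigma_{kr}^2$ are each $(\tfrac14\pm o(1))n$; under the assumption $kr=O(n^{1/3})$ (from the hypothesis of Theorem~\ref{thm:lb}), the accumulated deviation satisfies $\fsnorm{\bar B-\bar B_k}\geq\tfrac14 n(kr-k)-O(n)$, so $\tfrac14 nkr-\fsnorm{\bar B-\bar B_k}=O(nk)$. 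Moreover $\sigma_{k+1}^2(k'-k)=O(nk')$, while the spectral gap is $\sigma_k^2-\sigma_{k+1}^2=\Theta(nr)-O(n)=\Theta(nr)$. Dividing through and using $r=\beta/\epsilon=\Theta(1/\epsilon)$ yields $A\leq O((k+k')/r)=O(\epsilon(k+k'))$, as claimed. The main obstacle is ensuring that the additive slack from converting the approximation error into the lower bound on $\fsnorm{\bar BW^TW}$ is only $O(nk)$ (not $O(nkr)$): this relies on the sharp spectral statement in Lemma~\ref{lmm:spectralb} that the bottom $kr-k$ squared singular values are tightly concentrated near $\tfrac14 n$, which in turn is where the dimensional assumption $kr=O(n^{1/3})$ is used.
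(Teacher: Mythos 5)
Your proof is correct, and it follows the same overall strategy as the paper's --- first convert the approximation hypothesis into a lower bound on the energy of $\bar B$ captured by the row span of $W$, then use the spectral structure of $\bar B$ to force that energy into the top-$k$ right singular subspace --- but the two halves are executed differently enough to be worth comparing. Your first step (that $\bar BW^TW$ is the optimal approximation of $\bar B$ supported in $\mathrm{rowspan}(W)$, hence $\fsnorm{\bar BW^T}\geq\fsnorm{\bar B}-\fsnorm{\bar B-\bar B^*}$) is exactly the paper's Lemma~\ref{lmm:tech1}, proved inline. For the second step, the paper writes $\fsnorm{\bar BW^T}\leq\norm{\Sigma_T}_2^2\fsnorm{V_T^TW^T}+\norm{\Sigma_B}_2^2\fsnorm{V_B^TW^T}$ and divides through by the top singular value, whereas you do exact per-singular-value bookkeeping: you introduce the deficit $A=k-\fsnorm{WV_T}$, the conservation identity $\sum_i\norm{Wv_i}_2^2=\fsnorm{WV}=k'$, and extract $A$ multiplied by the spectral gap $\sigma_k^2-\sigma_{k+1}^2=\Theta(nr)$. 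Your version buys robustness in two places: it needs only a gap between $\sigma_k$ and $\sigma_{k+1}$ rather than a tight upper bound on $\norm{\Sigma_T}_2^2$ (the paper's bound $\norm{\Sigma_T}_2^2\leq\tfrac14nr+O(n)$ sits uneasily with Lemma~\ref{lmm:spectralb}, which places the top-$k$ squared singular values at $\Theta(Cnr)$), and the Pythagorean cancellation $\fsnorm{\bar B}-\fsnorm{\bar B_k}=\fsnorm{\bar B-\bar B_k}$ removes any dependence on the exact value of $\fsnorm{\bar B}$. The only step requiring care is the claim $\tfrac14nkr-\fsnorm{\bar B-\bar B_k}=O(nk)$, which needs each of the bottom $kr-k$ squared singular values to be at least $\tfrac14n-O(\sqrt{nkr})$ so that the accumulated deviation is $O((kr)^{3/2}\sqrt n)=O(n)$ under $(kr)^3=O(n)$; this is exactly the justification you supply, mirroring the proof of Lemma~\ref{lmm:rk1}.
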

\begin{proof}
On one hand, since $\fsnorm{\bar B}\geq\tfrac12nkr-O(kn)$, the hypothesis $\fsnorm{\bar B - \bar B^*} \leq \tfrac14nkr + O(kn)$ implies, by~Lemma~\ref{lmm:tech1}, $\fsnorm{\bar BW^T} \geq \tfrac14nkr - O(kn)$.
On the other hand,
\begin{align*}
  \tfrac14nkr-O(kn) \leq \fsnorm{\bar BW^T} &= \fsnorm{U_T\Sigma_TV_T^TW^T + U_B\Sigma_BV_B^TW^T} \\
  &= \fsnorm{\Sigma_TV_T^TW^T} + \fsnorm{\Sigma_BV_B^TW^T} \\
  &\leq \norm{\Sigma_T}_2^2\fsnorm{V_T^TW^T} + \norm{\Sigma_B}_2^2\fsnorm{V_B^TW^T} \\
  &\leq (\tfrac14nr+O(n))\cdot\fsnorm{V_T^TW^T} + O(n)\cdot\fsnorm{V_B^TW^T} \\
  &\leq (\tfrac14nr+O(n))\cdot\fsnorm{V_T^TW^T} + O(n)\cdot k' .
\end{align*}
The lemma follows by rearranging and recalling that $r=\Theta(1/\epsilon)$.
\end{proof}

We apply the above lemma twice: once with $\bar B^*$ being $\bar B'$ (whose rank is $k+2$), and once with $\bar B^*$ being the matrix obtained from averaging the columns in each block of $\bar B$ (note that this matrix has rank is $k$).
Since $\Pi_{\1}$ is the orthogonal projection on the row space of that matrix, then by the latter application of Lemma~\ref{lmm:twice} we have
\begin{equation}\label{eq:auxsoon}
  k \geq \fsnorm{V_T^T\Pi_{\1}} \geq k-O(\epsilon k) ,
\end{equation}
which establishes the condition of Lemma~\ref{lmm:tech2}, yielding
\[
  \fsnorm{(\bar B\Pi_{\1}^\perp)_k} \leq O(nk) . 
\]
For the former application, let $\bar B'=U'\Sigma'(V')^T$ denote the SVD of $\bar B'$.
Corollary~\ref{cor:errorkb} provides the requirement of Lemma~\ref{lmm:twice}, which in turn yields $\fsnorm{(V')^TV_T}\geq k-O(\epsilon k)$.
Together with Equation~(\ref{eq:auxsoon}), by transitivity (Lemma~\ref{lmm:tech3}),
\[
  k \geq \fsnorm{(V')^T\Pi_{\1}} \geq k-O(\epsilon k), 
\]
which establishes the condition of Lemma~\ref{lmm:tech2}, yielding
\[
  \fsnorm{(\bar B'\Pi_{\1}^\perp)_k} \leq O(C\beta nk) . 
\]
Together with the above,
\begin{equation}\label{eqpi1}
\norm{(\bar B\Pi_{\1}^\perp)_k}_F\norm{(\bar B'\Pi_{\1}^\perp)_k}_F \leq O(\sqrt{C\beta}\cdot nk) .
\end{equation}
We can extend this from rank-$k$ to rank-$(k+2)$ since the additional two square singular value of each of the matrices is $O(C\beta nk)$ (cf.~Lemmas~\ref{lmm:spectralb}~and~\ref{lmm:spectralbprime}).\footnote{Recall again that we set $C\beta<1$.}
Since $\bar B'\Pi_{\1}^\perp$ has rank $k+2$, then by Hoffman-Weilandt,
\begin{align*}
  \fsnorm{\bar B\Pi_{\1}^\perp - \bar B'\Pi_{\1}^\perp} &\geq \sum_{i}\left(\sigma_i(\bar B\Pi_{\1}^\perp) - \sigma_i(\bar B'\Pi_{\1}^\perp)\right)^2 & \text{by Hoffman-Weilandt} \\
  &= \fsnorm{\bar B\Pi_{\1}^\perp} - \sum_{i=1}^{k+2}\sigma_i(\bar B\Pi_{\1}^\perp)\sigma_i(\bar B'\Pi_{\1}^\perp) + \fsnorm{\bar B'\Pi_{\1}^\perp} & \mathrm{rank}(\bar B'\Pi_{\1}^\perp)=k+2 \\
  &\geq \fsnorm{\bar B\Pi_{\1}^\perp} - \sum_{i=1}^{k+2}\sigma_i(\bar B\Pi_{\1}^\perp)\sigma_i(\bar B'\Pi_{\1}^\perp) & \\
  &\geq \fsnorm{\bar B\Pi_{\1}^\perp} - \norm{(\bar B\Pi_{\1}^\perp)_{k}}_F\norm{(\bar B'\Pi_{\1}^\perp)_{k}}_F & \text{by Cauchy-Schwartz} \\
  &\geq \fsnorm{\bar B\Pi_{\1}^\perp} - O(\sqrt{C\beta}\cdot nk) & \text{by~\Cref{eqpi1} .} 
\end{align*}
Finally, by Pythagorean identities,
\begin{align*}
  \fsnorm{\bar B - \bar B'\Pi_{\1}} &= \fsnorm{\bar B\Pi_{\1} - \bar B'\Pi_{\1}} + \fsnorm{\bar B\Pi_{\1}^\perp} \\
  &= \fsnorm{\bar B - \bar B'} + \left(\fsnorm{\bar B\Pi_{\1}^\perp} - \fsnorm{\bar B\Pi_{\1}^\perp - \bar B'\Pi_{\1}^\perp}\right) \\
  &\leq \fsnorm{\bar B - \bar B'} + O(\sqrt{C\beta}\cdot nk) .
\end{align*}
This proves~Lemma~\ref{lmm:toblocks}.

\subsubsection{Relevant Blocks}


\begin{lemma}\label{lmm:ubfine}
There is a subset $I\subset[k]$ of size at least $|I|\geq0.99k$ such that for every $i\in I$,
\begin{equation}\label{eq:ubfine}
\fsnorm{\bar B^i-(\bar B')^i\P} \leq \fsnorm{\bar B^i = (\bar B^i)_1} + O(\sqrt{C\beta}\cdot n) ,
\end{equation}
where $(\bar B^i)_1$ is (as usual) the optimal rank-$1$ approximation of $\bar B^i$.
We refer to blocks $B_i$ with $i\in I$ as~\emph{relevant blocks}.
\end{lemma}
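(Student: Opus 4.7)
The plan is to exhibit $I$ via an averaging argument: if the total block-wise excess $\sum_{i=1}^k \bigl[\fsnorm{\bar B^i-(\bar B')^i\P}-\fsnorm{\bar B^i-(\bar B^i)_1}\bigr]$ is bounded by $O(\sqrt{C\beta}\,nk)$, then Markov's inequality applied to the nonnegative excesses leaves at most $k/100$ indices with per-block excess exceeding $O(\sqrt{C\beta}\,n)$, so the remaining $\geq 0.99 k$ indices form the desired $I$. Nonnegativity of each excess is immediate, since $(\bar B')^i\P$ has all columns equal to the average column of $(\bar B')^i$ and hence has rank at most one, so it cannot approximate $\bar B^i$ better than the optimal rank-one $(\bar B^i)_1$.

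To obtain the total bound, I would first split column-wise. Because the $k$ blocks partition the columns of $\bar B$ and $\Pi_{\1}$ restricts to $\P$ on each block,
\[ \fsnorm{\bar B-\bar B'\Pi_{\1}}=\sum_{i=1}^k\fsnorm{\bar B^i-(\bar B')^i\P}. \]
I would then chain the earlier ingredients. Lemma~\ref{lmm:toblocks} gives $\fsnorm{\bar B-\bar B'\Pi_{\1}}\leq\fsnorm{\bar B-\bar B'}+O(\sqrt{C\beta}\,nk)$; Lemma~\ref{lmm:algb} combined with the crude bound $\fsnorm{\bar B}\leq (1{+}C)nkr$ (every entry of $\bar B=\bar A-2J$ lies in $[-1,1]$, and $\bar B$ has $N\cdot kr=(1{+}C)n\cdot kr$ entries) yields $\fsnorm{\bar B-\bar B'}\leq\fsnorm{\bar B-\bar B_k}+O(C\beta\,nk)$, using $\epsilon r=\beta$. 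Since $C\beta<1$ in the parameter regime, the two additive terms collapse into a single $O(\sqrt{C\beta}\,nk)$ term.

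The final structural step is the subadditivity inequality
\[ \fsnorm{\bar B-\bar B_k}\leq\sum_{i=1}^k\fsnorm{\bar B^i-(\bar B^i)_1}, \]
which holds because the horizontal concatenation of the per-block rank-one approximations $(\bar B^i)_1$ is a global rank-$\leq k$ approximation of $\bar B$, and $\bar B_k$ is optimal by definition. Plugging this in and combining with the previous display yields
\[ \sum_{i=1}^k\fsnorm{\bar B^i-(\bar B')^i\P}\leq\sum_{i=1}^k\fsnorm{\bar B^i-(\bar B^i)_1}+O(\sqrt{C\beta}\,nk), \]
which is exactly the input to the Markov step above.

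The main thing to watch is the bookkeeping of additive slack: the $O(C\beta\,nk)$ term from Corollary~\ref{cor:errorkb}/Lemma~\ref{lmm:algb} and the $O(\sqrt{C\beta}\,nk)$ term from Lemma~\ref{lmm:toblocks} must fit together inside the promised $O(\sqrt{C\beta}\,nk)$ budget, which is the single place where the eventual smallness of $C\beta$ is used. No new probabilistic work is required, since we are already conditioning on $\Psi_{\text{typical}}\wedge\Psi_{\text{spectral}}$.
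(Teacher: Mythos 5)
Your proposal is correct and follows essentially the same route as the paper: the column-wise splitting of $\fsnorm{\bar B-\bar B'\Pi_{\1}}$, the chaining of Lemma~\ref{lmm:toblocks} with Lemma~\ref{lmm:algb} and the crude bound on $\fsnorm{\bar B}$, the subadditivity of per-block rank-one errors, and the final Markov/averaging step using nonnegativity of each excess are all exactly the steps in the paper's proof. Nothing is missing.
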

\begin{proof}
By Lemma~\ref{lmm:toblocks}, $\fsnorm{\bar B - \bar B'\Pi_{\1}} \leq \fsnorm{\bar B - \bar B'} + O(\sqrt{C\beta}\cdot nk)$.
Note that the left-hand side equals $\sum_{i=1}^k\fsnorm{\bar B^i-(\bar B')^i\P}$.
As for the right-hand side, by~\Cref{eq:algb} we have $\fsnorm{\bar B - \bar B'} \leq \fsnorm{\bar B - \bar B_k}+O(\epsilon)\cdot\fsnorm{B}$, and we recall that $\fsnorm{B}=O(Cnkr)=O(C\beta nk/\epsilon)$.
Furthermore, $\fsnorm{\bar B - \bar B_k} \leq \sum_{i=1}^k\fsnorm{\bar B^i - (\bar B^i)_1}$. 
Putting it all together yields $\sum_{i=1}^k\fsnorm{\bar B^i-(\bar B')^i\P} \leq \sum_{i=1}^k\fsnorm{\bar B^i - (\bar B^i)_1} + O(\sqrt{C\beta}\cdot nk)$, or rearranging,
\[ \sum_{i=1}^k\left(\fsnorm{\bar B^i-(\bar B')^i\P} - \fsnorm{\bar B^i - (\bar B^i)_1}\right) \leq O(\sqrt{C\beta}\cdot nk) . \]
Each term in the sum on the left-hand side is non-negative, by the optimality of $(\bar B^i)_1$ for rank-$1$ approximation of $\bar B^i$. Therefore we can use an averaging argument (Markov's inequality) and conclude that at least $0.99k$ of the $k$ summands on the left-hand side are at most $100/k$ times the right-hand side. The lemma follows.
\end{proof}

Fix $i\in I$. Since $(\bar B')^i\P$ is a rank-$1$ matrix we can write it as $\bar a^i(b^i)^T$ where $\bar a^i\in\R^N$ and $b^i\in\R^r$.
Let $a^i\in\R^n$ denote the restriction of $\bar a^i$ to the first $n$ entries.
Consider $\fsnorm{\bar B^i-\bar a^i(b^i)^T}$.
Note that the last $Cn$ rows of are either all $1$ or all $-1$, depending on the Hadamard vector $v^i$.
Let $\sigma^i_j\in\{\pm1\}$ denote the sign of row $j$.
Then the contribution of the last $Cn$ rows is $\sum_{j=n+1}^{Cn}\norm{\sigma^i_j\1-a^i_jb^i}_2^2$ which can be rewritten as $\sum_{j=n+1}^{Cn}\norm{\1-\sigma^i_ja^i_jb^i}_2^2$.
Pick the $j$ that minimizes the term $\norm{\1-\sigma^i_ja^i_jb^i}_2^2$ and set all entries $a^i_{n+1},\ldots,a^i_{Cn}$ to $a^i_j$ with the appropriate sign, to obtain a vector $\hat a^i$. By choice of $j$ we have
\begin{equation}\label{eqmnew1}
  \fsnorm{\bar B^i-\hat a^i(b^i)^T} \leq \fsnorm{\bar B^i-\bar a^i(b^i)^T} .
\end{equation}
Furthermore,
\begin{equation}\label{eqmnew2}
  \fsnorm{\bar B^i-\hat a^i(b^i)^T} = \fsnorm{B^i-a^i(b^i)^T} + Cn\norm{b^i-\1}_2^2 .
\end{equation}
Combining~\Cref{eqmnew1,eqmnew2},
\begin{equation}\label{eqmnew}
  \fsnorm{B^i-a^i(b^i)^T} + Cn\norm{b^i-\1}_2^2 \leq \fsnorm{\bar B^i-\bar a^i(b^i)^T} .
\end{equation}
If we use~Lemma~\ref{lmm:ubfine} as an upper bound on $\fsnorm{\bar B^i-\bar a^i(b^i)^T}$ and~Lemma~\ref{lmm:rk1} as a lower bound on $\fsnorm{B^i-a^i(b^i)^T}$, we get $Cn\norm{\1-b^i}_2^2 \leq O(n)$, which rearranges to
\begin{equation}\label{eq:1bk}
  \norm{\1-b^i}_2^2 \leq \frac{O(1)}{C} .
\end{equation}
This implies Lemmas~\ref{lmm:bproj} and~\ref{lmm:a} for every relevant block, by the same proofs as their original proofs.

\subsection{Solving Majority}
Recall we have a total of $nk$ majority instances (each of length $r$) embedded in $\bar B$.
Note by the construction of $\bar B$, each of them has alphabet either $\{0,1\}$ or $\{0,-1\}$, depending on the sign of the corresponding entry of the Hadamard vector $v^i$, where $i$ the block in which the instance is embedded.

By~Lemma~\ref{lmm:antic} and Markov's inequality, at least $0.9nk$ of the instances are typical.
For an instance with alphabet $\{0,1\}$, we solve it using $\bar B'$ by reporting that the majority element is $1$ if the average over the corresponding entries in $\bar B'$ is larger than $0.5$, and reporting $0$ if it is smaller than $0.5$.
Instances with alphabet $\{0,-1\}$ are solved similarly with threshold $-0.5$.
Note that the solution procedure compares the threshold to the mutual value of the corresponding entries of $\bar B'\Pi_\1$.
If we are correct on an instance, we say it is~\emph{solved}, and otherwise~\emph{unsolved}.
For relevant block $i\in I$, let $S_i'$ denote the subset of majority instances which are both typical and unsolved.
Let $S'=\cup_{i\in I}S_i'$ be the subset of all instances which are typical, unsolved, and embedded in a relevant block.

Our goal is to show that we solve each instance with probability at least $2/3$.
Since the instances were placed in $\bar B$ by random permutation, every instance has the same probability $p$ to be solved, thus we need to show $p\geq2/3$. Suppose by contradiction that $p<2/3$.
Since at least $0.9nk$ instances are typical, and at least $0.99k$ blocks are relevant, then there is a fixed constant $\zeta>0$ (this was $\tfrac1{15}$ in the $k=1$ case) such that $|S'|\geq\zeta nk$.

For every $i\in I$ we have (by definition of relevant blocks),
\[
  \fsnorm{\bar B^i - (\bar B'\Pi_\1)^i} 
  \leq \fsnorm{\bar B^i - (\bar B^i)_1} + O(\sqrt{C\beta}\cdot n)
  \leq \sum_{j=1}^n\norm{s_j-\mu_j\1}_2^2 + O(\sqrt{C\beta}\cdot n).
\]

By bounding $\sum_{j=1}^n\norm{s_j-\mu_j\1}_2^2$ in the same way as in the $k=1$ case,
\begin{equation}\label{eq:ubk}
  \fsnorm{\bar B^i - (\bar B'\Pi_\1)^i} \leq
  |S_i'|(\tfrac14r-\gamma^2) + \sum_{s_j\notin S_i'}\norm{s_j-\mu_j\1}_2^2 + O(\sqrt{C\beta}\cdot n).
\end{equation}
Similarly, if we denote $(\bar B')^i\P=\bar a^i(b^i)^T$ (since this is a rank-$1$ matrix), then as in the $k=1$ case,
\begin{equation}\label{eq:lbk}
  \fsnorm{B^i-(B'\Pi_\1)^i} =
  \sum_{i=1}^n\norm{s_i-\bar a_j^i(b^i)^T}_2^2 \geq
  |S_i'|(\tfrac14r + \eta^2 - 2\gamma\eta) + \sum_{s_j\notin S_i'}\norm{s_j-\mu_j\1}_2^2 .
\end{equation}
(We remark that the latter inequality relies on Lemmas~\ref{lmm:bproj} and~\ref{lmm:a}, which were proven in the previous section for relevant blocks, based on~\cref{eq:1bk}; as per Lemma~\ref{lmm:bproj}, $\eta=\Theta(1/\sqrt C)$.)

Together,
\[
   |S_i'|(\tfrac14r + \eta^2 - 2\gamma\eta) \leq |S_i'|(\tfrac14r-\gamma^2) + O(\sqrt{C\beta}\cdot n) .
\]
We sum this over all $i\in I$, and recall that $\sum_{i\in I}|S_i'|=|S'|$. This yields,
\[
   |S'|(\tfrac14r + \eta^2 - 2\gamma\eta) \leq |S'|(\tfrac14r-\gamma^2) + O(\sqrt{C\beta}\cdot kn) ,
\]
which rearranges to $|S'|(\gamma-\eta)^2 \leq O(\sqrt{C\beta}\cdot nk)$. Recalling that $|S'|\geq\zeta nk$, we get $\zeta(\gamma-\eta)^2 \leq O(\sqrt{C\beta}\cdot)$.
Since $\zeta$ and $\gamma$ are fixed constant, we can take $\eta$ and $\beta$ to be sufficiently small, and arrive at the desired contradiction.
\qed

\subsection{Deferred Proofs from Appendix~\ref{sec:kto1}}\label{sec:kto1proofs}
\begin{lemma}
Let $A,B\in\R^{n\times m}$.
Let $B=U\Sigma V^T$ be the SVD of $B$.
Suppose $\fsnorm{A-B}\leq\fsnorm{A}-\Delta$.
Then $\fsnorm{AV}\geq\Delta$.
\end{lemma}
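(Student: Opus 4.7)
The plan is to exploit the fact that, since $B=U\Sigma V^T$, the rows of $B$ lie in the column span of $V$. Writing $P=VV^T$ for the orthogonal projection onto that subspace, this gives $B=BP$, and therefore $B$ is invisible to the complementary projection $P^\perp=I-P$.

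Concretely, I would split both $A-B$ and $A$ along $P$ and $P^\perp$ using the Pythagorean identity. Since $B=BP$ and $BP^\perp=0$, we have
\[
\fsnorm{A-B}=\fsnorm{(A-B)P}+\fsnorm{(A-B)P^\perp}=\fsnorm{AP-B}+\fsnorm{AP^\perp},
\]
while $\fsnorm{A}=\fsnorm{AP}+\fsnorm{AP^\perp}$. Plugging these into the hypothesis $\fsnorm{A-B}\leq\fsnorm{A}-\Delta$ and cancelling the common term $\fsnorm{AP^\perp}$ yields
\[
\fsnorm{AP-B}\leq\fsnorm{AP}-\Delta.
\]
In particular, since the left-hand side is nonnegative, we obtain $\fsnorm{AP}\geq\Delta$.

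To finish, I would translate $\fsnorm{AP}$ into $\fsnorm{AV}$: since $V$ has orthonormal columns, $\fsnorm{AVV^T}=\operatorname{tr}(VV^TA^TAVV^T)=\operatorname{tr}(V^TA^TAV)=\fsnorm{AV}$, which gives $\fsnorm{AV}\geq\Delta$ as required.

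There is no real obstacle here; the only subtlety is just the standing convention that $V$ in the SVD denotes the right singular vectors of $B$ (so that $P=VV^T$ projects exactly onto the row space of $B$ and hence fixes $B$). Once that is made explicit, the argument is a two-line application of Pythagoras.
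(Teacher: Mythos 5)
Your proof is correct and is essentially the same as the paper's: both arguments decompose $\fsnorm{A}$ and $\fsnorm{A-B}$ along the projection $VV^T$ and its complement, use $B=BVV^T$ to identify $(A-B)(I-VV^T)$ with $A(I-VV^T)$, cancel that common term, and drop the nonnegative remainder $\fsnorm{AVV^T-BVV^T}$. The only difference is cosmetic ordering of the steps.
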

\begin{proof}
\begin{align*}
  \fsnorm{AV} &= \fsnorm{AVV^T} & \\
  &= \fsnorm{A} - \fsnorm{A(I-VV^T)} & \text{Pythagorean theorem} \\
  &\geq \Delta + \fsnorm{A-B} - \fsnorm{A(I-VV^T)}& \\
  &= \Delta + \fsnorm{AVV^T - BVV^T} + \fsnorm{A(I-VV^T)} - \fsnorm{A(I-VV^T)}& \text{Pythagorean theorem} \\
  &\geq \Delta.
\end{align*}
\end{proof}

\begin{lemma}
Let $A\in\R^{n\times n}$ and let $A=U\Sigma V^T$ be its SVD.
Let $V_k$ be the restriction of $V$ to the top-$k$ right singular vectors of $A$.
Let $\Pi$ an orthogonal projection on some $k$-dimensional subspace.
If
\[ (1-\epsilon)k \leq \fsnorm{V_k^T\Pi} \leq k, \]
then
\[ \fsnorm{(A\Pi^\perp)_k} \leq \fsnorm{A_{\epsilon k}} + \fsnorm{(A_{n-k})_k} . \]
(Recall again that $\fsnorm{X_k}$ denotes the sum of squared top-$k$ singular values for every matrix $X$.)
\end{lemma}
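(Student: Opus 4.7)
The plan is to split $A$ via its SVD as $A=A_k+A_{-k}$ with $A_{-k}:=A-A_k$, and exploit the fact that $A_k$ and $A_{-k}$ have orthogonal column spaces ($\mathrm{span}(U_k)$ and $\mathrm{span}(U_{-k})$). First I translate the hypothesis to a bound on $\Pi^\perp$: since $V_k$ has orthonormal columns, $\fsnorm{V_k^T}=k$, and the Pythagorean identity applied to $V_k^T = V_k^T\Pi + V_k^T\Pi^\perp$ gives $\fsnorm{V_k^T\Pi}+\fsnorm{V_k^T\Pi^\perp}=k$, hence
\[ \sum_{i=1}^k \norm{v_i^T\Pi^\perp}_2^2 \;=\; \fsnorm{V_k^T\Pi^\perp} \;\le\; \epsilon k, \]
with each summand lying in $[0,1]$.

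Next, let $P$ be the orthogonal projection onto the top-$k$ right singular subspace of $A\Pi^\perp$, so that $\fsnorm{(A\Pi^\perp)_k}=\fsnorm{A\Pi^\perp P}$ and $\mathrm{rank}(P)\le k$. I decompose
\[ A\Pi^\perp P \;=\; A_k\Pi^\perp P + A_{-k}\Pi^\perp P. \]
The columns of the first summand lie in $\mathrm{span}(U_k)$ and those of the second in $\mathrm{span}(U_{-k})$, so the two matrices are Frobenius-orthogonal (formally, their trace inner product contains the factor $U_k^TU_{-k}=0$). Therefore
\[ \fsnorm{(A\Pi^\perp)_k} \;=\; \fsnorm{A_k\Pi^\perp P} + \fsnorm{A_{-k}\Pi^\perp P}. \]

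I then bound the two terms separately. For the first, $\fsnorm{A_k\Pi^\perp P}\le\fsnorm{A_k\Pi^\perp} = \sum_{i=1}^k\sigma_i^2\norm{v_i^T\Pi^\perp}_2^2$ by expanding the SVD of $A_k$ and using that $U_k$ is an isometry. Writing $t_i:=\norm{v_i^T\Pi^\perp}_2^2\in[0,1]$ with $\sum t_i\le\epsilon k$, and recalling $\sigma_1\ge\cdots\ge\sigma_k$, a standard rearrangement argument shows the weighted sum is maximized by placing weight $1$ on the top-$\epsilon k$ indices, giving the bound $\fsnorm{A_{\epsilon k}}$. For the second term, $A_{-k}\Pi^\perp P$ has rank at most $k$, so $\fsnorm{A_{-k}\Pi^\perp P}\le\fsnorm{(A_{-k}\Pi^\perp)_k}$. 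Since $\Pi^\perp$ has operator norm $1$, $\sigma_i(A_{-k}\Pi^\perp)\le\sigma_i(A_{-k})$ for every $i$, yielding $\fsnorm{(A_{-k}\Pi^\perp)_k}\le\fsnorm{(A_{-k})_k}=\fsnorm{(A_{n-k})_k}$. Summing the two bounds gives the lemma.

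The main step to get right is the Frobenius-orthogonal split: it is what lets me cleanly isolate the ``interaction with the top-$k$ subspace'' part (bounded via the hypothesis on how $\Pi$ aligns with $V_k$, giving $\fsnorm{A_{\epsilon k}}$) from the ``tail'' part (bounded purely via the rank constraint of $P$ and the fact that right-multiplying by $\Pi^\perp$ cannot increase singular values, giving $\fsnorm{(A_{n-k})_k}$). A direct Ky-Fan-type manipulation applied to $A\Pi^\perp$ without this split would mix the two contributions and lose the sharp separation. The rearrangement step in the first bound is the only other point needing care, and it is routine once the constraints $t_i\in[0,1]$ and $\sum t_i\le\epsilon k$ are in place.
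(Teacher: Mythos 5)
Your proof is correct and takes essentially the same route as the paper: both split $A\Pi^\perp$ into $A_k\Pi^\perp + A_{n-k}\Pi^\perp$ using the orthogonality of the column spaces of $A_k$ and $A_{n-k}$, bound the tail term by $\fsnorm{(A_{n-k})_k}$ since $\Pi^\perp$ cannot increase singular values, and bound $\fsnorm{A_k\Pi^\perp}$ by $\fsnorm{A_{\epsilon k}}$ via a weighted rearrangement argument on $\sum_i\sigma_i^2\norm{v_i^T\Pi^\perp}_2^2$ with weights in $[0,1]$ summing to at most $\epsilon k$. Your explicit introduction of the projection $P$ onto the top-$k$ right singular subspace of $A\Pi^\perp$ makes the splitting inequality slightly cleaner to justify than the paper's one-line ``it follows that,'' but the substance is identical.
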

\begin{proof}
We have $\fsnorm{(A \Pi^\perp)_k} = \fsnorm{(A_{n-k} \Pi^\perp + A_k \Pi^\perp)_k}$.
We can write $A \Pi^\perp$ as $A_{n-k} \Pi^\perp + A_k \Pi^\perp$, and for any vector $x$,
$A_{n-k} \Pi^\perp x$  and $A_k \Pi^\perp x$ are orthogonal, and so
$\norm{A \Pi^\perp x}_2^2 = \norm{A_{n-k} \Pi^\perp x}_2^2 + \norm{A_k \Pi^\perp x}_2^2$.
It follows that
\[ \fsnorm{(A \Pi^\perp)_k} \leq \fsnorm{A_k \Pi^\perp} + \fsnorm{(A_{n-k} \Pi^\perp)_k} . \] 
Note that $\fsnorm{(A_{n-k} \Pi^{\perp})_k} \leq \fsnorm{(A_{n-k})_k}$.
Thus it remains to show $\fsnorm{A_k \Pi^\perp}\leq \fsnorm{A_{\epsilon k}}$, 
or equivalently, by the Pythagorean theorem, 
$\fsnorm{A_k \Pi} \geq \fsnorm{A_k}-\fsnorm{A_{\epsilon k}} = \sum_{i=\epsilon k+1}^k \sigma_i^2(A)$.

We have $\fsnorm{A_k \Pi} = \fsnorm{\Sigma_k V_k^T \Pi}$ and we know $\fsnorm{V_k^T \Pi} \geq
k(1-\epsilon)$.
Let $R$ be an $n\times n$ rotation matrix that takes $V_k^T$ to $[I_k \;
0]$, where here $I_k$ is the identity matrix of order $k$ and $0$ is an $k\times (n-k)$ zero matrix.
Replace $\Pi$ with $R^T\Pi$.
Then $\fsnorm{A_k \Pi} = \fsnorm{\Sigma_k (V_k^T R)(R^T \Pi)}$ and $\fsnorm{(V_k^T R)(R^T \Pi)} \geq k (1-\epsilon)$.
Thus, we can assume w.l.o.g.~that $V_k^T = [I_k \; 0]$, and so $R^T \Pi$ has the form $[\Phi\; 0]$,
where $\Phi$ is $k \times k$.

Thus $\fsnorm{A_k \Pi} = \fsnorm{\Sigma_k \Phi}$ subject to $\fsnorm{\Phi}\geq k(1-\epsilon)$.
Also each row of $\Phi$ has squared norm at most $1$ since it is a submatrix of a rotation matrix. Consequently, since $\Sigma_k$ is a diagonal matrix, $\fsnorm{\Sigma_k \Phi}$ is minimized when placing all mass of $\Phi$ on the bottom $k(1-\epsilon)$ rows, and in this case it is exactly $\sum_{i=\epsilon k+1}^k \sigma_i^2(A)$.
\end{proof}

We have applied this lemma in Appendix~\ref{sec:kto1} to both $\bar B_k$ and $\bar B'$.
Let us show the resulting upper bound $\fsnorm{A_{\epsilon k}} + \fsnorm{(A_{n-k})_k}$ in each case.

For $\bar B_k$, by Lemma~\ref{lmm:spectralb} we know that the top $k$ squared singular values of $\bar B$ are $\Theta(Cnr)=O(Cn\beta/\epsilon)$ each, thus $\fsnorm{\bar B_{\epsilon k}}= O(\epsilon k \cdot Cn\beta/\epsilon)= O(C\beta nk)$, and the rest of the squared singular values are $\Theta(n)$, thus $\fsnorm{(\bar B_{n-k})_k}=O(kn)$. The total bound is $O(nk)$.

For $\bar B'$, 
\begin{align*}
  \fsnorm{\bar B_{\epsilon k}'} &= \sum_{i=1}^{\epsilon k}\sigma_i(\bar B')^2 & \\
  &= \sum_{i=1}^{\epsilon k}\left(\sigma_i(\bar B')-\sigma_i(\bar B)+\sigma_i(\bar B)\right)^2 & \\
  &\leq \sum_{i=1}^{\epsilon k}2\left(\sigma_i(\bar B)^2+(\sigma_i(\bar B)-\sigma_i(\bar B'))^2\right) & \text{Similarly to Claim~\ref{clm:triangle}} \\
  &= 2\fsnorm{\bar B_{\epsilon k}} + 2\sum_{i=1}^{\epsilon k}(\sigma_i(\bar B)-\sigma_i(\bar B'))^2 . \\
\end{align*}
The first term was already upper bounded by $O(C\beta nk)$ above, and the second sum is upper bounded by $O(C\beta nk)$ by Lemma~\ref{lmm:spectralbprime}.
The term $\fsnorm{(\bar B_{n-k}')_k}$ is $O(C\beta nk)$ since there are two remaining eigenvalues and each is $O(C\beta nk)$ by Lemma~\ref{lmm:spectralbprime}. The total bound is $O(C\beta nk)$.

%

\begin{lemma}
Let $V,U,W\in\R^{n\times k}$ matrices such that each has orthonormal columns.
Suppose $\fsnorm{V^TU}\geq(1-\epsilon)k$ and $\fsnorm{U^TW}\geq(1-\epsilon)k$.
Then $\fsnorm{V^TW}\geq(1-O(\epsilon))k$.
\end{lemma}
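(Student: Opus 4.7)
The plan is to recast the hypotheses and goal in terms of orthogonal projections and then apply a one-step triangle inequality. Let $P_V = VV^T$, $P_U = UU^T$, $P_W = WW^T$ be the projections onto the column spans. Since $V,U,W$ have orthonormal columns, we have $\|V^T U\|_F^2 = \|P_V U\|_F^2$, $\|U\|_F^2 = k$, and the Pythagorean identity gives
\[
  \|P_V^\perp U\|_F^2 = k - \|V^T U\|_F^2 \le \epsilon k.
\]
Similarly, $\|P_U^\perp W\|_F^2 \le \epsilon k$, and $\|V^T W\|_F^2 = k - \|P_V^\perp W\|_F^2$. Thus the lemma reduces to showing $\|P_V^\perp W\|_F^2 \le O(\epsilon) k$.

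The key step is a transitivity bound via the decomposition $W = P_U W + P_U^\perp W$, which gives
\[
  \|P_V^\perp W\|_F \le \|P_V^\perp P_U W\|_F + \|P_V^\perp P_U^\perp W\|_F.
\]
For the second summand I would simply drop $P_V^\perp$ (it has operator norm $1$) to obtain $\|P_V^\perp P_U^\perp W\|_F \le \|P_U^\perp W\|_F \le \sqrt{\epsilon k}$. For the first summand, write $P_U W = U(U^T W)$ and use the mixed-norm inequality $\|XY\|_F \le \|X\|_F \|Y\|_2$ with $X = P_V^\perp U$ and $Y = U^T W$; the factor $\|U^T W\|_2$ is at most $1$ because $U,W$ each have orthonormal columns, so $\|P_V^\perp P_U W\|_F \le \|P_V^\perp U\|_F \le \sqrt{\epsilon k}$.

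Adding the two bounds yields $\|P_V^\perp W\|_F \le 2\sqrt{\epsilon k}$, hence $\|P_V^\perp W\|_F^2 \le 4\epsilon k$, and therefore $\|V^T W\|_F^2 \ge (1-4\epsilon)k$, which is the claimed $(1-O(\epsilon))k$ lower bound. The only subtle point — which is really the ``main obstacle'' if one is careless — is that the naive bound $\|P_V^\perp P_U W\|_F \le \|P_V^\perp U\|_F \cdot \|W\|_F$ would introduce an extra factor of $\sqrt{k}$ and yield only $(1-O(\sqrt{\epsilon}))k$; this is avoided by pulling out the operator norm $\|U^T W\|_2 \le 1$ instead of the Frobenius norm of $W$. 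Everything else is a clean application of the Pythagorean identity and the triangle inequality.
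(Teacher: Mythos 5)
Your proof is correct, and it takes a genuinely different and noticeably cleaner route than the paper's. The paper rotates coordinates so that $U$ becomes the top $k$ standard basis vectors, splits $V$ and $W$ into ``top'' and ``rest'' blocks, expands $\fsnorm{V_{top}^TW_{top}+V_{rest}^TW_{rest}}$ and controls the cross term via cyclicity of trace, Cauchy--Schwarz and submultiplicativity, and then still has to lower-bound $\fsnorm{V_{top}^TW_{top}}$ by a separate SVD argument (showing $V_{top}$ has at least $k(1-2\epsilon)$ singular values of squared value at least $1-2\epsilon$, followed by another Pythagorean step). You instead phrase everything in terms of the projections $P_V,P_U,P_W$, reduce the claim to $\norm{P_V^\perp W}_F^2\leq O(\epsilon)k$, and dispose of it with one triangle inequality applied to $W=P_UW+P_U^\perp W$; the only place where care is needed is exactly the one you flag, namely using $\norm{XY}_F\leq\norm{X}_F\norm{Y}_2$ with $\norm{U^TW}_2\leq1$ rather than the Frobenius norm of $W$, which would lose a $\sqrt{k}$ factor. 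Your argument is shorter, avoids any SVD machinery, and yields the explicit constant $\fsnorm{V^TW}\geq(1-4\epsilon)k$, whereas the paper's bookkeeping gives a less transparent $1-O(\epsilon)$. Both proofs are valid; yours would be a legitimate (and arguably preferable) substitute.
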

\begin{proof}
Note we can replace $V^T$ with $V^TR$ and $U$ with $R^TU$ and $W$ with $R^TW$, where $R$ is an $n\times n$ rotation which takes $U$ to the top $k$ standard unit vectors.
All norms in the premise and goal of the claim are preserved.
So we can assume that $\fsnorm{V_{top}} \geq k(1-\epsilon)$, $\fsnorm{W_{top}} \geq  k(1-\epsilon)$, and need to show $\fsnorm{V^TW} \geq k - O(k\epsilon)$, where ``top`` means the top $k\times k$ submatrix with remaining rows replaced with $0$s. Let $W = W_{top} + W_{rest}$ and $V = V_{top} + V_{rest}$. 

Then,
\begin{align*}
 \fsnorm{V^TW} &= \fsnorm{V_{top}^T W_{top} + V_{rest}^T W_{rest}} & \\
&\geq \fsnorm{V_{top}^T W_{top}} - 2Tr(W_{rest}^T V_{rest} V_{top}^T W_{top}) & \text{(i)} \\
&= \fsnorm{V_{top}^T W_{top}} - 2Tr (W_{top} W_{rest}^T V_{rest} V_{top}^T) & \text{(ii)} \\
&\geq \fsnorm{V_{top}^T W_{top}} - 2 \norm{W_{top} W_{rest}}_F \norm{V_{rest} V_{top}^T}_F & \text{(iii)} \\
&\geq \fsnorm{V_{top}^T W_{top}} - 2 \norm{W_{top}}_2\norm{W_{rest}}_F\norm{V_{top}}_2\norm{V_{rest}}_F & \text{(iv)} \\
&\geq \fsnorm{V_{top}^T W_{top}} - 2 \norm{W_{rest}}_F \norm{V_{rest}}_F & \text{(v)} \\
&= \fsnorm{V_{top}^T W_{top}} - 2 (1-\fsnorm{W_{top}})^{1/2} (1-\fsnorm{V_{top}})^{1/2} & \\
&= \fsnorm{V_{top}^T W_{top}} - 2 (\epsilon k)^{1/2} (\epsilon k)^{1/2}  & \\
&= \fsnorm{V_{top}^T W_{top}} - 2\epsilon k ,  &   (*)
\end{align*}
where,
\begin{itemize}
  \item (i) is by expanding the square and dropping a non-negative term;
  \item (ii) is by cyclicity of trace;
  \item (iii) is since $Tr(AB) <= \norm{A}_F \norm{B}_F$;
  \item (iv) is by submultiplicativity of operator and Frobenius norm;
  \item (v) is since $W$ and $V$ are orthonormal so their operator norm is $1$, and operator norm does not decrease by taking submatrices.
\end{itemize}

So we just need to lower bound $\fsnorm{V_{top}^T W_{top}}$, and we can drop the last $n-k$ rows of $V_{top}$ and $W_{top}$ since they are zeros.
Next, we write $V_{top}$ in its SVD, $V_{top}= A \Sigma B^T$.
Then since $\fsnorm{V_{top}} \geq k(1-\epsilon)$ and $\norm{V_{top}}_2^2 \leq 1$ (since it is a submatrix of $V$), necessarily, there are at least $k(1-2\epsilon)$ singular values of squared value at least $1-2\epsilon$.
Indeed, otherwise $\fsnorm{V_{top}} \leq k(1-2\epsilon)(1-2\epsilon) + 2\epsilon k \cdot 1 < k(1-\epsilon)$ for $\epsilon$ less than a small enough constant.
Let $\Sigma_h$ be these singular values and $\Sigma_l$ be the remaining ones.
Then
\[ \fsnorm{V_{top}^T W_{top}} = \fsnorm{\Sigma B^T W_{top}} \geq (1-2\epsilon) \fsnorm{B_h^T W_{top}} . \]
Now $\fsnorm{W_{top}} \geq k(1-\epsilon)$, and $B_h$ is a $k(1-2\epsilon)$-dimensional subspace of $\text{span}(e_1, ..., e_k)$ (the standard unit vectors), and so we can extend it with an orthonormal basis $B'$ so that $\text{span}(B_h, B') = \text{span}(e_1, ..., e_k)$.
Then by the Pythagorean theorem
\[ k(1-\epsilon) \leq \fsnorm{W_{top}} = \fsnorm{B_h^T W_{top}} + \fsnorm{B' W_{top}} ,\]
and since $\norm{W_{top}}_2^2 \leq 1$, we have $\fsnorm{B' W_{top}} \leq \fsnorm{B'} \leq 2\epsilon k$.
Consequently, $\fsnorm{B_h^T W_{top}} \geq k(1-\epsilon) - 2 \epsilon k = k- 3\epsilon k$.
Hence, $\fsnorm{V_{top}^T W_{top}} \geq (1-2\epsilon) k (1-3\epsilon) \geq k (1-O(\epsilon))$.
Plugging into (*) gives us our desired $k (1-O(\epsilon))$ lower bound on $\fsnorm{V^TW}$. 
\end{proof}

\end{document}